\def\mstar{\mnote{****}}
\newcommand{\half}{\frac{1}{2}}
\newcommand{\card}[1]{\left|#1\right|}
\newenvironment{proof}{{\bf Proof:}}{\hfill\mbox{$\Box$}}
\newenvironment{proofof}[1]{{\bf Proof of #1:  }}{\hfill\mbox{$\Box$}}
\def\bx{{\bf x}}
\def\by{{\bf y}}
\def\br{{\bf r}}
\def\bs{{\bf s}}
\def\bt{{\bf t}}
\def\cA{{\cal A}}
\def\cP{{\cal P}}
\def\cE{{\cal E}}
\def\bx{{\bf x}}
\def\a{\alpha}
\def\b{\beta}
\def\d{\delta}
\def\D{\Delta}
\def\e{\varepsilon}
\def\f{\phi}
\def\g{\gamma}
\def\G{\Gamma}
\def\z{\zeta}
\def\th{\theta}
\def\m{\mu}
\def\p{\pi}
\def\r{\rho}
\def\s{\sigma}
\def\S{\Sigma}
\def\t{\tau}
\def\whp{{\bf whp}}
\newcommand{\mnote}[1]{\marginpar{\footnotesize\raggedright#1}}
\newcommand{\set}[1]{\left\{#1\right\}}
\newcommand{\proofend}{\hspace*{\fill}\mbox{$\Box$}}
\def\cX{{\cal X}}
\def\E{{\sf E}}
\def\Pr{{\sf P}}
\def\cC{{\cal C}}
\def\cB{{\cal B}}
\def\cG{{\cal G}}
\newcommand{\ignore}[1]{}
\def\Gv{\cX_{\vv}}
\def\Poi{\mathop{\mathrm{Poi}}}
\def\vv{\vec{v}}
\def\yy{\vec{y}}
\def\cQ{{\cal Q}}
\newtheorem{theorem}{Theorem}
\newtheorem{lemma}[theorem]{Lemma}
\newtheorem{corollary}[theorem]{Corollary}
\newtheorem{remark}[theorem]{Remark}
\newtheorem{claim}[theorem]{Claim}
\newcommand{\brac}[1]{\left(#1\right)}
\newcommand{\bfrac}[2]{\brac{\frac{#1}{#2}}}
\newcommand{\beq}[1]{\begin{equation}\label{#1}}
\newcommand{\eeq}{\end{equation}}
\newcommand{\blem}[1]{\begin{lemma}\label{#1}}
\newcommand{\elem}{\end{lemma}}
\newcommand{\bth}[1]{\begin{theorem}\label{#1}}
\newcommand{\enth}{\end{theorem}}
\newcommand{\brem}[1]{\begin{remark}\label{#1}}
\newcommand{\erem}{\end{remark}}
\def\c{{3/4}}
\def\go{{3/5}}
\author{Alan Frieze\thanks{Supported in part by NSF Grant DMS0753472.}, P\'all Melsted\\
Department of Mathematical Sciences\\Carnegie Mellon University\\Pittsburgh PA 15213\\
U.S.A.}
\title{Maximum Matchings in Random Bipartite Graphs and the Space Utilization of Cuckoo Hashtables}
\date{}
\begin{document}
\maketitle

\begin{abstract}
We study the the following question in Random Graphs. 
We are given two disjoint sets $L,R$ with $|L|=n=\a m$ and $|R|=m$. We construct a random graph $G$ by allowing each
$x\in L$ to choose $d$ random neighbours in $R$. The question discussed is as to the size $\m(G)$ of the largest matching in $G$.
When considered in the context of Cuckoo Hashing, one key question is as to when is $\m(G)=n$ \whp? 
We answer this question exactly when $d$
is at least three. We also establish a precise threshold for when Phase 1 of the Karp-Sipser Greedy matching algorithm
suffices to compute a maximum matching \whp.
\end{abstract}

\section{Introduction}
For a graph $G$ we let $\m(G)$ denote the size of the maximum matching in $G$.
In essence this paper provides an analysis of $\m(G)$ 
in the following model of a random bipartite graph. We have two disjoint
sets $L,R$ where $L=[n],R=[m]$ where $n=\a m$. Each $v\in L$ independently chooses $d$ random vertices of $R$ as neighbours. 
Our assumptions are that $\a>0,\,d\geq 3$ are fixed and $n\to \infty$. One motivation for this study comes from
Cuckoo Hashing.

Briefly each one of $n$ items $x\in L$ has
$d$ possible locations $h_1(x),h_2(x),\ldots,h_d(x)\in R$, where $d$ is
typically a small constant and the $h_i$ are hash functions, typically
assumed to behave as independent fully random hash functions.  (See
\cite{MVad} for some justification of this assumption.)  
We are thus led to consider the bipartite graph $G$ which has vertex set $L\cup R$ and edge set $\set{(x,h_j(x)):\;x\in L,j=1,,2,\ldots,d}$. 
Under the assumption that the hash functions are completely random we see that $G$ has the 
same distribution as the random graph defined in the previous paragraph. 

We assume
each location can hold only one item.  When an item $x$ is inserted
into the table, it can be placed immediately if one of its $d$ locations is
currently empty.  If not, one of the items in its $d$ locations must
be displaced and moved to another of its $d$ choices to make room
for $x$.  This item in turn may need to displace another item out of one
its $d$ locations.  Inserting an item may require a sequence of moves,
each maintaining the invariant that each item remains in one of its
$d$ potential locations, until no further evictions are needed. 
Thus having inserted $k$ items, we have constructed a matching $M$ of size $k$ in $G$. Adding a $(k+1)$'th
item is tantamount to constructing an augmenting path with repsect to $M$.
All $n$ items will be insertable
in this way iff $G$ contains a matching of size $n$.

The case of $d=2$ choices is notably different from that for other values
of $d$ and the theory for the case where there are $d = 2$ 
bucket choices for each item is well understood at this point
\cite{DM,kutz,cuckoo}. We will therefore assume that $d\geq 3$.

We will now revert to the abstract question posed in first paragraph of the paper.
\section{Definitions and Results}
This question was studied to some extent by Fotakis, Pagh, Sanders and Spirakis \cite{fotakis}.
They show in the course of their analysis of Cuckoo hashing that the following holds:
\begin{lemma}\label{fot}
Suppose that $0<\e<1$ and $d\geq 2(1+\e)\log(e/\e)$. Suppose also that $m=(1+\e)n$. Then
\whp\ $G$ contains a matching of size $n$ i.e. a matching of $L$ into $R$.
\end{lemma}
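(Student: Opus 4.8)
The plan is to use Hall's theorem and bound, by a first--moment argument over potential obstructions to expansion, the probability that no matching of $L$ into $R$ exists. By Hall's theorem such a matching exists iff $|N_G(S)|\ge|S|$ for every $S\subseteq L$; since every vertex of $L$ has a neighbour, no singleton can fail, so it suffices to prove that \whp\ there is no pair $(S,T)$ with $S\subseteq L$, $T\subseteq R$, $|S|=s\ge 2$, $|T|=s-1$, and $N_G(S)\subseteq T$. For a fixed such pair the $ds$ choices made by the vertices of $S$ all land in $T$ with probability at most $((s-1)/m)^{ds}\le(s/m)^{ds}$ (true whether the $d$ choices of a vertex are made with or without replacement), so a union bound gives
$$\Pr[\text{no matching of }L]\ \le\ \sum_{s=2}^{n}\binom{n}{s}\binom{m}{s-1}\brac{\frac{s}{m}}^{ds}.$$
I would split this sum at $s=\eta n$ for a suitable small constant $\eta=\eta(\e,d)>0$.

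For $2\le s\le\eta n$ I would use the crude bounds $\binom ns\le(en/s)^s$, $\binom m{s-1}\le(2em/s)^{s-1}$; together with $m=(1+\e)n$ these make the $s$-th summand at most (a constant times) $\frac sn\brac{2e^2(1+\e)^{1-d}(s/n)^{d-2}}^s$. Since $d\ge 3$ the exponent $d-2$ is positive, so picking $\eta$ with $2e^2(1+\e)^{1-d}\eta^{d-2}\le\half$ makes the summand at most $\frac sn 2^{-s}$, and this range contributes $\le\frac1n\sum_{s\ge2}s2^{-s}=O(1/n)$. Only $d\ge3$ is used here, not the hypothesis relating $d$ to $\e$; the saving $s/n$ factor (which arises because $T$ has size $s-1$, not $s$) is what makes the bound $o(1)$ rather than merely $O(1)$.

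For $\eta n<s\le n$, write $s=\lambda n$ with $\lambda\in(\eta,1]$ and use the sharp estimates $\binom nk\le e^{nH(k/n)}$, $H(x):=-x\ln x-(1-x)\ln(1-x)$, and $\binom m{s-1}\le\frac1\e\binom ms\le\frac1\e e^{mH(s/m)}$. This bounds the summand by $\frac1\e e^{ng(\lambda)}$ with
$$g(\lambda)=H(\lambda)+(1+\e)H\brac{\frac{\lambda}{1+\e}}+d\lambda\ln\frac{\lambda}{1+\e},$$
so it suffices to show $g(\lambda)<0$ on $[\eta,1]$ (compactness then gives a uniform gap $\d>0$, and this range contributes $\le\frac n\e e^{-\d n}=o(1)$). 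As $\ln(\lambda/(1+\e))<0$, this is the same as $d>\psi(\lambda):=\phi(\lambda)/\brac{\lambda\ln\tfrac{1+\e}{\lambda}}$ with $\phi(\lambda)=H(\lambda)+(1+\e)H(\lambda/(1+\e))$. Using the identity $(1+\e)H(\lambda/(1+\e))=\lambda\ln\tfrac{1+\e}{\lambda}+(1+\e-\lambda)\ln\tfrac{1+\e}{1+\e-\lambda}$ and $H(\lambda)\le\lambda\ln(e/\lambda)$, one checks $\psi(\lambda)\to2$ as $\lambda\to0$, that $\psi(1)=(1+\e)+\tfrac{\e\ln(1/\e)}{\ln(1+\e)}\le(1+\e)\ln(e/\e)$ (using $\ln(1+\e)\ge\e/(1+\e)$), and more generally that $\psi(\lambda)\le2(1+\e)\ln(e/\e)$ throughout $(0,1]$; the hypothesis $d\ge 2(1+\e)\ln(e/\e)$ then forces $g<0$.

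The main obstacle is the final inequality $g(\lambda)<0$ on the linear range. A naive bound --- $\phi(\lambda)\le$ constant and $\lambda\ln\tfrac{1+\e}{\lambda}\ge\ln(1+\e)\asymp\e$ --- only yields $d\gtrsim1/\e$; the correct threshold $d\gtrsim\ln(1/\e)$ emerges only after noting that, for $\lambda$ close to $1$ and $\e$ small, the entropy numerator $\phi(\lambda)$ and the ``all choices hit a prescribed set'' denominator are small together. In fact the regime $\e\ll1-\lambda\ll1$ is exactly where $\psi$ is largest, of order $2\ln(e/\e)$, so the factor $2$ in the statement is genuinely needed rather than slack. (Sanity check: the pure first--moment threshold is the value of $d$ at which $\E\,|N_G(L)|$ first reaches $n$, namely $d=(1+\e)\ln\tfrac{1+\e}{\e}\le(1+\e)\ln(e/\e)$, exactly one half of the stated bound.)
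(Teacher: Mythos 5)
First, note that the paper does not actually prove this lemma: it is quoted from Fotakis, Pagh, Sanders and Spirakis \cite{fotakis} and the statement is followed immediately by an end-of-proof box, so there is no in-paper argument to compare yours against. Your self-contained route (Hall's condition plus a first-moment bound over minimal witnesses $(S,T)$ with $|T|=|S|-1$) is the natural one and its skeleton is sound: the union bound $\sum_s\binom ns\binom m{s-1}(s/m)^{ds}$ is valid in the paper's model ($d$ i.i.d.\ uniform choices per left vertex), the range $2\le s\le \eta n$ is handled correctly and contributes $O(1/n)$ using only $d\ge 3$ (which the hypothesis forces, since $2(1+\e)\log(e/\e)>4$ on $0<\e<1$), and the reduction of the range $s=\la n$ to the inequality $d>\psi(\la)=\phi(\la)/\brac{\la\log\tfrac{1+\e}{\la}}$ is correct, as are your evaluations of $\psi(0^+)=2$ and $\psi(1)\le(1+\e)\log(e/\e)$.

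The gap is that the single inequality carrying the entire quantitative content of the lemma, namely $\psi(\la)\le 2(1+\e)\log(e/\e)$ for all $\la\in(0,1]$, is dispatched with ``one checks'' after verifying only the two endpoints. The endpoints do not control the interior: as you yourself observe, the supremum of $\psi$ is attained at an interior point with $\e\ll 1-\la\ll 1$, and a closer look shows it equals $2\log(1/\e)-2\log\log(1/\e)+O(1)$ as $\e\to 0$ (attained near $1-\la\asymp\e\log(1/\e)$), while your target is $2(1+\e)\log(e/\e)=2\log(1/\e)+2+o(1)$. So the inequality is true, but only by an additive margin of order $\log\log(1/\e)$, and neither endpoint value is anywhere near the maximum; moreover the crude bounds you correctly reject ($\phi$ bounded, denominator $\gtrsim\e$) give only $d\gtrsim 1/\e$. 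A complete proof therefore needs an explicit argument at this point --- for instance, writing $u=1-\la$, bounding $H(\la)\le u\log(e/u)$ and $(1+\e)H(\la/(1+\e))\le \la\log\tfrac{1+\e}{\la}+(u+\e)\log\tfrac{1+\e}{u+\e}$, and then comparing each piece to the denominator $\la\log\tfrac{1+\e}{\la}\ge\la(\log(1+\e)+u)$ separately in the ranges $u\le\e$ and $u\ge\e$ --- and this comparison has to be done with some care precisely because the inequality is nearly tight. Until that step is supplied, what you have is an outline of the right shape rather than a proof.
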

\proofend

In particular, if $d=3$ and $m\approx 1.57n$ then Lemma \ref{fot} shows that there is a matching of $L$ into $R$ \whp.

This lemma is not tight and recently  Mitzenmacher et al \cite{DMP} observed a connection with a result of
Dubois and Mandler on Random 3-XORSAT \cite{DuMa} that enables one to essentially answer the question as to when
$\m(G)\geq n$ for the case $d=3$. More recently, Fountoulakis and Panagiotou \cite{FP} have established 
thresholds for when there is a matching of $L$ into $R$ \whp, for all $d\geq 3$.

We begin with a simple observation that is the basis of the Karp-Sipser Algorithm
\cite{KS,AFP}. If $v$ is a vertex of degree one in $G$ and $e$ is its unique incident edge,
then there exists a maximum matching of $G$ that includes $e$. Karp and Sipser exploited this via a simple 
greedy algorithm:
 \begin{algorithm}
\caption{Karp-Sipser Algorithm}\label{KS}
\begin{algorithmic}[1]
\Procedure{KSGreedy}{$G$}
\State $M \gets \varnothing $
\While{$\G \neq \varnothing $}
  \If {$\G$ has vertices of degree one}
    \State Select a vertex $\xi$ uniformly at random from the set of vertices of degree one
    \State Let $e=(\xi,\eta)$ be the edge incident to $\xi$
  \Else
    \State Select an edge $e=(v,u)$ uniformly at random\label{phase2}
  \EndIf
  \State $M\gets M \cup \{e\}$
  \State $\G\gets \G\setminus\{\xi,\eta\}$ 
\EndWhile
\State \textbf{return} M
\EndProcedure
\end{algorithmic}
\end{algorithm}

Phase 1 of the Karp-Sipser Algorithm ends and Phase 2 begins when the graph remaining has minimum degree at least two. 
So if $\G_1$ denotes the graph $\G$ remaining at the end of
Phase 1 and $\t_1$ is the number of iterations involved in Phase 1 then
\beq{simple}
\m(G)=\t_1+\m(\G_1).
\eeq
Our approach to estimating $\m(G)$ is to (i) obtain an asymptotic expression for $\t_1$ that holds \whp\ and then (ii) show that 
\whp\ $\G_1$ has a (near) perfect matching and then apply \eqref{simple}.

We summarise our results as follows: Let $z_1$ satisfy 
\beq{zaa}
z_1=\frac{e^{z_1}-1}{d-1}
\eeq
and let
\beq{azaa}
\a_1=\frac{z_1}{d(1-e^{-z_1})^{d-1}}.
\eeq
\begin{theorem}\label{th1}\ 
If $\a\leq\a_1$ then \whp\ $\m(G)=\t_1=n$.
\end{theorem}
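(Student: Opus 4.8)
The plan is to track Phase 1 of the Karp--Sipser algorithm via a differential equation / branching process analysis on the random graph $G$, and to show that when $\a\le\a_1$ Phase 1 never stalls: the process of repeatedly deleting degree-one vertices removes all of $L$. Since every edge added to $M$ in Phase 1 covers a distinct vertex of $L$, showing that $\t_1=n$ simultaneously shows $\m(G)=n$ by \eqref{simple} (with $\G_1$ empty on the $L$-side, hence $\m(\G_1)=0$, or rather $\G_1$ has no $L$-vertices at all). So the whole theorem reduces to: \emph{whp Phase 1 exhausts $L$}.

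First I would set up the standard configuration-model description of $G$: each $x\in L$ has $d$ half-edges, the $m$ vertices of $R$ receive a $\Poi$-like number of half-edges, and the pairing is uniform. The natural potential to watch during Phase 1 is the degree sequence on the $R$-side together with the number of remaining $L$-vertices. I would run the leaf-removal process and describe its trajectory by a system of ODEs (in the style of Wormald's differential-equation method, as in \cite{AFP} for Karp--Sipser on $G_{n,m}$); equivalently, analyze the associated Poisson branching / peeling process and identify the largest root of a fixed-point equation. The quantity $z_1$ defined by \eqref{zaa}, $z_1=(e^{z_1}-1)/(d-1)$, should emerge as the relevant fixed point governing the residual degree distribution on $R$ (it is exactly the kind of equation one gets from $d$-uniform-side peeling), and $\a_1$ in \eqref{azaa} as the critical ratio $n/m$ at which the $L$-side is still fully consumed: for $\a\le\a_1$ the trajectory hits ``$L$ empty'' strictly before any bad fixed point where degree-one vertices run out while $L$-vertices remain.

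The key steps, in order: (1) reduce the theorem to ``Phase 1 removes all of $L$ whp''; (2) express the leaf-removal dynamics as a Markov chain on the $R$-degree sequence plus $|L|$ remaining, and verify the Lipschitz/boundedness hypotheses of Wormald's theorem so that the chain concentrates around the ODE solution; (3) solve (or qualitatively analyze) the ODE, show the solution reaches the state $|L|=0$ when $\a\le\a_1$, and check that along the way the number of degree-one vertices stays positive (so Phase 2 is never entered) — this is where $z_1$ and the monotonicity of the map $z\mapsto (e^z-1)/(d-1)$ come in, and where the strict inequality $\a<\a_1$ versus equality $\a=\a_1$ needs care; (4) handle the endgame when only $o(n)$ vertices remain, where the differential-equation approximation breaks down, by a direct argument (e.g.\ the residual graph is subcritical and whp acyclic, so leaf removal clears it).

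The main obstacle I expect is step (3)–(4): proving that at $\a=\a_1$ exactly — the boundary case — Phase 1 still succeeds whp, since there the trajectory is tangent to the stalling locus and the second-order fluctuations matter. One needs either a sharper concentration estimate near the tangency point or a separate combinatorial argument showing the leftover graph is small and benign. A secondary technical point is justifying the passage from $G$ to the configuration model and back (the $h_i(x)$ are independent with replacement, so an $R$-vertex can be hit twice by the same $x$, creating multi-edges), but this only affects lower-order terms and is routine. Modulo these, the computation that the stalling threshold is precisely $\a_1$ as in \eqref{azaa} follows from solving \eqref{zaa} and substituting.
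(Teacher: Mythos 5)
Your proposal follows essentially the same route as the paper: a differential-equations analysis of the Phase 1 leaf-removal Markov chain on $(w,v_1,v)$, the tangency condition between $y=1-e^{-\z}$ and $y=A\z^B$ yielding exactly \eqref{zaa}--\eqref{azaa}, and an endgame showing the residual graph at the breakdown time is whp a forest so that Hall's condition holds and Phase 1 (which is exact on forests) finishes the job. The only implementation difference is that the paper avoids invoking Wormald's theorem directly (because of an unbounded Lipschitz coefficient) and instead uses a Bohman-style sub/supermartingale argument with the error function $Err(t)=n^{2/3}g(t/n)$, a detail your step (2) would have to confront.
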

Thus \whp\ Phase 1 of the Karp-Sipser Algorithm finds a (near) maximum matching if $\a\leq \a_1$.
In particular, if $d=3$ then $z_1\approx 1.251$ and $\a_1\approx .818$ and thus $m\approx 1.222n$ is enough for a
matching of $L$ into $R$.

Andrea Montanari has pointed out that our proof of Theorem \ref{th1} via the differential equations method is not
new and already appears in Luby, Mitzenmacher, Shokrollahi and Spielman \cite{LMSS} and also in Dembo and Montanari \cite{DeMo}.
We will prune this from the final version of the paper, but leave it in here for now.

Now consider larger $\a$. Let $z^*$ be the largest non-negative solution to 
$$\bfrac{z}{\a d}^{\frac{1}{d-1}} + e^{-z}-1=0.$$
\begin{theorem}\label{th2}
If $\a>\a_1$ then \whp
\begin{description}
\item[(a)] $z^*>0$.
\item[(b)] $\t_1\sim n\brac{1-\bfrac{z^*}{\a d}^{\frac{d}{d-1}}}$.
\item[(c)] If $d\geq 3$ then
\beq{match}
\m(\G_1)=\min\set{|L_1|,\,|R_1|}=\min\set{n-\t_1,(1-(1+z^*)e^{-z^*})m+o(m)}.
\eeq
Here $L_1\subseteq L, R_1\subseteq R$ are the two sides of the bipartition of $\G_1$, after deleting any isolated vertices from the 
$R$-side.
\end{description}
\end{theorem}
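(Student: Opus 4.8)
The plan is to run the differential equations method on Phase 1 of the Karp--Sipser algorithm, tracking the degree sequence of $\G$ as a Poisson-like distribution, and then analyze the residual graph $\G_1$ via a configuration-model argument. Concretely, start by observing that the initial graph $G$ has $L$-vertices of degree exactly $d$ and $R$-vertices whose degrees are asymptotically $\Poi(\a d)$. A standard fact (used already in Karp--Sipser-type analyses) is that conditioned on its degree sequence, $\G$ at every step of Phase 1 behaves like a random bipartite multigraph with that degree sequence; moreover the degrees on the $R$-side stay (truncated) Poisson-distributed throughout, parametrized by a single scalar. First I would set up the state variables: the number of $L$-vertices remaining, the number of $R$-vertices of each degree (or, equivalently, the parameter $z$ of the conditional Poisson law on the $R$-side together with the count of surviving $R$-vertices), and the number of edges. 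Each step of Phase 1 removes one $L$-vertex and one $R$-vertex (a degree-one endpoint and its mate); writing the expected one-step changes and passing to the limit gives a system of ODEs whose solution is exactly captured by the substitution $z = z(t)$ with $(z/(\a d))^{1/(d-1)} = $ (fraction of $L$-vertices surviving)$^{1/\cdots}$, so that the time at which Phase 1 ends (minimum degree $\ge 2$ on both sides) is the first time the ``degree-one mass'' vanishes. Part (a) is then the statement that for $\a>\a_1$ this happens before $\G$ is exhausted, i.e. $z^*>0$, which follows from analyzing the defining equation $(z/(\a d))^{1/(d-1)} + e^{-z} - 1 = 0$ and comparing the threshold with \eqref{zaa}--\eqref{azaa}. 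Part (b) is a direct readout: the number of Phase 1 steps equals $n$ minus the number of surviving $L$-vertices, and the surviving fraction is $(z^*/(\a d))^{d/(d-1)}$, giving $\t_1 \sim n(1 - (z^*/(\a d))^{d/(d-1)})$.

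For part (c) I would first extract, from the same ODE solution, the description of $\G_1$: it has $|L_1| = n - \t_1$ vertices on the left, each still of degree $d$ (an $L$-vertex survives Phase 1 only if it never became degree one, hence retained all $d$ neighbours), and on the right the surviving non-isolated vertices number $(1-(1+z^*)e^{-z^*})m + o(m)$, with degree distribution a $\Poi(z^*)$ conditioned on being $\ge 2$. The key structural point is that $\G_1$, conditioned on this degree sequence, is distributed as a uniformly random bipartite multigraph with these degrees (this is the standard ``deferred decisions''/configuration-model invariance that makes the Karp--Sipser analysis go through). Since $d \ge 3$ on the $L$-side and every $R$-vertex has degree $\ge 2$, this is a bipartite graph in which one side has large minimum degree; a near-perfect matching saturating the smaller side then follows from a Hall-type argument on the configuration model, exactly analogous to the proof of Lemma \ref{fot} but now with the sharper degree information. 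Thus $\m(\G_1) = \min\{|L_1|,|R_1|\}$ up to $o(m)$, which is \eqref{match}.

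The main obstacle, and where the bulk of the work lies, is the rigorous justification of the differential equations approximation through the \emph{entire} duration of Phase 1 — in particular controlling the process near the end, when the number of degree-one vertices becomes small, because that is precisely where the trajectory determines whether it hits zero (yielding $z^*>0$) or not, and the DE method's error bounds degrade as the relevant populations shrink. I would handle this by applying Wormald's differential equation theorem on a slightly truncated time interval $[0,(1-\d)\t_1/n]$, obtaining concentration there, and then arguing separately — via a direct second-moment or branching-process estimate on the small residual — that the qualitative behaviour (Phase 1 terminating with the claimed $\G_1$) persists on the last $\d$-fraction; letting $\d\to 0$ after $n\to\infty$ closes the gap. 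A secondary technical point is verifying that the Poisson form of the $R$-side degree distribution is genuinely preserved by the Phase 1 dynamics (as opposed to merely its mean), which I would establish by checking that the one-step transition preserves the exponential family, or by the cleaner route of exposing the $R$-side degrees lazily so that the conditional law is manifestly (truncated) Poisson at all times.
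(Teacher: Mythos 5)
Your plan for parts (a) and (b) is essentially the paper's: set up the Markov chain of degree-count parameters, observe that the $R$-side degrees remain truncated Poisson under the Phase 1 dynamics (the paper's Lemmas \ref{lem2} and \ref{trunp} carry out exactly the ``lazy exposure'' you describe), derive and solve the differential equations (Lemma \ref{lemde} gives $w=(\z/\a d)^{d/(d-1)}n$ and identifies $z^*$ as the largest root of \eqref{p1e}), and handle the endgame separately rather than by a bare application of Wormald's theorem --- the paper uses a Bohman-style sub/supermartingale with an error envelope $Err(t)$ up to a stopping time $t_2$, and then the supermartingale argument around \eqref{mm2} to drive $v_1$ to zero. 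Your proposed truncation-plus-separate-endgame is the same idea in different clothing, and parts (a), (b) read off from \eqref{zw1} and \eqref{sad} as you say.

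The genuine gap is in part (c). You assert that once $\G_1$ is known to be (conditionally) a uniformly random bipartite graph with every $L_1$-degree equal to $d\ge 3$ and every $R_1$-degree at least $2$, a matching saturating the smaller side ``follows from a Hall-type argument on the configuration model, exactly analogous to the proof of Lemma \ref{fot}''. It does not. Lemma \ref{fot} requires $d\ge 2(1+\e)\log(e/\e)$ where $m=(1+\e)n$: the naive union bound over Hall witnesses only closes when $d\to\infty$ as the two sides become balanced, whereas here $|L_1|$ and $|R_1|$ can be asymptotically equal while $d=3$ is fixed. Proving $\m(\G_1)=\min\{|L_1|,|R_1|\}$ in this regime is the content of Theorem \ref{min} and occupies all of Section \ref{pot2}, which is the bulk of the paper: one must restrict to minimal witnesses (each vertex of $B$ having at least two neighbours in $A$), carry the truncated-Poisson law of the $R$-degrees through the first-moment computation (\eqref{expl0}, \eqref{expl1}), optimize jointly over the witness size $a$, the edge count $D=\th k$, and the generating-function parameters $\z_1,\z_2$ using the convexity/log-concavity facts of Lemmas \ref{use1}--\ref{use3} and Claim \ref{claim:fz}, treat witnesses on both sides of the bipartition, split into many ranges of $a$ with separate numerical verifications for $d=3,4,5$, and finally transfer from $|R_1|\sim|L_1|$ to general sizes by a switching argument between $\cG(n,m-1)$ and $\cG(n,m)$. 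None of this is routine or ``exactly analogous'' to Lemma \ref{fot}; without it your part (c) is unsupported precisely in the hardest case, $d=3$ with the two sides of $\G_1$ nearly balanced.
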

\section{Structure of the paper}
We first prove Theorem \ref{th1}. This involves studying Phase 1 of the Karp-Sipser Algorithm. For this we first describe 
the distribution of the graph $G$. This is done in Section \ref{probsec}. The distribution of $\G$ is determined
by a few parameters and these evolve as a Markov chain. To study this chain, we introduce and solve a set of differential
equations. This is done in Section \ref{diffeqs}. We show that the chains trajectory and the solution to the equations are close.
By analysing the equations we can tell when Phase 1 is sufficient to solve the problem. This is done in Section \ref{ap1}.
If Phase 1 is not sufficient then the graph $\G_1$ that remains has degree $d$ on the $L$-side and minmum degree at least two
on the $R$-side. We show that \whp\ $\G_1$ has a matching of size equal to the minimum set size of the partition. 
\cite{DMP} and \cite{DM} and \cite{FP}. 
\section{Probability Model for Phase 1}\label{probsec}
We will represent $G$ and more generally $\G$ by a random sequence $\bx\in \Omega_{L,R}=(R^d\cup\set{\star}^d)^L$. 
A sequence $\bx\in \Omega_{L,R}$ is to be viewed as $n$ subsequences $\bx_1,\bx_2,\ldots,\bx_n$ where $\bx_j=(x_{j,1},x_{j,2},\ldots,
x_{j,d})\in R^d$
or $\bx_j=\s=(\star,\star,\ldots,\star)$. 
The $\star$'s represent edges that have been deleted by the Karp-Sipser algorithm.
For $\bx\in \Omega_{L,R}$ we 
define the bipartite (multi-)graph $\G_\bx$ as follows: Its vertex set consists of a bipartition
$L_\bx=\set{j\in L:\bx_j\neq \s}$ (the {\em left side}) and $R$ (the {\em right side}). The edges incident with $j\in L_\bx$ are 
$(j,x_{j,i}),\,i\in [d]$
i.e. we read the sequence $\bx$ from left to write and add edges to $\G_\bx$ in blocks of size $d$. Each block being
assigned to a unique vertex of $L_\bx$. 

We should be clear now that our probability space is $\Omega_{L,R}$ with uniform measure and not 
$\cG_{L,R}=\set{\G_\bx:\bx\in \Omega_{L,R}}$.

Given a graph
$\G_\bx$ we let $v_j=v_j(\bx)=|R_j(\bx)|$ where $R_j(\bx)$ is the set of vertices in $R$
that have degree $j\geq 0$. We let $v=v(\bx)=|R_\bx|$ where $R_\bx=\bigcup_{j\geq 2}R_j(\bx)$.

For the graph $G$ we choose $\bx(0)$ uniformly at random from $(R^d)^L$ and put $G=\G_{\bx(0)}$. Next
let $\G(0)=G$ and let $\G(t)=\G_{\bx(t)}$ be the graph $\G$ that we have after $t$ 
steps of Phase 1 of the Karp-Sipser algorithm. The sequence $\bx(t)$ is defined as follows: Observe first that
the vertex $\xi$ of degree one is always in $R$. Suppose that it is incident to the unique edge $(\eta,\xi),\eta\in L$. Then
we simply replace $\bx_\eta$ in $\bx(t-1)$ by $\s$ to obtain $\bx(t)$. We should thus think of the Karp-Sipser Algorithm as acting
on sequences $\bx$ and not on graphs. We write $\bx\to\by$ to mean that $\by$ can be obtained from $\bx$ by a single 
Phase 1 step of the Karp-Sipser Algorithm. 

Let $\vv(t) =
(w(t),v_1(t),v(t))$ where $w(t)=|L_{\bx(t)}|$ is the number of vertices on the left
side of the bipartition of $\G(t)$. Assuming that we have only run the Karp-Sipser algorithm up to the
end of Phase 1, we have $w(t) = n-t$. Also 
\beq{begin}
\vv(0)\sim (n,\a d e^{-\a d}m,(1-e^{-\a d}-\a d e^{-\a d})m)\ \ \whp. 
\eeq
We will omit the parameter
$t$ from $\vv(t)$ when it is clear from the context. 
Let $\Gv$ be
the set of all $\bx\in \Omega_{L,R}$ with parameters $\vv$.

\begin{lemma}\label{lem2}
Suppose $\bx(0)$ is a random member of $\cX_{\vv(0)}$. Then given $\vv(0),\ldots, \vv(t)$, $\bx(t)$
is a random member of $\cX_{\vv(t)}$ for all $t\ge 0$.
\end{lemma}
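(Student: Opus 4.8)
The plan is to prove this by induction on $t$, the key point being that a single Phase~1 step of the Karp-Sipser algorithm acts ``symmetrically'' with respect to the random choices that remain undetermined, so that conditioning on the trajectory $\vv(0),\ldots,\vv(t)$ leaks no information beyond the parameters $\vv(t)$ themselves. The base case $t=0$ is the hypothesis. For the inductive step, assume $\bx(t)$ is uniform on $\cX_{\vv(t)}$ given $\vv(0),\ldots,\vv(t)$; I want to show $\bx(t+1)$ is uniform on $\cX_{\vv(t+1)}$ given $\vv(0),\ldots,\vv(t+1)$.

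The core is a counting argument. Fix a target $\by\in\cX_{\vv(t+1)}$. I would count, among all $\bx\in\cX_{\vv(t)}$, how many admit a Phase~1 step $\bx\to\by$, and check that this count depends only on $\vv(t)$ and $\vv(t+1)$ (not on $\by$). A step $\bx\to\by$ corresponds to picking a degree-one vertex $\xi\in R$ of $\G_\bx$, letting $\eta\in L$ be its unique neighbour, and replacing $\bx_\eta$ by $\s$; equivalently, $\bx$ is obtained from $\by$ by choosing an index $\eta\in L\setminus L_\by$ and filling $\by_\eta=\s$ with a block $(x_{\eta,1},\ldots,x_{\eta,d})\in R^d$ such that exactly one of its $d$ coordinates, the vertex $\xi$, had degree $0$ in $\G_\by$ and the remaining $d-1$ coordinates land on vertices whose degrees are then consistent with $\vv(t)$. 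Because the algorithm selects $\xi$ uniformly among the degree-one vertices, each preimage $\bx$ is reached with the same probability $1/v_1(\bx)=1/v_1(t)$ (which is determined by $\vv(t)$), so it suffices to show the number of preimages of $\by$ is the same for every $\by\in\cX_{\vv(t+1)}$. This number is a product of combinatorial factors: the number of choices of $\eta$ (namely $|L|-|L_\by| = t+1$, uniform), the number of ways to choose which coordinate of the new block is $\xi$ and which degree-$0$ vertex of $\G_\by$ it is (counted by $v_0(\by)$, and $v_0$ is a deterministic function of $\vv$ since $|R|=m$ is fixed and $v_0 = m - v_1 - v$), and the number of ways to place the other $d-1$ coordinates on vertices so as to realize the degree-sequence change recorded by $\vv(t)$ versus $\vv(t+1)$ — again a function of the $v_j$'s alone. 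Summing $\Pr[\bx(t)=\bx]\cdot\Pr[\text{step}\mid \bx] = \frac{1}{|\cX_{\vv(t)}|}\cdot\frac{1}{v_1(t)}$ over the (equinumerous) preimages gives $\Pr[\bx(t+1)=\by]$ independent of $\by$, which is exactly uniformity on $\cX_{\vv(t+1)}$ conditioned on the trajectory.

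A couple of technical points need care. First, one must be precise that the ``parameters'' $\vv=(w,v_1,v)$ really do determine the full degree profile relevant to a Phase~1 step: vertices in $R$ of degree $\geq 2$ are lumped together, but a single step only ever promotes a degree-$0$ vertex to degree $1$ or degree $\geq 2$, demotes the chosen degree-$1$ vertex to degree $0$, and can bump some degree-$1$ vertices to degree $\geq 2$; each such event's multiplicity is governed by the $v_j$ and the structure of the removed block, so one should organize the preimage count by the ``type'' of the block $\bx_\eta$ (how its $d$ coordinates distribute among degree classes of $\G_\by$), show that within each type the count is $\by$-independent, and note that the type is forced by $\vv(t)\to\vv(t+1)$ together with possible multiplicities — or, more cleanly, sum over types. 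Second, one should handle multi-edges/self-repetitions within a block honestly, since the model lives on $\Omega_{L,R}$, not on simple graphs; this only affects the bookkeeping, not the symmetry.

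The main obstacle is exactly this bookkeeping: making the preimage-count genuinely depend only on $\vv(t),\vv(t+1)$ when a single removed block of $d$ edges can touch several vertices and change several degree counts at once, possibly with coincidences (two coordinates of the block equal, or a coordinate equal to $\xi$'s old neighbours, etc.). The clean way around it is to observe that the Karp-Sipser step, viewed on $\Omega_{L,R}$, commutes with the natural action of the symmetry group $\mathrm{Sym}(R)$ on colours of $R$ that preserves each degree class of $\G_{\bx(t)}$: this group acts transitively on $\cX_{\vv(t)}$ for fixed degree profile, and equivariance of the step immediately yields that the pushforward of the uniform measure is uniform on the image. I would present the argument in this group-action form, with the explicit count relegated to a remark, as it makes the $\by$-independence transparent and sidesteps most of the casework.
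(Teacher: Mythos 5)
Your main line is essentially the paper's: induction on $t$, fix a target $\by\in\cX_{\vv(t+1)}$, count the preimages $\bx\in\cX_{\vv(t)}$ with $\bx\to\by$ by enumerating how the restored block $\bx_\eta$ distributes over the degree classes $R_0(\by)$, $R_1(\by)$, $R_\by$, and observe that this count depends only on $\vv(t)$ and $\vv(t+1)$, not on $\by$. Two caveats. First, the transition probability from a preimage $\bx$ to $\by$ is not $1/v_1$ but $a/v_1$, where $a$ is the number of degree-one vertices of $\G_\bx$ appearing in the block $\bx_\eta$: the algorithm reaches $\by$ by selecting any one of them. Since $a=v_1-v_1'+b_1$ varies across preimages of the same $\by$ (through the split $b=b_0+b_1$ of the vertices leaving $R_\bx$, which is not forced by $\vv,\vv'$ once multi-edges are allowed), it does \emph{not} suffice that the number of preimages be $\by$-independent; you need the $a$-weighted count $\sum_{\bx\to\by}a(\bx,\by)$ to be $\by$-independent. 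That follows from the same type-by-type enumeration ($a$ is a function of the type), so the fix is routine, but as written your probability computation is off. (The paper's own write-up is loose on the same point, via its quantity $N(\vv)$.)

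Second, the group-action ``clean-up'' you propose at the end does not work and should be dropped. Permutations of $R$ that preserve the degree classes of $\G_{\bx(t)}$ do not act transitively on $\cX_{\vv(t)}$, nor even on the subset with a fixed degree profile: relabelling $R$ cannot change which left vertex a given right vertex is attached to, so two sequences with identical degree profiles but different incidence structure lie in different orbits. Uniformity on $\cX_{\vv}$ therefore cannot be deduced from equivariance under this action, and the explicit count is not optional bookkeeping — it is the proof.
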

\begin{proof}
We prove this by induction on $t$. It is true for $t=0$ by assumption and so assume it is true for some $t\geq 0$.
Let $\vv(t)=(w,v_1,v)$ and now fix a triple $\vv'=(w'=w-1,v_1',v')$ as a possible value for $\vv(t+1)$. Fix $\by\in \cX_{(w',v_1',v')}$.
We first compute the
number of $\bx\in\cX_{\vv(t)}$ such that $\bx\to\by$. Let $b=v-v'$ be the number of vertices in $R_\bx\setminus R_\by$. Some of these
will be in $R_0(\by)$ and some will be in $R_1(\by)$. So we choose non-negative integers $b_0,b_1$ such that 
$b_0+b_1=b$. Next let $a=v_1-v_1'+b_1$ be the number of vertices in $R_1(\bx)\cap R_0(\by)$. 
We can choose a vertex $\eta$ so that $L_\bx\setminus L_\by=\set{\eta}$ in $t$\mstar\  
ways and now let us enumerate the ways of choosing $\bx_\eta=(\z_1,\z_2,\ldots,\z_d)$.
Our choices for $\z_i$ are (i) distinctly from $R_0(\by)$ (i.e. $\z_i$ is distinct from rest of the $\z_j$), (ii) 
non-distinctly from $R_0(\by)$ (i.e. $\z=\z_i$ is chosen more than once in the construction), (iii) from $R_1(\by)$ and 
(iv) from $R_\by$. We must exercise choice (i) exactly $a$ times, choice (ii) at least twice for each of $b_0$ distinct values,
choice (iii) at least once for each of $b_1$ distinct values and choice (iv) the remaining times.

\ignore{
Next let $c$
be the number of distinct neighbours of $\eta$ in \bx. We will use the notation
$|z|=z_1+\cdots+z_k$ for a vector $z=(z_1,\ldots,z_k)$. 
The number of choices for \bx\
is then 
\beq{indy}
\sum_{\substack{0\leq c\leq d-a-b\\0\leq b_1\leq b}}\ 
\sum_{\substack{\br\in [d]^{b_1},\bs\in [d]^{b_2},\bt\in [d]^c\\|r|+|s|=d-a-c}}(n-t)
\binom{m-v'-v_1'}{a}\binom{v_1'}{b}\binom{v'}{d-a-b}(v')^{d-a-b}\binom{d}{c}\binom{d-c}{r_1,\ldots,s_b}.
\eeq 
(i) $n-t$ choices for $\xi\notin L_\by$, (ii) $\binom{m-v'-v_1'}{a}$ choices for 
the set of vertices of degree one in \bx\ that become of degree zero in \by, (iii) $\binom{v_1'}{b}$ choices for 
the set of vertices of degree at least two in $\bx$ 
that become of degree one in $\by$, (iv) $\binom{v'}{d-a-b}$ choices for the set of vertices of degree at least two in \bx\
that are incident with $\xi$ and have degree at least two in \by, (v) $d!$ ways of filling in $\bx_\xi$. 
}
The number of choices for $\bx_\eta$ depends only on $\vv$ and $\vv'$,
i.e. for each $\by\in \cX_{\vv'}$ we have that 
$D(\vv,\vv') = \left|\set{\bx \in \cX_{\vv} : \bx \to \by }\right|$ 
is independent of \by, given $\vv$ and
$\vv'$. 

Similarly given $\bx$ there is a unique $i\in R_1(\bx)$, which when removed determines
$\by$. Thus $N(\vv) = \left|\set{\by : \bx \to \by }\right|$ is fixed given $\vv$.  Thus if $\bx(t)$ is a random member of $\cX_{\vv}$ then 
\begin{align*}
&\Pr(\bx(t+1) = \by | \vv(0),\ldots,\vv(t))\\ 
&= \sum_{\bx \in \cX_{\vv(t)}}\Pr(\bx(t) = \bx | \vv(0),\ldots,\vv(t))\cdot \Pr(\bx(t+1) = y | \vv(0),\ldots,\vv(t-1),\bx(t)=\bx)\\
&= \sum_{\bx\in\cX_{\vv(t)}} \Pr(\bx(t+1)=\by | \bx(t)=\bx) \cdot \left|\cX_{\vv(t)}\right|^{-1}\\
&= \sum_{\bx\in\cX_{\vv(t)}} \frac{D(\vv(t),\vv(t+1))}{N(\vv(t))} \cdot \left|\cX_{\vv(t)}\right|^{-1}
\end{align*}
which is independent of $\by$ given $\vv(t)$ and so $\by$ is a random member of $\cX_{\vv(t+1)}$.
\end{proof}

\begin{lemma}
The random sequence $\vv(t), t=0,1,2,\ldots$ is a Markov chain.
\end{lemma}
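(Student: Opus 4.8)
The plan is to read this off directly from Lemma \ref{lem2} together with the two counting quantities $D(\vv,\vv')$ and $N(\vv)$ isolated in its proof. The key structural fact is that all of the randomness in a single Phase 1 step is the uniform choice of the degree-one vertex $\xi$ to be removed, and that $\bx(t+1)$ — hence $\vv(t+1)$, which is a deterministic function of $\bx(t+1)$ — is itself a deterministic function of $\bx(t)$ and of this choice. Consequently, conditioned on $\bx(t)$, the pair $(\bx(t+1),\vv(t+1))$ is independent of $\vv(0),\dots,\vv(t-1)$, and it remains only to average over $\bx(t)$ using the conditional distribution supplied by Lemma \ref{lem2}.

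First I would fix $t\ge 0$, condition on a history with $\vv(t)=\vv$, and invoke Lemma \ref{lem2} to say that $\bx(t)$ is uniform on $\cX_{\vv}$ given $\vv(0),\dots,\vv(t)$. For a target state $\vv'$, since $\vv(t+1)$ is determined by $\bx(t+1)$, I would write
\[
\Pr\big(\vv(t+1)=\vv'\mid \vv(0),\dots,\vv(t)\big)=\sum_{\by\in\cX_{\vv'}}\Pr\big(\bx(t+1)=\by\mid \vv(0),\dots,\vv(t)\big).
\]
By the uniformity of $\bx(t)$ on $\cX_{\vv}$ and the fact that from a given $\bx(t)=\bx$ each of the $N(\vv)$ choices of degree-one vertex is equally likely, each summand equals $D(\vv,\vv')/(N(\vv)\,|\cX_{\vv}|)$ — this is exactly the computation in the final display of the proof of Lemma \ref{lem2}. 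Hence
\[
\Pr\big(\vv(t+1)=\vv'\mid \vv(0),\dots,\vv(t)\big)=\frac{|\cX_{\vv'}|\,D(\vv,\vv')}{N(\vv)\,|\cX_{\vv}|},
\]
which depends on the history only through $\vv$, i.e.\ on $\vv(t)$ and $\vv'$ alone. That is the Markov property.

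I do not expect a genuine obstacle here: the substantive combinatorics — that $D(\vv,\vv')$ depends only on $(\vv,\vv')$ and that $N(\vv)$ depends only on $\vv$ — has already been done inside Lemma \ref{lem2}. The only points requiring care are bookkeeping ones: that $\vv(t)$ is a measurable function of $\bx(t)$, so conditioning on the history of the $\vv$-process is legitimate; that $N(\vv)$ and $|\cX_{\vv}|$ are nonzero, which holds throughout Phase 1 since $\bx(t)\in\cX_{\vv}$ and $\G(t)$ has a vertex of degree one for as long as Phase 1 has not terminated; and that the argument is being applied only to steps $t$ lying within Phase 1, where $w(t)=n-t$ and the update rule described above is the one in force.
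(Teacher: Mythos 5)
Your proposal is correct and follows essentially the same route as the paper: both condition on the history, invoke Lemma \ref{lem2} to get uniformity of $\bx(t)$ on $\cX_{\vv(t)}$, and sum the transition probabilities over $\by\in\cX_{\vv(t+1)}$ to see that the result depends only on $\vv(t)$ and $\vv(t+1)$. Your version is marginally more explicit in writing the transition probability as $|\cX_{\vv'}|\,D(\vv,\vv')/(N(\vv)\,|\cX_{\vv}|)$, but this is the same computation the paper leaves in summed form.
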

\begin{proof}
As in \cite{AFP},
\begin{eqnarray*}
\Pr(\vv(t+1)\mid \vv(0),\vv(1),\ldots,\vv(t))&=&\sum_{\bx'\in
\cX_{\vv(t+1)}}\Pr(\bx'\mid 
\vv(0),\vv(1),\ldots,\vv(t))\\
&=&\sum_{\bx'\in \cX_{\vv(t+1)}}\sum_{\bx\in \cX_{\vv(t)}}\Pr(\bx',\bx\mid 
\vv(0),\vv(1),\ldots,\vv(t))\\
&=&\sum_{\bx'\in \cX_{\vv(t+1)}}\sum_{\bx\in \cX_{\vv(t)}}
\Pr(\bx'\mid \vv(0),\vv(1),\ldots,\vv(t-1),\bx)\\
&&\times \Pr(\bx\mid 
\vv(0),\vv(1),\ldots,\vv(t))\\
&=&\sum_{\bx'\in \cX_{\vv(t+1)}}\sum_{\bx\in \cX_{\vv(t)}}\Pr(\bx'\mid
\bx)|\cX_{\vv(t)}|^{-1},
\end{eqnarray*}
which depends only on $\vv(t),\vv(t+1)$.
\end{proof}

\begin{lemma}\label{trunp}
Conditional on $\vv$ if $\bx$ is selected uniformly at random from
$\Gv$ then each vertex $i\in R_\bx$ has degree $Y_i$ where $Y_i=\Poi(z;\ge 2)$, a Poisson random variable
conditioned to take a value at least two, and $z$ satisifes
\begin{equation}\label{zeqn}
\frac{z(e^z-1)}{f(z)}=\frac{dw-v_1}{v}
\end{equation}
where $f(z) = e^z-z-1$.

The $Y_i$ are also conditioned to satisfy
$\sum_{i=1}^{v}Y_i = wd-v_1$. 
\end{lemma}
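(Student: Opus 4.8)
\ The plan is to reduce the conditional law of the $R$-side degree sequence to a balls-in-bins occupancy law and then to identify that law with the law of $v$ i.i.d.\ truncated Poissons conditioned on their sum; equation \eqref{zeqn} will come out afterwards as the statement that this conditioning happens ``at the mean''. First I would note that the degrees of the vertices of $R$ in $\G_\bx$ depend on $\bx$ only through the $wd$ entries $x_{j,k}$ that lie in the $w$ non-$\s$ blocks, and that, before any conditioning, these $wd$ entries are i.i.d.\ uniform on $R$ --- the grouping of the entries into blocks of size $d$ is immaterial for the $R$-side degrees. Starting from $\bx$ uniform on $\Gv$ as in the statement, I would condition further on the actual sets $L_\bx$ and $R_0(\bx),R_1(\bx),R_\bx$ (harmless, by the symmetry of $\Omega_{L,R}$). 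A sequence $(y_i)_{i\in R_\bx}$ of $R_\bx$-degrees is compatible with these data exactly when $y_i\ge 2$ for every $i$ and $\sum_{i\in R_\bx}y_i=wd-v_1$ (the degree-one vertices of $R$ carry $v_1$ edge-endpoints and the degree-zero vertices carry none, so the high-degree vertices carry the remaining $wd-v_1$), and the number of $\bx\in\Gv$ realising it is the multinomial count $(wd)!\big/\prod_{i\in R_\bx}y_i!$ (the degree-$0$ and degree-$1$ vertices contribute factors $0!=1!=1$, and once $L_\bx$ is fixed every remaining block must equal $\s$). Hence, conditional on $\vv$,
\beq{bib-weight}
\Pr\brac{(Y_i)_{i\in R_\bx}=(y_i)_{i\in R_\bx}}\;\propto\;\prod_{i\in R_\bx}\frac{1}{y_i!}
\eeq
on the support $\set{(y_i):y_i\ge 2,\ \sum_i y_i=wd-v_1}$.

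Next I would compare \eqref{bib-weight} with $v$ i.i.d.\ copies $W_1,\dots,W_v$ of $\Poi(z;\ge 2)$. Using $\Pr(\Poi(z)\ge 2)=1-(1+z)e^{-z}=e^{-z}f(z)$, one has for $y_i\ge 2$ that $\Pr(W_1=y_1,\dots,W_v=y_v)=z^{\sum_i y_i}\big/\big(f(z)^{v}\prod_i y_i!\big)$, so on the level set $\set{\sum_i y_i=wd-v_1}$ the prefactor $z^{\sum_i y_i}f(z)^{-v}$ is constant, and conditioning on $\sum_i W_i=wd-v_1$ produces a law proportional to $\prod_i 1/y_i!$ --- the same weights, on the same support, as in \eqref{bib-weight}. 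Two probability measures proportional to the same weights on the same support coincide, which proves the distributional part of the lemma; note that this works for every $z>0$, since conditioning on the sum washes out the rate.

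It remains to see that \eqref{zeqn} pins down the natural $z$. For $X\sim\Poi(z)$ one has $\E[X\,;\,X\ge 2]=z-ze^{-z}=z(1-e^{-z})$, hence $\E[\Poi(z;\ge 2)]=z(1-e^{-z})/(e^{-z}f(z))=z(e^z-1)/f(z)$, and \eqref{zeqn} is precisely the requirement that the unconditioned mean $v\,\E[\Poi(z;\ge 2)]$ of $\sum_i Y_i$ equal $wd-v_1$. Since $z\mapsto z(e^z-1)/f(z)$ increases continuously from $2$ (as $z\to 0^+$) to $+\infty$, this has a unique positive solution whenever $(dw-v_1)/v>2$, and that solution is the value of $z$ for which $\set{\sum_i Y_i=wd-v_1}$ is a typical event rather than a large-deviation event --- the feature that makes this choice of $z$ useful in the later local-limit estimates. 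I expect the one step that needs genuine care to be the reduction in the first paragraph: seeing that, as far as the $R$-side degree sequence is concerned, the non-$\s$ part of $\bx$ is nothing but $wd$ uniform balls dropped into the $m$ bins of $R$, so that conditioning on $\vv$ merely restricts the occupancy law to the prescribed profile, with its characteristic $\prod 1/y_i!$ weighting.
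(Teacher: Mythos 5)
Your proof is correct and follows essentially the same route as the paper: reduce the conditional degree law of $R_\bx$ to a balls-in-bins occupancy distribution with every bin occupied at least twice, identify it with i.i.d.\ truncated Poissons conditioned on their sum, and read \eqref{zeqn} as fixing $z$ so that the conditioning is at the mean. The only difference is that the paper delegates the multinomial-weight identification to Lemma 4 of \cite{AFP}, whereas you supply that calculation explicitly.
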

\begin{proof}
Suppose we first fix the edges incident with vertices of degree one in \bx. Then we randomly fill in the remaining
$dw-v_1$ non-$\star$ positions in \bx\ with values from some fixed $v$-subset $R_\bx$ of $R$, subject to each of these $v$ vertices
having degree at least two. The degrees $Y_i$ of these vertices will have the description described in the lemma (for a proof 
see Lemma 4 of \cite{AFP}). 
\end{proof}

From \cite{AFP} we can use the following lemma
\begin{lemma} {\bf \cite{AFP}}\\
{\bf (a)} Assume that $\log n = O((vz)^\half)$. For every $j\in R_\bx$ and $2\le k \le \log n$,
\beq{eq0}
\Pr(Y_j = k | \vv) = \frac{z^k}{k!f(z)}\brac{1+O\bfrac{k^2+1}{vz}}
\eeq
{\bf (b)} For all $k\ge 2, j\in R_\bx$
$$
\Pr(Y_j = k | \vv) = O\brac{(vz)^\half \frac{z^k}{k!f(z)}}
$$
\end{lemma}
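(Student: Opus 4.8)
\textbf{Proof sketch (following \cite{AFP}).}
The plan is to reduce both parts to a single ratio of lattice point probabilities and then appeal to a quantitative local central limit theorem. By Lemma~\ref{trunp}, conditioning on $\vv$ makes the degrees $Y_1,\dots,Y_v$ of the vertices of $R_\bx$ independent copies of $\Poi(z;\ge 2)$ subject only to $S_v:=\sum_{i=1}^{v}Y_i=s$, where $s:=wd-v_1$ and $z$ is fixed by \eqref{zeqn} so that the one--variable mean $\mu:=\E[Y_1]=\frac{z(e^z-1)}{f(z)}$ equals $s/v$ exactly. Since an unconditioned $\Poi(z;\ge 2)$ has $\Pr(Y_1=k)=\frac{z^k}{k!\,f(z)}$ for $k\ge 2$ and the $Y_i$ are i.i.d., exchangeability gives, for $k\ge 2$,
\[
\Pr(Y_j=k\mid\vv)=\Pr(Y_1=k\mid S_v=s)=\frac{z^k}{k!\,f(z)}\,R_k,\qquad R_k:=\frac{a_{v-1}(s-k)}{a_v(s)},
\]
where $a_\ell(j):=\Pr\big(\sum_{i=1}^{\ell}Y_i=j\big)$. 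So part~(a) is the claim $R_k=1+O\big((k^2+1)/(vz)\big)$ and part~(b) is the claim $R_k=O\big((vz)^{1/2}\big)$.

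The engine is a local limit estimate for $a_\ell$, which I would extract from Cauchy's formula $a_\ell(j)=\frac{1}{2\pi}\int_{-\pi}^{\pi}g(e^{i\theta})^{\ell}e^{-ij\theta}\,d\theta$, where $g(x)=\frac{e^{zx}-zx-1}{f(z)}$ is the probability generating function of $Y_1$. Two facts are needed: (i) $\sigma^2:=\mathrm{Var}(Y_1)=\frac{z^2e^z}{f(z)}+\mu-\mu^2=\Theta(z)$ uniformly over the range of $z$ that occurs in Phase~1 (it is $\sim z/3$ as $z\to0$ and $\sim z$ as $z\to\infty$), while $\mu$ stays bounded; and (ii) $Y_1$ is supported on $\{2,3,\dots\}$ with span $1$, so $|g(e^{i\theta})|<1$ for $0<|\theta|\le\pi$, with a gap quantitatively controlled by $\sigma^2=\Theta(z)$. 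Splitting the integral into a short central arc about $\theta=0$, where $\log g(e^{i\theta})=i\mu\theta-\tfrac12\sigma^2\theta^2+O(z|\theta|^3)$ and one retains the first Edgeworth correction, and its complement (exponentially smaller), one gets, uniformly for $|j-\ell\mu|=O(\sqrt{\ell z})$,
\[
a_\ell(j)=\frac{1}{\sqrt{2\pi\ell\sigma^2}}\exp\!\Big(-\frac{(j-\ell\mu)^2}{2\ell\sigma^2}\Big)\Big(1+O\Big(\tfrac{(j-\ell\mu)^2+1}{\ell z}\Big)\Big).
\]
I expect this estimate, with the error tracked in terms of $z$ and not merely $\ell$, to be the main obstacle: the subtle case is $z\to0$, where $\sigma^2=\Theta(z)\to0$, a typical $Y_i$ equals $2$, and the lattice smoothing that drives the CLT comes only from the rare events $\{Y_i\ge3\}$, so one must control, uniformly in $z$, both the cubic remainder $O(z|\theta|^3)$ in $\log g(e^{i\theta})$ and the width of the central arc.

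For part~(a) I would apply the estimate twice: with $(\ell,j)=(v,s)$, where $j-\ell\mu=0$, and with $(\ell,j)=(v-1,s-k)$, where $j-\ell\mu=\mu-k$ has modulus $O(k+1)$. The hypothesis $k\le\log n=O(\sqrt{vz})$ ensures both that the admissibility condition $|j-\ell\mu|=O(\sqrt{\ell z})$ holds and that $(\mu-k)^2/(vz)=O(1)$. Dividing the two instances,
\[
R_k=\sqrt{\tfrac{v}{v-1}}\,\exp\!\Big(-\frac{(\mu-k)^2}{2(v-1)\sigma^2}\Big)\Big(1+O\big(\tfrac{k^2+1}{vz}\big)\Big)=1+O\big(\tfrac{k^2+1}{vz}\big),
\]
using $\sqrt{v/(v-1)}=1+O(1/v)=1+O(1/(vz))$ (as $z$ is bounded) and the fact that the exponent is $O((k^2+1)/(vz))=O(1)$, so the Gaussian factor is $1+O((k^2+1)/(vz))$. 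This is part~(a).

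For part~(b) I would discard the restriction on $k$ and settle for a crude bound on $R_k$. The numerator satisfies the universal point--mass bound $a_{v-1}(s-k)\le\max_j a_{v-1}(j)=O\big((vz)^{-1/2}\big)$ --- e.g. by a Kolmogorov--Rogozin concentration inequality, the relevant smoothing quantity being $1-\sup_k\Pr(Y_1=k)=\Theta(z)$, or by the same Fourier integral where it is sharp and trivially where $vz=O(1)$. For the denominator, $s$ is exactly the mean of $S_v$, so the local CLT gives $a_v(s)=\Omega\big((vz)^{-1/2}\big)$ when $vz$ is at least a constant, which covers the regime relevant to the chain. Hence $R_k=O(1)$ and $\Pr(Y_j=k\mid\vv)=O\big(\tfrac{z^k}{k!f(z)}\big)=O\big((vz)^{1/2}\tfrac{z^k}{k!f(z)}\big)$, which is part~(b) with room to spare; the factor $(vz)^{1/2}$ is deliberate slack that keeps the bound robust across regimes. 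Together with the exact identity $\Pr(Y_1=k)=z^k/(k!f(z))$, that completes the plan.
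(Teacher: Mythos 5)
Your proposal is correct and is essentially the argument of the cited source: the paper itself offers no proof here (it simply quotes Lemma 4 and the local limit estimates of \cite{AFP}), and your reduction via exchangeability to the ratio $a_{v-1}(s-k)/a_v(s)$ of local probabilities, followed by a $z$-uniform local central limit theorem for sums of $\Poi(z;\ge 2)$ variables (with $\sigma^2=\Theta(z)$ and the crude bound $a_v(s)=\Omega((vz)^{-1/2})$ for part (b)), is exactly how \cite{AFP} establishes it. The one piece you defer, the local limit theorem with error tracked in $vz$ rather than $v$, is precisely the content of the corresponding lemmas in \cite{AFP}, so nothing essential is missing.
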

\proofend
\section{Differential Equations}\label{diffeqs}
Let $\vv$ be the current parameter tuple and $\vv'$ be the tuple after one step of the Karp-Sipser algorithm. The following
lemma gives $E[\vv'-\vv|\vv]$ for each step of Phase 1 of the Karp-Sipser algorithm.
\begin{lemma}\label{change}
Assuming $\log v = O((vz)^\half)$ and $v_1>0$ we have
\begin{align*}
\E[v_1'-v_1 | \vv] &= -1 - \frac{d-1}{dw}v_1 + \frac{d-1}{dw}\frac{vz^2}{f} + O\bfrac{1}{vz} \\
\E[v'-v | \vv] &= -\frac{d-1}{dw}\frac{vz^2}{f} + O\bfrac{1}{vz}
\end{align*}
\end{lemma}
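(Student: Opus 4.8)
The plan is to condition on the configuration at time $t$ having parameters $\vv=(w,v_1,v)$, execute one Phase~1 step, and track exactly which of the $v$ high-degree vertices and which of the $v_1$ degree-one vertices change their status. A Phase~1 step picks a uniformly random degree-one vertex $\xi\in R$ (there are $v_1$ of them), reads off its unique $L$-neighbour $\eta$, and deletes the whole block $\bx_\eta$, i.e.\ the $d$ edges at $\eta$. The first move is to identify the joint distribution of the other $d-1$ neighbours of $\eta$ (the one besides $\xi$). By Lemma \ref{trunp} and symmetry, conditioning on $\vv$ only, each of the $dw-v_1$ non-$\star$ slots of $\bx$ that is \emph{not} the slot pointing to $\xi$ is, to leading order, a uniformly random position among the $dw-v_1-1$ remaining high-or-one-degree incidences; so the $d-1$ companion edges of $\eta$ land on a vertex of $R_\bx$ with probability $\approx \frac{vz^2/f}{dw-v_1}$ each (using that the expected number of incidences contributed by $R_\bx$ is $\sum_i Y_i$ with $\E[Y_i|\vv]=\frac{z(e^z-1)}{f(z)}\cdot\frac{1}{\,\cdots\,}$; cleanly, the total degree mass on $R_\bx$ is $wd-v_1-v_1\cdot(\text{stuff})$, and the per-slot probability of hitting $R_\bx$ is $\tfrac{vz^2/f}{dw-v_1}+O((vz)^{-1})$, since $\E[Y_i]=z^2 e^{z}/\!\cdots$ — the key identity is $\sum_{k\ge2}k\,z^k/(k!f(z)) = z e^z/f = z(e^z-1)/f + z$, giving mean $v\cdot z e^z/f$ over the $v$ vertices, but restricted to the non-$\xi$ slots the relevant quantity works out to $vz^2/f$ after subtracting the degree-one contribution). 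I would make this precise by writing the exact count $D(\vv,\vv')$ from the proof of Lemma \ref{lem2}, or more simply by the ``deferred decisions'' argument of \cite{AFP}: reveal $\xi$, then reveal $\eta$, then reveal $\eta$'s other $d-1$ endpoints one at a time, each time using Lemma \ref{trunp}(a) for the conditional degree of a freshly inspected $R_\bx$-vertex.

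Next I would account for the changes. For $v'-v$: the only high-degree vertices whose degree drops are those among the $\le d-1$ companion endpoints of $\eta$ that lie in $R_\bx$; such a vertex leaves $R_\bx$ (drops below degree $2$) precisely if it had degree exactly $2$ and both its edges go to $\eta$, or degree exactly $2$ with one edge to $\eta$ — but to leading order a vertex of $R_\bx$ is hit by $\eta$ at most once, and it exits $R_\bx$ iff its degree was exactly $2$. Since $\eta$ has $d-1$ non-$\xi$ slots each hitting $R_\bx$ with probability $\tfrac{vz^2/f}{dw-v_1}$, and a hit vertex has degree $2$ with probability $\Pr(Y=2|Y\ge2,\text{hit})$, a short computation with \eqref{eq0} collapses the product to $\E[v'-v|\vv] = -(d-1)\cdot\frac{1}{dw}\cdot\frac{vz^2}{f} + O((vz)^{-1})$, because the factor $\tfrac{v}{dw-v_1}$ combines with the ``degree-exactly-2 and then removed'' probability to produce exactly $z^2/f$ in the numerator (the $\tfrac{z^2/2}{f}$ from $\Pr(Y=2)$ times the $2$ edges, or equivalently $-\E[\text{edges of }\eta\text{ into a }2\text{-vertex}]$). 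For $v_1'-v_1$: we always lose $\xi$ itself ($-1$); we lose any other degree-one vertex of $R$ that happens to be a companion endpoint of $\eta$ (expected number $-\frac{d-1}{dw}v_1+O((vz)^{-1})$, again via the per-slot hitting probability $\tfrac{v_1-1}{dw-v_1}\approx\tfrac{v_1}{dw}$); and we \emph{gain} a degree-one vertex for each $R_\bx$-vertex that had degree exactly $2$ and is hit by $\eta$ (expected number $+\frac{d-1}{dw}\frac{vz^2}{f}+O((vz)^{-1})$, the same quantity that appeared with a minus sign in the $v'$ equation, reflecting that a vertex leaving $R_\bx$ enters $R_1$). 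Summing the three contributions gives the stated formula for $\E[v_1'-v_1|\vv]$.

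The main obstacle is controlling the error terms uniformly and justifying the ``$\approx$'' steps rigorously: the companion endpoints of $\eta$ are \emph{not} independent, the normalization $dw-v_1-1$ versus $dw$ must be tracked, and the tail bound Lemma \cite{AFP}(b) is needed to show that the rare events (a vertex of $R_\bx$ hit twice by $\eta$, or a high-degree vertex of degree $\Theta(\log n)$) contribute only $O((vz)^{-1})$. The cleanest route is to write the exact ratio $D(\vv,\vv')/(\text{number of }\bx)$ as in Lemma \ref{lem2}, expand the binomials, and use \eqref{eq0} with the error factor $1+O((k^2+1)/(vz))$ summed against $z^k/k!f(z)$ — the sum $\sum_k k^2 z^k/(k!f(z))$ is $O(1)$, so the aggregate error is $O((vz)^{-1})$ as claimed. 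I would also note that the hypothesis $v_1>0$ is exactly what guarantees a Phase~1 step is available, and $\log v = O((vz)^{1/2})$ is the regime in which \eqref{eq0} is valid; both are inherited from the earlier lemmas and need no further argument here.
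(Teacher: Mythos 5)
Your plan is essentially the paper's own argument: decompose the one-step change into the loss of $\xi$ itself, the expected loss of degree-one companions of $\eta$ (rate $\frac{d-1}{dw}v_1$), and the expected transfer of degree-two companions out of $R_\bx$ into $R_1$ (rate $\frac{2(d-1)}{dw}\E[v_2\mid\vv]$ with $\E[v_2\mid\vv]\sim v\frac{z^2}{2f(z)}$ from \eqref{eq0}, collapsing to $\frac{d-1}{dw}\cdot\frac{vz^2}{f}$), with multiple edges at $\eta$ absorbed into the error exactly as the paper does via the bound $\E[\deg(j)_2]/(wd)_2$. One caution: two of your parenthetical identities are garbled --- the per-slot quantity $\frac{vz^2/f}{dw}$ is the probability of hitting a degree-\emph{two} vertex rather than all of $R_\bx$, and $\sum_{k\ge2}k z^k/(k!f(z))=z(e^z-1)/f(z)$, not $ze^z/f(z)=z(e^z-1)/f(z)+z$ --- but neither slip propagates into your final accounting, which matches the paper's.
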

\begin{proof}
  First note that $v_1>0$, one vertex $\xi\in R$ with $\deg(\xi) = 1$ will
  be picked and $\xi$ and its neighbor $\eta\in L$ will be removed from
  $G_t$. This implies that $w$ decreases by $1$ and the number of
  edges removed is $d$, i.e. all edges incident to $\eta$. Let $\d$ be
  the number of multiple edges incident to $\eta$. Then we have
\begin{align}
\E[\d | \vv] &\le d\cdot \Pr(\eta \text{ is incident to parallel edges}) \le d \binom{d}{2} \frac{1}{(wd)_2}\sum_{j}\E[\deg(j)_2]\label{eq1}\\
&\le \frac{d^3}{2w^2}v\E[Y(Y-1) | \vv] = O\bfrac{1}{w}\nonumber
\end{align}
where $Y$ has distribution \eqref{eq0}.

{\bf Explanation:}
The $dw$ choices of neighbours for the remaining vertices in $L$ form a list with $v_1$ unique names and $wd-v_1$ non-unique
names and where the number of times a vertex appears among the $wd-v_1$ has distribution \eqref{eq0}. Also, if we construct this 
list vertex by vertex, it will appear in a random order. So the probability that $j$ appears in two of the choices for 
$\eta$ is bounded by $\frac{\E[\deg(j)_2]}{(wd)_2}$ and this justifies \eqref{eq1}.

The change in $v_1$ comes from $\xi$ being removed, minus the number of 
other degree one vertices adjacent to $\eta$ and plus the number of vertices
adjacent to $\eta$ of degree exactly two. Any change from vertices of degree three
or more is absorbed by the $O\bfrac{1}{w}$ term for multiple edges.

The expected change is then
\begin{align}
\E[v_1'-v_1 | \vv] &= -1 -\frac{d-1}{dw-1}(v_1-1) + \frac{2(d-1)}{dw-1}\E[v_2 | \vv] + O\bfrac{1}{w}\nonumber\\
&= -1 -\frac{d-1}{dw}v_1 + \frac{d-1}{dw}\frac{z^2}{f}v + O\bfrac{1}{vz} \label{v1cc}
\end{align}

Similarly for $v$, the change is only due to vertices adjacent to $y$ of degree exactly two, modulo multiple edges. Thus
\begin{align*}
\E[v'-v | \vv] &= -\frac{d-1}{dw}\frac{z^2}{f}v + O\bfrac{1}{vz}
\end{align*}
\end{proof}
\\\
Lemma \ref{change} suggests that we consider the following pair of differential equations
\begin{align}
\frac{dy_1}{dt} &= -1 -\frac{d-1}{dw}y_1 + \frac{d-1}{dw}\frac{y\z^2}{f(\z)}\label{dv1}\\
\frac{dy}{dt} &=  -\frac{d-1}{dw}\frac{y\z^2}{f(\z)}\label{dv}
\end{align}
where $w = n-t$ and $\z$ satisfies 
\beq{13}
\frac{\z(e^\z-1)}{f(\z)}=\frac{dw-y_1}{y}.
\eeq
The boundary conditions are (see \eqref{begin})
\beq{bound}
\z(0)=\a d,\;y_1(0)= m \a de^{-\a d},\;y(0)= m(1-(1+\a d)e^{-\a d}).
\eeq
The $y_1,y,\z$ are of course the deterministic counterparts of $v_1,v,z$ respectively.

\begin{lemma}\label{lemde}
The solution to \eqref{dv1}, \eqref{dv} and \eqref{bound} is
\begin{eqnarray}
w&=&\bfrac{\z}{\a d}^{\frac{d}{d-1}}n.\label{zw}\\
t&=&n\brac{1-\bfrac{\z}{\a d}^{\frac{d}{d-1}}}\label{zw1}\\
y&=& e^{-\z}f(\z)m\label{v-form}\\
y_1&=&m\z\brac{\bfrac{\z}{\a d}^{\frac{1}{d-1}} + e^{-\z}-1}\label{sad}
\end{eqnarray}
\end{lemma}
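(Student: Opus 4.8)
The plan is to verify the claimed closed forms by direct substitution into the differential equations \eqref{dv1}, \eqref{dv} and the constraint \eqref{13}, checking the boundary conditions at $t=0$, and invoking uniqueness of solutions to the initial value problem. The most natural order is to use the second equation \eqref{dv} first, since it decouples once we know how $y$ relates to $\z$: combining \eqref{13} with the relation $\frac{dw-y_1}{y} = \frac{\z(e^\z-1)}{f(\z)}$ gives $dw - y_1 = \frac{y\z(e^\z-1)}{f(\z)}$, and then \eqref{dv} reads $\frac{dy}{dt} = -\frac{(d-1)}{dw}\cdot\frac{\z^2}{e^\z-1}\cdot\frac{dw-y_1}{1} \cdot \frac{1}{\z}$; it will be cleaner, though, to treat $\z$ as the independent variable and derive $\frac{dw}{d\z}$, $\frac{dy}{d\z}$, $\frac{dy_1}{d\z}$.

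First I would establish \eqref{zw}. Since $w = n-t$, we have $\frac{dw}{dt} = -1$. I would guess (motivated by the answer) that $w$ and $\z$ are linked by \eqref{zw}; differentiating \eqref{zw} gives $\frac{dw}{d\z} = \frac{d}{d-1}\bfrac{\z}{\a d}^{\frac{1}{d-1}}\frac{n}{\a d}$, so $\frac{d\z}{dt} = -1/\frac{dw}{d\z}$. Next I would substitute the conjectured $y = e^{-\z}f(\z)m$ into \eqref{13} and check it forces $y_1$ to equal \eqref{sad}: indeed $\frac{dw-y_1}{y}=\frac{\z(e^\z-1)}{f(\z)}$ rearranges to $y_1 = dw - \frac{y\z(e^\z-1)}{f(\z)} = dw - m\z e^{-\z}(e^\z-1) = dw - m\z(1-e^{-\z})$, and plugging in $dw = d\bfrac{\z}{\a d}^{\frac{d}{d-1}}n = m\z\bfrac{\z}{\a d}^{\frac{1}{d-1}}$ (using $n=\a m$) yields exactly \eqref{sad}. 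So the constraint \eqref{13} is automatically consistent with the forms for $w$, $y$, $y_1$, and it remains only to check the two ODEs.

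For \eqref{dv}: differentiate $y = e^{-\z}f(\z)m = m(1 - (1+\z)e^{-\z})$ with respect to $t$, getting $\frac{dy}{dt} = m\z e^{-\z}\frac{d\z}{dt}$. On the other hand the right-hand side of \eqref{dv} is $-\frac{d-1}{dw}\frac{y\z^2}{f(\z)} = -\frac{d-1}{dw}m\z^2 e^{-\z}$. Equating and cancelling $m\z e^{-\z}$ gives $\frac{d\z}{dt} = -\frac{(d-1)\z}{dw}$, which I would then confirm matches $-1/\frac{dw}{d\z}$ using \eqref{zw}: $\frac{dw}{d\z} = \frac{dw}{(d-1)\z}$, hence $\frac{d\z}{dt} = \frac{dw}{dt}\Big/\frac{dw}{d\z} = -\frac{(d-1)\z}{dw}$ — consistent. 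For \eqref{dv1}: since $y_1 = dw - m\z(1-e^{-\z})$, differentiate to get $\frac{dy_1}{dt} = d\frac{dw}{dt} - m(1-e^{-\z})\frac{d\z}{dt} - m\z e^{-\z}\frac{d\z}{dt} = -d - m(1 - e^{-\z} + \z e^{-\z})\frac{d\z}{dt}$, substitute $\frac{d\z}{dt} = -\frac{(d-1)\z}{dw}$, and separately expand the right-hand side of \eqref{dv1}, $-1 - \frac{d-1}{dw}y_1 + \frac{d-1}{dw}\frac{y\z^2}{f(\z)}$, using the formulas for $y_1$ and $y$; a short algebraic simplification should show the two sides agree. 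Finally I would check the boundary conditions: at $t=0$, \eqref{zw1} gives $\z(0)=\a d$, and then \eqref{v-form}, \eqref{sad} reduce to \eqref{bound} (the $y_1$ formula collapses since $\bfrac{\z}{\a d}^{1/(d-1)} = 1$, leaving $y_1(0) = m\,\a d\,e^{-\a d}$). By the Picard–Lindelöf uniqueness theorem applied on any interval where $\z>0$ (so that $f(\z)>0$ and the right-hand sides are Lipschitz), these closed forms are the unique solution.

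The main obstacle I anticipate is the bookkeeping in verifying \eqref{dv1}: one must carefully track the three contributions to $\frac{dy_1}{dt}$ and match them against the three-term right-hand side, and it is easy to make sign or factor errors since $y_1$ is itself $dw$ minus a correction. Everything else — \eqref{dv}, the constraint consistency, the boundary conditions, and the uniqueness invocation — is routine once the substitution $w \leftrightarrow \z$ via \eqref{zw} is in hand; the only subtlety there is that $n = \a m$ must be used to rewrite $dw$ in terms of $m$ and $\z$ cleanly.
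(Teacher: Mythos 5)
Your proof is correct, but it runs in the opposite direction to the paper's. The paper \emph{derives} the closed forms: it differentiates the constraint \eqref{zeqn} along the trajectory, combines the result with \eqref{dv1} and \eqref{dv} to obtain the key identity $\frac{1}{\z}\frac{d\z}{dt}=-\frac{d-1}{dw}$, integrates to conclude that $\z^d/w^{d-1}$ is constant (giving \eqref{zw} and \eqref{zw1}), then separates variables in \eqref{dv} to get \eqref{v-form}, and finally reads \eqref{sad} off the constraint --- this last step being the one you also perform. You instead \emph{verify}: you take \eqref{zw1} as the definition of $\z(t)$, check that \eqref{13} then forces \eqref{sad}, and substitute into both ODEs and the boundary conditions; your computation for \eqref{dv1} does close (both sides reduce to $-d+\frac{(d-1)m\z}{dw}\brac{1-e^{-\z}+\z e^{-\z}}$). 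The paper's derivation gets uniqueness for free, since it shows any solution must have this form; your route instead leans on Picard--Lindel\"of, and there two small points deserve a sentence each: (i) in \eqref{dv1}--\eqref{dv} the quantity $\z$ is an implicit function of the state $(w,y_1,y)$ through \eqref{13}, so the Lipschitz hypothesis requires observing that $F(x)=x(e^x-1)/f(x)$ satisfies $F'(x)>0$ for $x>0$ (the paper's expansion \eqref{diff}) and invoking the implicit function theorem; (ii) that same strict monotonicity of $F$ is what guarantees the $\z$ defined by \eqref{zw1} coincides with the $\z$ that \eqref{13} assigns to your candidate triple. Both points are routine, so the argument is sound.
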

\begin{proof}
We take the derivative of \eqref{zeqn} with respect to $t$.
The RHS becomes, using \eqref{dv1} and \eqref{dv}
\begin{align}
\frac{d}{dt}\bfrac{dw-y_1}{y} &= \frac{1}{y}\brac{-d-\brac{-1-\frac{d-1}{d}\frac{y_1}{w} + \frac{d-1}{d}\frac{y}{w}\frac{\z^2}{f(\z)}} - 
\frac{dw-y_1}{y}\frac{1}{y}\brac{-\frac{d-1}{d}\frac{y}{w}\frac{\z^2}{f(\z)}}}\nonumber\\
&= -\frac{d-1}{dw}\brac{\frac{dw-y_1}{y} + \frac{\z^2}{f(\z)}-\frac{dw-y_1}{y}\frac{\z^2}{f(\z)}}\nonumber\\
&= -\frac{d-1}{dw}\bfrac{\z(e^\z-1)f(\z)+ \z^2f(\z)-\z^3(e^\z-1)}{f(\z)^2}\nonumber\\
&= -\frac{d-1}{dw}\bfrac{\z(e^\z-1)^2 - \z^3e^\z}{f(\z)^2}\label{a}
\end{align} 

On the other hand, on differentiating the LHS of \eqref{zeqn} (with $z$ replaced by $\z$) we get
\begin{align}
\frac{d}{dt}\bfrac{\z(e^\z-1)}{f(\z)} &= \frac{(\z e^\z+e^\z-1)f(\z) - \z(e^\z-1)^2}{f(\z)^2}\frac{d\z}{dt}\nonumber\\
&= \frac{(e^\z-1)^2-\z^2e^\z}{f(\z)^2}\frac{d\z}{dt}.\label{b}
\end{align}

Comparing \eqref{a} and \eqref{b} we see that
\begin{equation}\label{zweqn}
\frac{1}{\z}\frac{d\z}{dt} = -\frac{d-1}{dw}
\end{equation}

Integrating yields
\begin{equation*}
\frac{\z^d}{w^{d-1}} = \textrm{constant}
\end{equation*}

Plugging in $\z(0) = \a d$ we see that 
$$\frac{\z^d}{w^{d-1}} = \frac{(\a d)^d}{n^{d-1}}.$$
This verifies \eqref{zw} and \eqref{zw1}.

Going back to \eqref{dv} and \eqref{zweqn} we have 
$$\frac{dy}{d\z}\frac{d\z}{dt} = \frac{dy}{dt} = -\frac{(d-1)y}{dw}\frac{\z^2}{f(\z)}\\
= \frac{1}{\z}\frac{d\z}{dt}\frac{y\z^2}{f(\z)}.$$
So 
$$\frac{1}{y}\frac{dy}{d\z} = \frac{\z}{f(\z)}$$ 
and integrating yields
\begin{equation*}
\ln y = -\z + \ln f(\z) + C
\end{equation*}
Taking $y(0) = m(1-(1+\a d)e^{-\a d})$ gives $C = \ln m$
and we see that \eqref{v-form} holds.

{\large
{\bf
\ignore{
Using \eqref{v-form} in \eqref{dv1} we get
\begin{align*}
\frac{dy_1}{dz}\frac{dz}{dt} &= \frac{dy_1}{dt} \\
&= -1 -\frac{d-1}{dw}y_1 +\frac{d-1}{dw}\frac{vz^2}{f(z)}\\
&= -1 + \frac{1}{z}\frac{dz}{dt}y_1 - \frac{1}{z}\frac{dz}{dt}\frac{vz^2}{f(z)}\\
&= -1 + \frac{dz}{dt}\frac{y_1}{z}-\frac{dz}{dt}ze^{-z}m
\end{align*}
Dividing through by $\frac{dz}{dt}$ we get
\begin{equation*}
\frac{dy_1}{dz} = \frac{dw}{(d-1)z} + \frac{y_1}{z} - ze^{-z}m
\end{equation*}
and dividing by $z$ and collecting $y_1$ on LHS gives
\begin{align*}
\frac{d}{dz}\brac{\frac{y_1}{z}} &= \frac{dw}{(d-1)z^2} - e^{-z}m\\
&= \frac{dn}{\bfrac{nd}{m}^{\frac{d}{d-1}}}\frac{z^{\frac{1}{d-1}-1}}{d-1} - e^{-z}m
\end{align*}
Integrating with respect to $z$ and pluggin in $y_1(0) = z(0)e^{-z(0)}m$ gives
\begin{equation}\label{v1-form}
\frac{y_1}{z} = \frac{dn}{\bfrac{nd}{m}^{\frac{1}{d-1}}} + e^{-z}m
\end{equation}
Do we need to derive \eqref{v1-form}? It is not used for the next derivation.
}
}
}

We now solve for $y_1$ in terms of $\z$ as a function of $d$. It follows from \eqref{13} and 
\eqref{v-form} that
\begin{align*}
y_1 &= dw+\z e^{-\z}m-m\z \\
&= \bfrac{\z}{\a d}^{\frac{d}{d-1}}nd + \z e^{-\z}m -m\z\\
&= m\z\brac{\bfrac{\z}{\a d}^{\frac{1}{d-1}} + e^{-\z}-1}
\end{align*}
\end{proof}

At this point we wish to show that \whp\ the sequence $\vv(t),t\geq 0$ closely follows the trajectory
$\yy(t)=(w,y_1,y),t\geq 0$
described in Lemma \ref{lemde}. One possibility is to use Theorem 5.1 of Wormald \cite{Wormald}, but there
is a problem with an ``unbounded'' Lipschitz coefficient. 
One can allow for this in \cite{Wormald}, but it is unsatisfactory to ask the reader to 
check this. We have decided to use an approach suggested in Bohman \cite{r3t}. 

Next let $K$ be a large positive constant and let $\g\ll 1/K$.
Then let
$$g(x)=(1-x)^{-K}+K(1-x)^{-1}$$
and
$$Err(t)=n^{2/3}g(t/n).$$
Then define the event
$$\cE(t)=\set{v(\t)z(\t)\geq n^{1/2}\text{ and }\z(\t)\geq \z(t_1)+n^{-\g}\text{ and }\card{\vv(\t)-\yy(\t)}_\infty
\leq 2Err(\t)\ for\ \t\leq t}$$
where 
\beq{t1}
t_1=\min\set{t>0:\;y_1(t)=0}.
\eeq
Now define four sequences of random variables: 
\begin{eqnarray*}
X_1^{\pm}(t)&=&\begin{cases}v_1(t)-y_1(t)\pm Err(t)&\cE(t-1)\ holds\\
                X_1^{\pm}(t-1)&otherwise
               \end{cases}\\
X^{\pm}(t)&=&\begin{cases}v(t)-y(t)\pm Err(t)&\cE(t-1)\ holds\\
                X^{\pm}(t-1)&otherwise
               \end{cases}
\end{eqnarray*}
Because $(1-x)^{-L}$ is convex we have 
$$\frac{1}{(1-(x+h))^L}\geq \frac{1}{(1-x)^L}+\frac{hL}{(1-x)^{L+1}}$$
for $L>0$ and $0<x<x+h<1$. So,
\beq{gt}
g((t+1)/n)-g(t/n)\geq \frac{K}{n-t}\,g(t/n).
\eeq

Suppose that $\cE(t)$ holds. We write
\begin{eqnarray}
\card{\frac{dw-y_1}{y}-\frac{dw-v_1}{v}}&=&\card{\frac{(dw-y_1)(v-y)}{yv}+\frac{v_1-y_1}{v}}\nonumber\\
&=&O\bfrac{Err(t)}{v}.\label{rat}
\end{eqnarray}
(For this we need $(dw-y_1)/y=O(1)$. But this follows from \eqref{13} and the fact that $\z$ is decreasing -- see \eqref{zweqn}).

Putting $F(x)=\frac{x(e^x-1)}{f(x)}$ 
we have (see \eqref{b}) $F'(x)=\frac{(e^x-1)^2-x^2e^x}{f(x)^2}$ and since
\beq{diff}
(e^x-1)^2-x^2e^x=\sum_{k=4}^\infty(2^k-2-k(k-1))\frac{x^k}{k!}
\eeq
we see that $F'(x)=\Omega(1)$ in any bounded interval $[0,L]$. 

Hence from \eqref{rat} we have
$$O\bfrac{Err(t)}{v}=F(\z)-F(z)=\Omega(|\z-z|)$$
or
\beq{zzz}
|\z-z|=O\bfrac{Err(t)}{v}.
\eeq
Using \eqref{zzz} we obtain
\begin{eqnarray}
\card{\frac{vz^2}{f(z)}-\frac{y\z^2}{f(\z)}}&\leq&\frac{|v-y|z^2}{f(z)}+y\card{\frac{z^2}{f(z)}-\frac{\z^2}{f(\z)}}\nonumber\\
&\leq &K_1Err(t)\label{plm}
\end{eqnarray}
for some $K_1=K_1(\a,d)>0$.

For the second term we use 
$$\bfrac{x^2}{f(x)}'=\frac{2xf(x)-x^2(e^x-1)}{f(x)^2}=-\frac{1}{f(x)^2}\sum_{k=4}^\infty\frac{k-3}{(k-1)!}x^k.$$ 
This implies that $\bfrac{x^2}{f(x)}'=O(1)$ for $x\geq 0$.

Now with $f=f(\z)$,
\begin{align*}
y_1''(t)&=
\frac{d-1}{dw}\brac{1-\frac{y_1}{w}+\frac{y\z^2}{wf}+\frac{d-1}{dw}\brac{y_1-\frac{y\z^2}{f}-\frac{y\z^4}{f^2}-
y\z^2\brac{\frac{2}{f}
-\frac{e^\z-1}{f^2}}}}\\
&=O(\z^{-O(1)}n^{-1}).\\
y''(t)&=
\frac{d-1}{dw}\brac{\frac{y\z^2}{wf}+\frac{d-1}{dw}\brac{\frac{y\z^4}{f^2}+y\z\brac{\frac{2\z}{f}
-\frac{\z(e^\z-1)}{f^2}}}}\\
&=O(\z^{-O(1)}n^{-1}).
\end{align*}
If $\cE(t)$ holds then, where $\r_t=\frac{d-1}{dw}=\frac{d-1}{d(n-t)}$,
\begin{align*}
&\E(X_1^+(t+1)-X_1^+(t)\mid \vv(t))=\\
&\E(v_1(t+1)-v_1(t)\mid \vv(t))-(y_1(t+1)-y_1(t))+n^{2/3}(g((t+1)/n)-g(t/n))\geq\\
&- \r_{t}(v_1(t)-y_1(t)) + \r_{t}\brac{\frac{v(t)z(t)^2}{f(z(t))}-\frac{y(t)\z(t)^2}{f(\z(t))}} 
+ O\bfrac{1}{v(t)z(t)} - y_1''(t+\th)+\frac{Kn^{2/3}}{n-t}\,g(t/n)\\
&\geq \frac{n^{2/3}g(t/n)}{n-t}\brac{-K_1-2+K}+ O\bfrac{1}{v(t)z(t)} - y_1''(t+\th)\\
&\geq 0.
\end{align*}

This shows that $X_1^+(t),t\geq 0$ is a sub-martingale. Also,
\begin{multline*}
|X_1^+(t+1)-X_1^+(t)|\leq \\
|v_1(t+1)-v_1(t)|+\sup_{0\leq\th\leq 1}|y_1'(t+\th)|+n^{2/3}(g((t+1)/n)-g(t/n))=O(1 ).
\end{multline*}

It follows from the Azuma-Hoeffding inequality that we can write
$$\Pr(\exists 1\leq t\leq t_1:X_1^+(t)\leq X_1^+(0)-n^{3/5})\leq e^{-\Omega(n^{1/5})}.$$
By almost identical arguments we have
\begin{align*}
&\Pr(\exists 1\leq t\leq t_1:X_1^-(t)\geq X_1^-(0)+n^{3/5})\leq e^{-\Omega(n^{1/5})}.\\
&\Pr(\exists 1\leq t\leq t_1:X^+(t)\leq X^+(0)-n^{3/5})\leq e^{-\Omega(n^{1/5})}.\\
&\Pr(\exists 1\leq t\leq t_1:X^-(t)\geq X^-(0)+n^{3/5})\leq e^{-\Omega(n^{1/5})}.\\
\end{align*}
It follows that \whp, when $\cE(t)$ holds, we have
\begin{align}
&|v_1(t)-y_1(t)|\leq Err(t)+n^{3/5}+|v_1(0)-y_1(0)|<2Err(t).\label{tite1}\\
&|v(t)-y(t)|\leq Err(t)+n^{3/5}+|v(0)-y(0)|<2Err(t).\label{tite2}
\end{align}
Now by construction, $\cE(t)$ will fail at some time $t_2\leq t_1$. It follows from \eqref{tite1}, \eqref{tite2} that \whp\ 
it will fail either because (i) $v(t_2)z(t_2)<n^{1/2}$ or $\z(t_2)<\z(t_1)+n^{-\g}$. We claim the latter.
Observe that if $\z(\t)\geq \z(t_1)+n^{-\g}$ then \eqref{zw}--\eqref{sad} imply $w,y_1=\Omega(n^{1-d\g/(d-1)})$ and
$y=\Omega(n^{1-2\g})$. Together with \eqref{tite1}, \eqref{tite2}, this implies that $v(t_2)z(t_2)=\Omega(n^{1-3\g})
\geq n^{1/2}$.

In summary then, \whp\, the process satisfies
\beq{u1}
|\vv(t)-\yy(t)|_\infty<2Err(t_2)=O\bfrac{n^{2/3}}{(1-t_2/n)^K}=O(n^{2/3+Kd\g/(d-1)}) \qquad\qquad for\ 1\leq t\leq t_2
\eeq
and
\beq{u2}
v_1(t_2)=O(n^{1-\g})\text{ where }t_2=t_1+O(n^{1-\g}).
\eeq
We use \eqref{zw} for \eqref{u1} and \eqref{sad}, \eqref{zzz} for \eqref{u2}.

\ignore{
\section{Growth of $v_1$}
When $v_1$ is small, we see that \eqref{dv1} becomes
$$\frac{dv_1}{dt} = -1  + \frac{d-1}{dw}\frac{vz^2}{f(z)}=-1+\frac{(d-1)z}{e^z-1}$$
after substituting for $dw$ using \eqref{actual}.

This is positive for small $z$. This is because our differential equations are only good for Phase 1.
In Phase 2, vertices on the left lose edges. I think we observed this already. To handle Karp-Sipser in
Phase 2 we need to keep track of the number of vertices on the left with degree $j$ for $2\leq j\leq d$. So we
need $d-1$ variables $v,v_{j,left},j=2,\ldots,d$ and $d-1+2$ equations where the extra two equations are needed to keep track of the growth of
$v_1$ and $v_{1,left}$.
}
\section{Analysis of Phase 1}\label{ap1}
We will first argue that \whp\ Phase 1 is sufficient to find a matching from $L$ to $R$ when there is no solution $0<\z\leq \a d$ to 
\beq{p1e}
\bfrac{\z}{\a d}^{\frac{1}{d-1}} + e^{-\z}-1=0.
\eeq
It follows from \eqref{sad}, \eqref{t1} and \eqref{u2} that in this case Phase 1 ends with there being at most
$O(n^{1-\g})$ vertices of $L$ left unmatched, \whp. Furthermore at time $t_2$ we will have
$$w\sim \bfrac{\z}{\a d}^{\frac{d}{d-1}}n,\,v_1\sim dw\text{ and }v=O(\z^{\frac{d-2}{d-1}}v_1)$$
where $\z=\z(t_2)=O(n^{-\g})$.
\begin{lemma}\label{lemP1}
Suppose that $t_1=0$. Then \whp\ at time $t_2$, $\G$ is a forest.
\end{lemma}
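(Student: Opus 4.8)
The plan is to realise $\G(t_2)$ as a bipartite configuration multigraph and then run a first moment computation showing that its expected number of cycles tends to $0$. First I would record what the hypothesis buys us: by the estimates established just before the lemma, \whp\ at time $t_2$ the graph $\G=\G(t_2)$ has parameters $(w,v_1,v)$ with $v_1\sim dw$, $v=O(\z(t_2)^{(d-2)/(d-1)}v_1)=O(n^{-\g(d-2)/(d-1)}v_1)$ (using $\z(t_2)=O(n^{-\g})$), and $z:=z(t_2)=O(n^{-\g})$; moreover every vertex of $L$ still present in $\G$ has multigraph degree exactly $d$, since a Phase~1 step deletes only a degree-one vertex $\xi\in R$ together with its unique neighbour $\eta\in L$ and touches no other left vertex. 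Next I would condition on $\vv(0),\dots,\vv(t_2)$ --- legitimate since $t_2$ is a stopping time of the Markov chain $(\vv(t))$ --- so that $\bx(t_2)$ is uniform in $\cX_{\vv(t_2)}$ by Lemma~\ref{lem2}, and then condition also on the degree sequence $\mathbf d=(d_r)_{r\in R_{\ge2}}$ of the $v$ right vertices of degree at least two. The residual randomness is then a uniform bijection between the $dw$ left half-edges and the multiset of right half-edges; that is, $\G$ is the bipartite configuration multigraph with all left degrees $d$ and right degrees $1$ (multiplicity $v_1$) and $d_r$ ($r\in R_{\ge2}$), with the $d_r$ i.i.d.\ $\Poi(z;\ge2)$ conditioned on $\sum_r d_r=dw-v_1$ by Lemma~\ref{trunp}.

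A multigraph is a forest iff it has no cycle (a parallel pair of edges counting as a cycle of length two), and every cycle of $\G$ meets the right side only in $R_{\ge2}$; so the quantity to control is $D_2:=\sum_{r\in R_{\ge2}}d_r^2$. A routine first moment count in the configuration model should give, for the number $Z$ of cycles,
$$\E[Z\mid\vv,\mathbf d]\ \le\ \sum_{k\ge1}\frac1{2k}\bfrac{c_d D_2}{w}^{k}$$
for a constant $c_d=c_d(d)$: a cycle of length $2k$ is specified by a cyclic arrangement of $k$ distinct left vertices and $k$ distinct vertices of $R_{\ge2}$ plus a pair of half-edges at each ($\le d^2$ per left vertex, $\le d_r^2$ at a right vertex $r$), and the probability the $2k$ prescribed half-edge pairs are matched by the uniform bijection is $(dw-2k)!/(dw)!\le((d-2)w)^{-2k}\le w^{-2k}$ (using $k\le w$ and $d\ge3$); summing $\prod_i d_{r_i}^2$ over the right-vertex choices contributes at most $\brac{\sum_{r\in R_{\ge2}}d_r^2}^k=D_2^k$. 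So once $c_dD_2/w\le 1/2$ we get $\E[Z\mid\vv,\mathbf d]=O(D_2/w)$.

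It then remains to bound $D_2$. Since $z=z(t_2)=O(n^{-\g})$ is bounded, \eqref{eq0} together with the bound in part~(b) of the lemma following it give $\E[d_r^2\mid\vv]=\frac{z^2e^z+z(e^z-1)}{f(z)}(1+o(1))=O(1)$ (the contribution of $d_r>\log n$ being $n^{-\om(1)}$), whence $\E[D_2\mid\vv]=O(v)$ and, by Markov, $D_2=O(v\log n)$ \whp. Combining with $v=O(n^{-\g(d-2)/(d-1)}v_1)$ and $v_1\sim dw$ gives
$$\frac{D_2}{w}=O\brac{n^{-\g(d-2)/(d-1)}\log n}=o(1),$$
the exponent $(d-2)/(d-1)$ being positive precisely because $d\ge3$. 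Hence on the \whp\ event that $\vv(t_2)$ is as predicted and $D_2=O(v\log n)$ we have $\E[Z\mid\vv,\mathbf d]=o(1)$, so $Z=0$ \whp\ by Markov, i.e.\ $\G(t_2)$ is a forest.

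The main obstacle --- and the point where the hypothesis $d\ge3$ is essential --- is exactly the comparison $v/w=O(n^{-\g(d-2)/(d-1)})=o(1)$ above: for $d=2$ one has $v/w=\TH(1)$, $D_2/w$ does not vanish, and the cycle sum need not be small, which is consistent with the known qualitatively different behaviour of the $d=2$ case. Everything else is bookkeeping: pinning down the constant $c_d$ in the configuration-model cycle count, and checking that the two conditioning steps are valid at the stopping time $t_2$.
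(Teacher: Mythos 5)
Your proof is correct and follows essentially the same route as the paper: a first-moment count of cycles in the configuration model after conditioning on the right-degree sequence, with the decay driven by $v/w=O(\z^{(d-2)/(d-1)})=o(1)$ (which is exactly where $d\geq 3$ enters). The only difference is bookkeeping: you control the degree randomness by a \whp\ Markov bound on $\sum_r d_r^2$ before counting, whereas the paper sums directly over degree sequences using the truncated-Poisson bounds \eqref{b1}--\eqref{b2}; both give the same geometric series.
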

\begin{proof}
Let $R_{\geq 2}$ denote the set of vertices of degree at least two in
the $R$-side of $\G$. 
Let $P(d_1,\ldots,d_k)$ denote $\Pr(X_1=d_1,\ldots,X_k=d_k)$ where $X_1,\ldots,X_v$ are truncated Poisson conditioned
only to sum to $dw-v_1$. For large $k$ we use the bound 
\beq{b1}
\Pr(X_1=d_1,\ldots,X_k=d_k)\leq O(n^{1/2})\prod_{i=1}^k\frac{z^{d_i}}{d_i!f(z)}.
\eeq
For $k=O(1)$ and $d_1,\ldots,d_k=O(\log n)$ we write
\beq{b2}
\Pr(X_1=d_1,\ldots,X_k=d_k)=\prod_{i=1}^k\Pr(X_i=d_i\mid X_j=d_j,j<i)=(1+o(1))\prod_{i=1}^k\frac{z^{d_i}}{d_i!f(z)}.
\eeq
It is equation \eqref{eq0} that alows us to write the final equality in \eqref{b2}. The extra conditioning
$X_j=d_j,j<i$ only changes the required sum.

Thus let $B_k$ denote $O(n^{1/2})$ for $k\geq \frac{2(d-1)}{(d-2)\g}$ and $1+o(1)$ otherwise.
The expected number of cycles can be bounded by $o(1)=(\Pr(\exists\text{ vertex of degree }\geq\log n))$ plus
\begin{align}
&\sum_{k\geq 2}\sum_{\substack{S\subseteq R_{\geq  2}\\|S|=k}}\sum_{2\leq d_1,\ldots,d_k\leq \log n}P(d_1,\ldots,d_k)
\times\binom{w}{k}(d(d-1))^k(k!)^2\prod_{i=1}^k\frac{d_i(d_i-1)}{(dw-2i+2)(dw-2i+1)}\leq \label{cycles}\\
&\sum_{k\geq 2}B_k\binom{v}{k}\sum_{2\leq d_1,\ldots,d_k\leq \log n}\prod_{i=1}^k\frac{z^{d_i}}{d_i!f(z)}
\times\binom{w}{k}\binom{d}{2}^k(k!)^2\prod_{i=1}^k\frac{d_i(d_i-1)}{(dw-2i+2)(dw-2i+1))}\leq \nonumber\\
&\sum_{k\geq 2}B_k\bfrac{vwz^2d^2}{f(z)(dw-2k)^2}^k\sum_{2\leq d_1,\ldots,d_k}\prod_{i=1}^k\frac{z^{d_i-2}}{(d_i-2)!}=
\nonumber\\
&\sum_{k\geq 2}B_k\bfrac{vwz^2e^zd^2}{f(z)(dw-2k)^2}^k=\nonumber\\
&\sum_{k\geq 2}B_kO(\z^{\frac{d-2}{d-1}})^k=\nonumber\\
&o(1).\nonumber
\end{align}
\end{proof}

{\bf Explanation of \eqref{cycles}:} We condition on the degree sequence. 
Having fixed the degree sequence,
we swap to the configuration model \cite{BoCo}. Having chosen $S\subseteq R$ and $k$ vertices $W$ in $L$ and their
degrees, we can work within tnis model. We then choose a cycle through these vertices in $(k!)^2$ ways.
We then choose the configuration points associated with our $k$-cycle in 
$(d(d-1))^k\prod_{i=1}^kd_i(d_i-1)$ ways. We then multiply by the probability $\prod_{i=1}^k\frac{1}{(dw-2i+2)(dw-2i+1)}$
of choosing the pairings associated with the edges of the cycle.
\begin{corollary}\label{cor1}
Suppose that $t_1=0$. Then \whp\ at time $t_2$, $\G$ contains a matching from $L_\G$ into $R_\G$.
Furthemore, such a matching will be constructed in Phase 1.
\end{corollary}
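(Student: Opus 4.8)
The plan is to deduce Corollary~\ref{cor1} immediately from Lemma~\ref{lemP1} together with two elementary structural facts about $\G$ at time $t_2$, and then to run the argument forward through the rest of Phase~1. First I would note that, in the representation of Section~\ref{probsec}, a Phase~1 step only ever replaces an entire block $\bx_\eta$ by $\s$ and never touches individual coordinates; hence every left vertex $j\in L_\G$ that survives still carries exactly its $d$ original edges $(j,x_{j,1}),\dots,(j,x_{j,d})$. By Lemma~\ref{lemP1}, \whp\ $\G(t_2)$ is a forest, and since a double edge is a cycle of length two, \whp\ $\G(t_2)$ is in fact simple; consequently, \whp\ every $j\in L_\G$ has $d$ \emph{distinct} neighbours in $R$.

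Given this, I would prove the existence of the matching via Hall's condition. For any $S\subseteq L_\G$, the subgraph of $\G(t_2)$ induced on $S\cup N(S)$ is a forest (a subgraph of a forest), and its edge set is exactly the $d|S|$ edges leaving $S$; a forest on $|S|+|N(S)|$ vertices has at most $|S|+|N(S)|-1$ edges, so $d|S|\le |S|+|N(S)|-1$, i.e. $|N(S)|\ge (d-1)|S|+1>|S|$ (here $d\ge 3$, though $d\ge 2$ already suffices). Hence $\G(t_2)$ has a matching saturating $L_\G$, and in particular $\m(\G(t_2))=|L_\G|=w(t_2)=n-t_2$.

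For the ``furthermore'' clause I would argue that the Karp--Sipser algorithm never leaves Phase~1 after time $t_2$. Deleting vertices from a forest leaves a forest, and each Phase~1 step deletes the two endpoints of the chosen edge, so every graph reached from $\G(t_2)$ is again a forest; since every left vertex has degree $d\ge 2$, no left vertex is a leaf, so every leaf lies in $R$, and as long as $\G$ has an edge it has a vertex of degree one in $R$. Thus Phase~1 never switches to Phase~2. Each Phase~1 step matches one vertex of $L_\G$ and adds one edge to $M$, and $L_\G\ne\varnothing$ exactly when $\G$ has an edge, so the process runs until $L_\G=\varnothing$: after $t_2$ steps $|M|=t_2$, and then exactly $n-t_2$ further Phase~1 steps exhaust $L$, leaving a matching of $L$ into $R$ of size $n$ built entirely in Phase~1 (equivalently, in the notation of \eqref{simple}, $\G_1$ has no edges, $\m(\G_1)=0$ and $\t_1=n$).

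The substantive work is Lemma~\ref{lemP1}, which we take as given; within the present deduction the only point requiring a moment's care is that the forest property, established at the single time $t_2$, persists for the remaining $n-t_2$ steps of Phase~1 --- but this is immediate from its closure under vertex deletion, so no further probabilistic input is needed, and the degree-$d$ structure of $L_\G$, the Hall computation, and the leaf argument are all routine.
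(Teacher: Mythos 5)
Your proposal is correct and follows essentially the same route as the paper: invoke Lemma \ref{lemP1} to get that $\G(t_2)$ is a forest, verify Hall's condition using the degree-$d$ left vertices (your edge count $d|S|\le |S|+|N(S)|-1$ is just the explicit form of the paper's remark that any Hall witness would induce a cycle), and observe that Phase 1 is exact on forests. You simply spell out the details that the paper compresses into two sentences, in particular the closure of the forest property under the remaining Phase 1 deletions and the fact that all leaves lie in $R$.
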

\begin{proof}
We can assume from Lemma \ref{lemP1} that $\G$ is a forest. Each vertex of $L_\G$ has degree
$d$ and so Hall's theorem will show that the required matching exists. (Any Hall witness would induce a cycle).
Finally note that Phase 1 of the Karp-Sipser algorithm is exact on a forest.
\end{proof}
\subsection{Threshold for Phase 1 to be sufficient}
Put $A=(\a d)^{-1/(d-1)}$ and $B=1/(d-1)$ so that \eqref{p1e} can be written as
\beq{za}
A\z^B + e^{-\z}-1=0.
\eeq
Assume $B$ is fixed. We find a threshold for $A$ in terms of $B$ for there to be no positive solution to \eqref{za}. 
We find the place $\z^*$ where the curve
$y=1-e^{-\z}$ touches the curve $y=A\z^B$ i.e. where
\begin{eqnarray*}
A\z^B&=&1-e^{-\z}\\
AB\z^{B-1}&=&e^{-\z}
\end{eqnarray*}
In which case
\beq{mm0}
\frac{\z^*}{B}=e^{\z^*}-1\text{ and }A^*=\frac{1-e^{-\z^*}}{(\z^*)^B}
\eeq
or
\beq{zaaz}
\z^*=\frac{e^{\z^*}-1}{d-1}\text{ and }\a^*=\frac{\z^*}{d(1-e^{-\z^*})^{d-1}}.
\eeq
In general, keeping $B<1$ fixed let 
$$f_A(\z)=A\z^B+e^{-\z}-1\text{ and }L_A=A^{-1/B}.$$
We must show that if $A\geq A^*$ then the only solution to $f_A(\z)=0,\,0\leq L_A$ is $\z=0$.

Observe that $f_A(L_A)=e^{-L_A}>0$ and $f_A(0)=0$. Also, if $A<1$ then $1<L_A$ and if $A<1-e^{-1}$ then $f_A(1)<0$
and there must be a positive solution to $f_A(\z)=0$.

Observe that $A'>A$ implies 
(i) $L_{A'}<L_A$ and that (ii) $f_{A'}(\z)>f_A(\z)$ for $\z\neq 0$. So if $f_A(\z)$ has no positive solution
then neither has $f_{A'}$. We argue that $f'_A(\z)=0$ has at most 2 solutions, which implies that $f_A(\z)=0$ has at most two 
positive solutions. As we increase $A$ to $A^*$ these solutions must converge, by (ii).

Now 
$$f_A'(\z)=0\text{ iff }\frac{e^\z}{\z^{1-B}}=\frac{1}{AB}.$$
But the function $g(\z)=e^\z/\z^\xi$ is convex for any $\xi>0$. Indeed
$$g''(\z)=\frac{e^\z((\z-\xi)^2+\xi)}{\z^{\xi+2}}>0$$
and so $g(\z)=a$ has at most two solutions for any $a>0$.
\subsection{Finishing the proof of Theorem \ref{th1}}
We now have to relate the above results to the actual process. We know from our analysis of the differential equations that
for some $A>0$,
$$v_1(t_2=t_1+An^{-\g})=O(n^{1-\g}).$$
When $\z_1=\z(t_1)=0$, Lemma \ref{lemP1} and Corollary \ref{cor1} imply Theorem \ref{th1}.

So assume that $\z_1>0$. Thus $\a>\a^*$ and $A<A^*$.
We argue that if $z_A$ is the solution to \eqref{za} then $z_A$ decreases monotonically with $A$.
Indeed, if $A'>A$ then $f_{A'}(\z)>f_A(\z)$ for $\z_A\leq \z\leq L_{A'}<L_A$.
Now $\frac{(d-1)z}{e^z-1}$ is strictly monotone decreasing with $z$ and so 
\beq{mm1}
\frac{(d-1)z_A}{e^{z_A}-1}<\frac{(d-1)z_{A^*}}{e^{z_{A^*}}-1}=1.
\eeq
The second equation in \eqref{mm1} is the first equation in \eqref{mm0}.

At time $t_2$ we will have $v_1=O(n^{1-\g}))$ and $v=\Omega(n)$. For the next $o(n)$ steps
we have from \eqref{v1cc} that
\beq{mm2}
\E(v_1'-v_1\mid \vv)=-1+(1+o(1))\frac{d-1}{dw}\frac{z^2}{f}v=-1+(1+o(1))\frac{(d-1)z_A}{e^{z_A}-1}\leq -\e
\eeq
for some small positive $\e$. In which case, \whp, $v_1$ will become zero in $O(n^{1-\g}\log n)$ steps.
Indeed \eqref{mm2} implies that the sequence
$$X_k=\begin{cases}v_1(t_2+k)+\e k&if\ v_1(t_2+k)>0\\X_{k-1}&otherwise\end{cases}$$
is a supermartingale that cannot change by
more than $d$ in any step. The Azuma-Hoeffding inequality implies that for $T=2X_0/\e$ we have
$$\Pr(v_1(t_2+T)>0)\leq \Pr(X_T-X_0\geq \e T-X_0)\leq \exp\set{-\frac{(\e T-X_0)^2}{2Td^2}}=o(1).$$
 
I.e. \whp\ $v_1(t_2+T)=0$ and $v(t_2+T)=v(t_2)-o(n)=\Omega(n)$.
\section{Proof of Theorem \ref{th2}}\label{pot2}
Let us summarize what we have to prove.
We have a random bipartite graph $\G_1$ with partition $L_1,R_1$ and $|L_1|=n_1=\a_1 m_1,|R_1|=m_1$.
Each vertex in $L_1$ has degree $d$ and each vertex in $R$ has degree at least 2.
At this point it is convenient to drop the suffix 1. So from now on, $m,n,\a,\G$ etc. refer to the graph left at the end of Phase 1.

The degrees of $\G$ satisfy, $d_{L}(a)=d$ for $a\in L$. The degrees of vertices in $R$ are distributed as the 
box occupancies $X_1,X_2,\ldots,X_n$ in the following experiment. We throw $dn$ balls randomly into $n$ boxes and condition that 
each box gets
at least two balls.  In these circumstance the $X_j$'s are independent truncated Poisson, subject to the 
condition that $X_1+X_2+\cdots+X_n=dn$, see Lemma \ref{trunp} with $v_1=0$. 
Thus for any $S=\set{b_1,b_2,\ldots,b_s}\subseteq R$ and any set of positive integers $k_i\geq 2,i\in S$ we have 
$$\Pr(d_{R_1}(b_i)=k_i,i\in S)\leq O(n^{1/2})\prod_{i\in S}\frac{z^{k_i}}{k_i!f(z)}$$ 
for 
$k\geq 2$ where $z$ satisfies 
$$\frac{z(e^{z}-1)}{f(z)}=\frac{nd}{m}.$$
The $O(n^{1/2})$ term accounts for the conditioning $\sum_{b\in R}d_{R}(b)=dn$
We will prove
\begin{theorem}\label{min}
Let $\G$ be a bipartite graph chosen uniformly from the sets of graphs with bipartition $L,R$, $|L|=n,|R|=m$
such that each vertex of $L$ has degree $d\geq 4$ and each vertex of $R$ has degree at least two. Then \whp
$$\m(\G)=\min\set{m,n}.$$
\end{theorem}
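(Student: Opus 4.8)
The plan is to prove Theorem~\ref{min} by establishing a Hall-type condition via a first-moment (expectation) argument. We want to show that \whp\ there is no violated Hall set on either side. Since the graph is bipartite with $|L|=n$, $|R|=m$, and every vertex of $L$ has degree exactly $d\ge 4$ while every vertex of $R$ has degree at least $2$, a maximum matching saturates $\min\{n,m\}$ vertices unless there is an obstruction. The natural sufficient condition is: for $n\le m$, every $S\subseteq L$ has $|N(S)|\ge |S|$; for $n\ge m$, every $T\subseteq R$ has $|N(T)|\ge |T|$. By the defect version of Hall's theorem, a matching of size $\min\{n,m\}$ fails only if some such set is deficient, so it suffices to bound the expected number of \emph{minimal} deficient sets and show it is $o(1)$.

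**First I would** handle the side whose degree is large, say bounding the number of minimal sets $S\subseteq L$ with $|N(S)|<|S|$. A minimal such $S$ of size $s$ has $|N(S)|=s-1$, and all $ds$ edge-endpoints from $S$ land in a set of $s-1$ vertices of $R$; moreover each vertex of $N(S)$ has at least $2$ of its neighbours inside $S$ (else we could remove it and shrink). Using the degree distribution from Lemma~\ref{trunp} (truncated Poisson with parameter $z$, $\frac{z(e^z-1)}{f(z)}=\frac{nd}{m}$) together with the tail bounds from the \cite{AFP} lemma and the $O(n^{1/2})$ configuration-model correction factor, the expected number of such configurations is at most
\[
\sum_{s\ge 2} O(n^{1/2})\binom{n}{s}\binom{m}{s-1}\left(\frac{s}{m}\right)^{ds}\cdot (\text{combinatorial factors for placing }ds\text{ edges}),
\]
and the dominant growth term behaves like $(c\,(s/n)^{d-2})^s$ or similar, which for $d\ge 4$ sums to $o(1)$ once one also rules out large $s$ (near $n/2$) separately — here the condition $d\ge 4$ (rather than $d\ge 3$) is what gives the extra power of $s/n$ needed for summability in the intermediate range. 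I would split the sum into small $s=O(1)$, an intermediate range, and $s=\Omega(n)$, as in the cycle-counting argument of Lemma~\ref{lemP1}, treating each with the appropriate version of the probability bound \eqref{b1}/\eqref{b2}.

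**The symmetric step** bounds minimal deficient sets $T\subseteq R$: such a $T$ of size $t$ has $|N(T)|=t-1\subseteq L$, and since every $L$-vertex has degree exactly $d$, all edges out of $N(T)$ — there are exactly $d(t-1)$ of them — must land in $T$, while $T$ has minimum $R$-degree $2$, forcing $2t\le d(t-1)$, i.e. no constraint issue for small $t$, but the counting is tighter here because the $L$-side degree is \emph{exactly} $d$: the probability that a fixed $(t-1)$-set of $L$ has all its neighbours inside a fixed $t$-set of $R$ is $(t/m)^{d(t-1)}$ up to lower-order factors, so the expected count is roughly $\sum_t \binom{m}{t}\binom{n}{t-1}(t/m)^{d(t-1)}$, again $o(1)$ for $d\ge 4$. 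I expect \textbf{the main obstacle} to be the intermediate range of set sizes where neither the crude union bound nor an easy concentration argument is immediately decisive, and where one must be careful that the $O(n^{1/2})$ conditioning factor from summing the truncated Poissons does not overwhelm the exponentially small probabilities — this is exactly why the threshold $k\ge \frac{2(d-1)}{(d-2)\gamma}$ device appears in Lemma~\ref{lemP1}, and a similar bucketing of the set-size sum into "short" and "long" regimes, applying \eqref{b2} for the former and \eqref{b1} for the latter, should close the argument. Finally, combining the two one-sided bounds with the defect form of Hall's theorem yields $\m(\G)=\min\{m,n\}$ \whp.
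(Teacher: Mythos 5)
Your overall strategy --- a first-moment bound on minimal Hall violators on each side, computed in the configuration model with truncated-Poisson degrees on $R$ and the $O(n^{1/2})$ conditioning correction, split by the size of the deficient set --- is the same as the paper's for the regime $m\sim n$. But the sketch has a genuine gap on the $R$-side, which is in fact the hardest part of the argument. First, the event you compute is the wrong one: for a deficient $T\subseteq R$ with $N(T)\subseteq B$, $|B|=t-1$, it is the edges leaving $T$ that must land in $B$, not the $d(t-1)$ edges leaving $B$ that must land in $T$ (vertices of $B$ may well have neighbours outside $T$). Second, and more seriously, the corrected naive bound diverges. A vertex of $R$ is only guaranteed degree $2$, so the total degree $D$ of $T$ can be as small as $2t$, and the probability that all $D$ configuration points of $T$ pair into $B$ is about $(t/n)^{D}$; with $D=2t$ the expected count is of order $\binom{m}{t}\binom{n}{t-1}(t/n)^{2t}\approx (e^2m/n)^t$, which does not tend to $0$. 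To rescue this one must weight by the probability of the degree sequence itself (the chance that $T$ has total degree only $\th t$ with $\th$ near $2$ is exponentially small), carry the parameter $\th=D/t\in[2,d]$ through the entire computation, and optimize the resulting generating-function bounds; this is precisely what the paper's factor $\binom{dk}{D}/\binom{dn}{D}$, the log-concavity of $g$ (Lemma \ref{use1}), and the sharpened estimates of $[u^D]((1+u)^d-1-du)^k$ for $d=4,5$ are for. Nothing in your sketch supplies this.

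Two further omissions. The range where a deficient $A\subseteq L$ has size close to $n$ is not covered by your intermediate-range bound: the relevant rate function (the paper's $\r_d(a)=a^{a(d-2)}(1-a)^{-2(1-a)}$) tends to $1$ as $|A|/n\to 1$, and the paper handles this range by passing to the complementary witness $A'=R\setminus B$, $B'=L\setminus A$, which is small but lives on the $R$-side --- so it again requires the delicate $R$-side analysis above. And when $|m-n|$ is of order $n$ the direct computation involves factors such as $h(a/\b)^{\b}$ that no longer cancel; the paper resorts to a separate switching argument between $\cG(n,m-1)$ and $\cG(n,m)$ (Sections \ref{gre} and \ref{le}) to transfer the $m\sim n$ result to general $m$. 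Your proposal is silent on both points, so as written it establishes at most the $L$-side bound for $|A|$ bounded away from $n$ when $m\sim n$.
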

\subsection{Useful Lemmas}
Define the function $\z(\g),\g>0$ to be the unique solution to 
$$\frac{u(e^u-1)}{f(u)}=\g.$$
Let $g$ be defined by
$$g(x) = (e^{\z(x)}-1)^xf(\z(x))^{1-x}.$$
Observe that
\beq{gz}
\frac{f(\z(x))}{\z(x)^x}=\frac{g(x)}{x^x}.
\eeq
\begin{lemma}\label{use1}
The function $g(x)$ is log-concave as a function of $x$
\end{lemma}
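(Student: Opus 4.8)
The plan is to show $\log g(x)$ is concave on $(0,\infty)$ by a direct computation of the second derivative, exploiting the defining relation of $\z(x)$. Write $u = \z(x)$, so that $F(u) := u(e^u-1)/f(u) = x$, which gives $u$ implicitly as a function of $x$. First I would record $\log g(x) = x\log(e^u-1) + (1-x)\log f(u)$, so that
\[
(\log g)'(x) = \log(e^u-1) - \log f(u) + \Bigl(\frac{x e^u}{e^u-1} + \frac{(1-x)f'(u)}{f(u)}\Bigr)u'(x),
\]
where $f'(u) = e^u - 1$. The key simplification is that the coefficient of $u'(x)$ vanishes: since $x = u(e^u-1)/f(u)$, we have $xe^u/(e^u-1) = ue^u/f(u)$ and $(1-x)f'(u)/f(u) = (1-x)(e^u-1)/f(u)$, and a short calculation using $f(u) = e^u - u - 1$ shows these sum to zero (indeed $u e^u + (1-x)(e^u-1) = ue^u + (e^u-1) - u(e^u-1)^2/f(u)$, and one checks the bracket equals $f(u)\cdot 0$ after substituting $x$). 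Hence $(\log g)'(x) = \log(e^u-1) - \log f(u)$, a clean expression.

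Given this, the second derivative is
\[
(\log g)''(x) = \Bigl(\frac{e^u}{e^u-1} - \frac{e^u-1}{f(u)}\Bigr)u'(x).
\]
Now I would compute $u'(x)$ by differentiating $F(u) = x$: we get $u'(x) = 1/F'(u)$, and from the identity $F'(u) = ((e^u-1)^2 - u^2 e^u)/f(u)^2$ already recorded in the excerpt (see \eqref{b}), together with \eqref{diff} which shows $(e^u-1)^2 - u^2 e^u = \sum_{k\ge 4}(2^k - 2 - k(k-1))u^k/k! > 0$ for $u > 0$, we conclude $u'(x) > 0$. So the sign of $(\log g)''(x)$ is the sign of $\frac{e^u}{e^u-1} - \frac{e^u-1}{f(u)}$. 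Clearing denominators, this has the sign of $e^u f(u) - (e^u-1)^2 = e^u(e^u - u - 1) - (e^u-1)^2 = -u e^u + 2e^u - 1 - e^{2u} + \dots$; expanding, $e^u f(u) - (e^u - 1)^2 = -(e^u - 1)^2 + e^{2u} - (u+1)e^u = \dots$ Let me instead expand directly: $e^u f(u) = e^{2u} - (u+1)e^u$ and $(e^u-1)^2 = e^{2u} - 2e^u + 1$, so the difference is $(1-u)e^u - 1 = \sum_{k\ge 0}\frac{u^k}{k!} - \sum_{k\ge 1}\frac{u^k}{(k-1)!} - 1 = -\sum_{k\ge 2}\frac{(k-1)u^k}{k!} < 0$ for all $u > 0$. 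Therefore $\frac{e^u}{e^u-1} - \frac{e^u-1}{f(u)} < 0$, and since $u'(x) > 0$ we get $(\log g)''(x) < 0$, i.e. $g$ is log-concave.

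The main obstacle — really the only nonroutine point — is verifying that the coefficient of $u'(x)$ in $(\log g)'(x)$ vanishes identically; this is what makes the argument work and requires using the defining equation of $\z(x)$ in exactly the right form. Everything else reduces to the two power-series positivity facts $(e^u-1)^2 - u^2 e^u > 0$ (already in the excerpt) and $(1-u)e^u - 1 < 0$ for $u>0$, both immediate from Taylor expansion. One should also note $\z(x) > 0$ for $x > 2$ (the range of interest, since $nd/m = d/\a > 2$ when $\a < d/2$), so all logarithms and denominators above are well-defined and positive.
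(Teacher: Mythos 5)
Your argument follows the paper's own route: differentiate $\log g(x)=x\log(e^{\z}-1)+(1-x)\log f(\z)$ using the defining relation $\z(e^\z-1)/f(\z)=x$, get a clean closed form for the first derivative, and then read off the sign of the second derivative from $\frac{d\z}{dx}>0$ and a Taylor-series positivity fact. The conclusion is correct, but the step you single out as ``the only nonroutine point'' is stated incorrectly: the coefficient of $u'(x)$ does \emph{not} vanish. Substituting $x=u(e^u-1)/f(u)$ gives
\[
x\,\frac{e^u}{e^u-1}+(1-x)\frac{e^u-1}{f(u)}=\frac{(e^u-1)^2-u^2e^u}{f(u)^2}=\frac{dx}{du},
\]
so its product with $u'(x)=du/dx$ equals $1$, not $0$, and the correct first derivative is
\[
(\log g)'(x)=\log\bfrac{e^u-1}{f(u)}+1,
\]
which is the paper's \eqref{gdash}; note the $+1$ is not cosmetic, as it is used later in \eqref{int}. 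Fortunately for the present lemma the discrepancy is only an additive constant, so your formula for $(\log g)''(x)$ is still right, and the rest of your argument is sound: $u'(x)>0$ follows from \eqref{diff}, and
\[
e^u f(u)-(e^u-1)^2=(1-u)e^u-1=-\sum_{k\ge 2}\frac{(k-1)u^k}{k!}<0
\]
for $u>0$, exactly as in the paper (which writes this quantity as $-(\z-1)e^\z-1$). So the proof stands once the first-derivative identity is corrected.
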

\begin{proof}
We will write $\z$ for $\z(x)$ and $f$ for $f(\z)$ throughout this proof. Now $\frac{\z(e^\z-1)}{f} = x$ from
which we get 
\beq{dzx}
\frac{d\z}{dx} = \frac{f^2}{(e^\z-1)^2-\z^2e^\z}
\eeq
and note that $\frac{d\z}{dx}>0$ for $\z>0$. Taking the derivative of $\log(g(x))$ we get
\begin{align*}
\frac{d}{dx}\log(g(x)) &= \frac{d}{dx}\brac{x\log(e^\z-1)+(1-x)\log(e^\z-\z-1)}\\
&= \log\bfrac{e^\z-1}{f} + \frac{d\z}{dx}\brac{x \frac{e^\z}{e^\z-1}+(1-x)\frac{e^\z-1}{f}}
\end{align*}
Now $x = \frac{\z(e^\z-1)}{f}$ so 
\begin{align*}
x \frac{e^\z}{e^\z-1}+(1-x)\frac{e^\z-1}{f}&= \frac{\z e^\z}{f}+\frac{f-\z(e^\z-1)}{f}\frac{e^\z-1}{f}\\
&= \frac{\z e^\z(e^\z-\z-1)+(e^\z-\z-1 - \z e^\z+\z)(e^\z-1)}{f^2}\\
&= \frac{(e^\z-1)^2-\z^2e^\z}{f^2}\\
&= \frac{dx}{d\z}
\end{align*}
Thus we have
\beq{gdash}
\frac{d}{dx}\log(g(x)) = \log\bfrac{e^\z-1}{f}+1
\eeq

Taking the second derivative we get
\begin{align*}
\frac{d^2}{dx^2}\log(g(x)) &= \frac{d}{dx}\brac{\log\bfrac{e^\z-1}{f}+1}\\
&= \frac{f}{e^\z-1}\frac{e^z(e^\z-\z-1)-(e^\z-1)^2}{f^2}\frac{d\z}{dx}\\
&= \frac{1}{(e^\z-1)f}\frac{d\z}{dx}\brac{-(\z-1)e^\z-1}
\end{align*}
and since $-(\z-1)e^\z-1$ is strictly negative for $\z>0$ we get that $g(x)$ is log-concave
\end{proof}
\begin{lemma}\label{use2}
$\z(x)$ is concave as a function of $x$.
\end{lemma}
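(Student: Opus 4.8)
The plan is to reduce the statement to convexity of the function $F(x)=\frac{x(e^x-1)}{f(x)}$ introduced above, and then to read that convexity off from the power series already at hand. By definition $\z(x)$ is the functional inverse of $F$: the defining equation $\frac{u(e^u-1)}{f(u)}=x$ says $x=F(\z(x))$. From \eqref{diff}, the numerator of $F'(x)=\frac{(e^x-1)^2-x^2e^x}{f(x)^2}$ equals $\sum_{k\ge4}(2^k-2-k(k-1))\frac{x^k}{k!}$, which is positive for $x>0$, so $F$ is strictly increasing and smooth and $\z=F^{-1}$ is well defined, differentiable and increasing. For a strictly increasing $F$ with $F'>0$ one has $(F^{-1})''=-F''/(F')^3$, so $\z$ is concave if and only if $F$ is convex, i.e.\ if and only if $F'$ is increasing on $(0,\infty)$. (Equivalently, by \eqref{dzx} one has $\z''(x)=\z'(x)\cdot\frac{d}{d\z}\brac{\frac{f(\z)^2}{(e^\z-1)^2-\z^2e^\z}}$ with $\z'(x)>0$, so we must show the bracketed quantity is decreasing in $\z$, which is the same thing.)

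So it remains to show that $F'(x)=\frac{(e^x-1)^2-x^2e^x}{f(x)^2}$ is increasing on $(0,\infty)$. I would write both numerator and denominator as power series with positive coefficients. By \eqref{diff} the numerator is $\sum_{k\ge4}a_k\frac{x^k}{k!}$ with $a_k=2^k-2-k(k-1)$, and expanding $(e^x-x-1)^2=e^{2x}-2e^x+1-2x(e^x-1)+x^2$ gives $f(x)^2=\sum_{k\ge4}b_k\frac{x^k}{k!}$ with $b_k=2^k-2-2k$; note $a_k,b_k>0$ for all $k\ge4$. Now I invoke the standard monotonicity principle for ratios of power series: if $\alpha_k,\beta_k>0$ and $\alpha_k/\beta_k$ is nondecreasing, then $\frac{\sum\alpha_kx^k}{\sum\beta_kx^k}$ is nondecreasing for $x>0$, because the numerator of its derivative is $\sum_{m<n}(n-m)(\alpha_n\beta_m-\alpha_m\beta_n)x^{m+n-1}\ge0$. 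Applying this with $\alpha_k=a_k/k!$, $\beta_k=b_k/k!$ (so that $\alpha_k/\beta_k=a_k/b_k$), it suffices to check that $\frac{a_k}{b_k}=\frac{2^k-2-k(k-1)}{2^k-2-2k}$ is increasing in $k$ for $k\ge4$.

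That coefficient inequality is the only genuine computation. Rewriting $\frac{a_k}{b_k}<\frac{a_{k+1}}{b_{k+1}}$ as $a_kb_{k+1}<a_{k+1}b_k$ and expanding, it reduces to $2^k(-k^2+5k-2)+(-2k^2-6k+4)<0$, which holds for every $k\ge5$ since both summands are then negative, together with the single base case $k=4$ (where $a_4b_5=2\cdot20<10\cdot6=a_5b_4$). This finishes the proof. I expect the main obstacle, such as it is, to be purely clerical: keeping the polynomial-in-$k$ arithmetic of the coefficients straight; the conceptual chain ($\z$ concave $\Leftrightarrow$ $F$ convex $\Leftrightarrow$ $a_k/b_k$ monotone) is short. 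As an aside, one can sidestep the power-series principle by observing that $F(x)=\E[Y_x]$ for $Y_x$ a Poisson$(x)$ random variable conditioned to be at least $2$, whence $F'(x)=\mathrm{Var}(Y_x)/x$ and convexity of $F$ becomes the inequality $\E\big[(Y_x-\E Y_x)^3\big]>\mathrm{Var}(Y_x)$; but verifying that appears no shorter than the series bound.
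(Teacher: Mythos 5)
Your proof is correct, but it takes a genuinely different route from the paper's. The paper differentiates \eqref{dzx} once more in $x$, obtaining an explicit (and rather unwieldy) formula for $\frac{d^2\z}{dx^2}$ whose sign it then determines by expanding a function $\f(u)=2(1+u)+\psi(u)$ and verifying, via a closed-form expression for $\psi_n$ involving $3^{n-2}$, $2^n$ and a quartic in $n$, that all coefficients are nonpositive. You instead observe that $\z=F^{-1}$ with $F(u)=\frac{u(e^u-1)}{f(u)}$ increasing, so concavity of $\z$ is equivalent to convexity of $F$, i.e.\ to $F'=\frac{(e^x-1)^2-x^2e^x}{f(x)^2}$ being increasing; writing numerator and denominator as $\sum_{k\ge4}a_k\frac{x^k}{k!}$ and $\sum_{k\ge4}b_k\frac{x^k}{k!}$ with $a_k=2^k-2-k(k-1)$ and $b_k=2^k-2-2k$ and invoking the standard monotone-ratio principle reduces everything to $a_kb_{k+1}<a_{k+1}b_k$, which your expansion $2^k(-k^2+5k-2)+(-2k^2-6k+4)<0$ (negative term by term for $k\ge5$, checked directly at $k=4$) settles. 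I verified the series identities, the cross-product expansion, and the base case; all are right. What your approach buys is a much lighter verification burden: the only computation is a two-term coefficient comparison with small polynomials in $k$, whereas the paper's coefficient formula for $\psi_n$ is the kind of thing one really wants a computer algebra system to confirm. The one point worth making explicit in a final write-up is the domain convention: $F(0^+)=2$, so $\z(\g)$ is defined (with $\z(2)=0$) only for $\g\ge2$, which is exactly the range used later (e.g.\ in Claim \ref{claim:fz}); your argument covers this range since $F'>0$ there.
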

\begin{proof}
We begin with \eqref{dzx}. We note from \eqref{diff} that the denominator
$$(e^\z-1)^2-\z^2e^\z\geq 0.$$
Then we have
$$\frac{d^2\z}{dx^2}=\frac{2(1+\z)+e^\z(-6-e^{2\z}(2+\z(\z-4))+\z^2(5+\z(\z+2))+e^\z(6-2\z(2\z+3)))}{((e^\z-1)^2-\z^2e^\z)^2}
\frac{d\z}{dx}.$$
Now let
$$\f(u)=\sum_{n=0}^\infty \f_nu^n=2(1+u)+\psi(u)$$
where
$$\psi(u)=\sum_{n=0}^\infty\psi_nu^n=e^u(-6-e^{2u}(2+u(u-4))+u^2(5+u(u+2))+e^u(6-2u(2u+3))).$$
We check that $\psi_0=-2$ and $\psi_1=0$ which implies that $\f_0=\f_1=0$. One can finish the argument by checking that
$$\psi_n=-\frac{3^{n-2}(n^2-13n+18)+2^n(n^2+2n-6)-(n^4-4n^3+10n^2-7n-6)}{n!}\leq 0$$ 
for $n\geq 2$. This is simply a matter of checking for small values until the $3^n$ term dominates.
\end{proof}
Next let 
$$
H(u) = \log f(u)-u-2\log u = \log\bfrac{e^u-u-1}{u^2e^u}
$$
\begin{lemma}\label{use3}
$H(u)$ is convex as a function of $u$.
\end{lemma}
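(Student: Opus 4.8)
The plan is to differentiate $H$ twice and reduce the claim to an elementary inequality between power series. Using $f'(u)=e^u-1$ together with $(e^u-1)-(e^u-u-1)=u$, one first finds
$$H'(u)=\frac{f'(u)}{f(u)}-1-\frac{2}{u}=\frac{u}{f(u)}-\frac{2}{u}.$$
Differentiating once more and using $f(u)-uf'(u)=(1-u)e^u-1$ gives
$$H''(u)=\frac{(1-u)e^u-1}{f(u)^2}+\frac{2}{u^2},$$
so that for $u>0$ (the domain on which $H$ is defined, where also $f(u)>0$) convexity of $H$ is equivalent to
\beq{Hstar}
2f(u)^2\ \geq\ u^2\brac{(u-1)e^u+1}.
\eeq

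Since $(u-1)e^u+1=\sum_{k\geq2}\frac{k-1}{k!}u^k$ is strictly positive for $u>0$, the right‑hand side of \eqref{Hstar} is genuinely positive, so the inequality cannot be obtained from a crude sign estimate; instead I would compare the two sides coefficient by coefficient. Writing $f(u)=\sum_{k\geq2}u^k/k!$ and using $\sum_{j=2}^{n-2}\binom{n}{j}=2^n-2n-2$, one gets
$$2f(u)^2=\sum_{n\geq4}\frac{2^{n+1}-4n-4}{n!}\,u^n,\qquad u^2\brac{(u-1)e^u+1}=\sum_{n\geq4}\frac{n(n-1)(n-3)}{n!}\,u^n.$$
Hence \eqref{Hstar} holds on $(0,\infty)$ as soon as $c_n:=2^{n+1}-4n-4-n(n-1)(n-3)\geq0$ for every $n\geq4$.

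It remains to verify this numerical fact. Putting $p(n)=n(n-1)(n-3)+4n+4=n^3-4n^2+7n+4$, the claim is $2^{n+1}\geq p(n)$ for $n\geq4$. One checks directly that $2^{5}=p(4)=32$ and $2^{6}=p(5)=64$ (these equalities are precisely why \eqref{Hstar} is tight as $u\to0$), and then argues by induction: if $2^{n+1}\geq p(n)$ then $2^{n+2}=2\cdot2^{n+1}\geq2p(n)\geq p(n+1)$, since a short computation gives $2p(n)-p(n+1)=n(n-3)(n-4)\geq0$ for $n\geq4$. This is the same ``check a few small cases, then let the $2^n$ term dominate the polynomial'' pattern already used in the proof of Lemma~\ref{use2}.

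The only genuine obstacle is the passage from \eqref{Hstar} — a difference of two positive quantities — to something checkable; the Taylor‑coefficient comparison achieves this, and it is a convenient accident that the two leading coefficients $c_4$ and $c_5$ vanish, which both explains the tightness near $0$ and leaves behind a clean polynomial inequality. Everything else (the two derivatives, the binomial identity, the final induction) is routine.
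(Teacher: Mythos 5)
Your proof is correct and follows essentially the same route as the paper's: both compute $H''(u)=\frac{(1-u)e^u-1}{f(u)^2}+\frac{2}{u^2}$, clear denominators (your inequality $2f(u)^2\geq u^2((u-1)e^u+1)$ is exactly the paper's assertion that its numerator $\f(u)$ is nonnegative), and verify nonnegativity of the Taylor coefficients, which coincide with the paper's $\f_n$ after simplifying $n(n-1)(n-2)-n(n-1)=n(n-1)(n-3)$. The only difference is that you make the tail estimate explicit via the induction $2p(n)-p(n+1)=n(n-3)(n-4)\geq 0$, where the paper just asserts the $2^{n+1}$ term dominates; this is a welcome but minor refinement.
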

\begin{proof}
\begin{align*}
\frac{d^2}{du^2}H(u) &= \frac{d}{du}\brac{\frac{e^u-1}{f(u)}-1-\frac{2}{u}}\\
&= \frac{e^u(e^u-1-u) - (e^u-1)^2}{f^2(u)} + \frac{2}{u^2}\\
&= \frac{e^u-1 - ue^u}{f^2(u)} + \frac{2}{u^2}\\
&= \frac{u^2(e^u-1-ue^u)+2(e^u-1-u)^2}{u^2f^2(u)}\\
&= \frac{2e^{2u}+u^2e^u+u^2+4u+2-u^3e^u-4ue^u-4e^u}{u^2f^2(u)}
\end{align*}
Let
$$\f(u)=2e^{2u}+u^2e^u+u^2+4u+2-u^3e^u-4ue^u-4e^u=\sum_{n=0}^\infty \f_nu^n.$$
Direct computation gives $\f_0=\f_1=\f_2=0$ and for $n\geq 3$ 
$$\f_n=\frac{1}{n!}(2^{n+1}+n(n-1)-n(n-1)(n-2)-4n-4).$$
One can then check that $\f_3=\f_4=\f_5=0<\f_n$ for $n\geq 6$. Thus $\frac{d^2}{du^2}H(u)\geq 0$
implying that $H(u)$ is convex.
\end{proof}
\subsection{The case $m\sim n$}\label{m=n}
We will first prove Theorem \ref{min} under the assumption that $m=n$ and then in Sections \ref{gre} and \ref{le} we will extend
the result to arbitrary $m$. We will as usual prove that Hall's Condition holds \whp. We will therefore estimate the 
probability of the existence of sets $A,B$ where $|A|=k$ and $|B|\leq k-1$ such that $N_\G(A)\subseteq B$. Here $N_\G(S)$ 
is the set of neighbours of $S$ in $\G$. We call such a pair of sets, a {\em witness} to the non-existence of a perfect matching.
There are two possibilities to consider: (i) $A\subseteq L$ and $B\subseteq R$ or
(ii) $A\subseteq R$ and $B\subseteq L$. We deal with both cases in order to help extend the results to $m\neq n$. We observe
that if there exist a pair $A,B$ then there exist a minimal pair and in this case each $b\in B$ has at least two neighbours in $A$.
We deal first with the existence probability for a witness in Case (i) and leave Case (ii) until Section \ref{AinR}. We then 
combine these results to finish the case $m=n$ in Section \ref{combine}. We will deal computationally with {\em minimal witnesses} 
where each vertex in $B$ has at least 2 neighbours in $A$. If $v$ has a unique neighbour $w$ in $A$ then 
$A\setminus\set{w},B\setminus\set{v}$ is also a witness. 
\subsubsection{Case 1}\label{Case1}
We estimate
\begin{align}
&\p_L(k,\ell,D)=\nonumber\\
&\Pr(\exists A,B:\;|A|=k,|B|=\ell\leq \min\set{k-1,m/2},N_\G(A)=B,d(B)=D, d_A(b)\geq 2,b\in B)\leq\nonumber\\
&O(n^{1/2})\binom{n}{k}\binom{m}{\ell}\sum_{\substack{2\leq x_b\leq 
d_b,b\in[\ell]\\\sum_bx_b=kd\\\sum_{b\in [\ell]}d_b=D\\ \sum_{b\notin [\ell]}d_b=dn-D}}\prod_{b=1}^{m}\frac{z^{d_b}}{d_b!f(z)}
\binom{d_b}{x_b}\ (kd)!\prod_{i=0}^{dk-1}\frac{1}{dn-i}=\label{expl0}\\
&O(n^{1/2})\binom{n}{k}\binom{m}{\ell}\frac{(d(n-k))!}{(dn)!}\frac{(kd)!z^{dn}}{f(z)^{m}}\times\nonumber\\
&\hskip1in\brac{\sum_{\substack{2\leq x_b,b\in[{\ell}]\\\sum_bx_b=kd}}\prod_{b=1}^{\ell}\frac{1}{x_b!}}
\brac{\sum_{\substack{2\leq d_b,b\notin[{\ell}]\\\sum_bd_b=dn-D}}\prod_{b=k}^{m}\frac{1}{d_b!}}
\brac{\sum_{\substack{0\leq y_b,b\in[{\ell}]\\\sum_by_b=D-kd}}\prod_{b=1}^{\ell}\frac{1}{y_b!}}=\nonumber\\
&O(n^{1/2})\binom{n}{k}\binom{m}{\ell}\frac{(d(n-k))!}{(dn)!}\frac{(kd)!z^{dn}}{f(z)^{m}}\times\nonumber\\
&\hskip1in\brac{[u^{kd}](e^u-1-u)^{\ell}}\brac{[u^{dn-D}](e^u-1-u)^{m-\ell}}\brac{[u^{D-kd}]e^{u\ell}}\leq\label{3u}\\
&O(n^{1/2})\binom{n}{k}\binom{m}{\ell}\frac{(d(n-k))!}{(dn)!}\frac{(kd)!z^{dn}}{f(z)^{m}}
\frac{f(z)^{\ell}}{z^{kd}}\frac{f(\z_1)^{m-k+1}}{\z_1^{dn-D}}\frac{\ell^{D-dk}}{(D-kd)!}\leq\nonumber\\
\noalign{where $\z_1=\z(y)\leq z$ where $y=\frac{dn-D}{m-k+1}\geq 2$ due to our minimum degree assumption for $R$.} \nonumber\\
&O(n^{1/2})\binom{n}{k}\binom{m}{k-1}\frac{(d(n-k))!}{(dn)!}\frac{(kd)!z^{dn}}{f(z)^{m}}
\frac{f(z)^{k-1}}{z^{kd}}\frac{f(\z_1)^{m-k+1}}{\z_1^{dn-D}}\frac{(k-1)^{D-dk}}{(D-kd)!}\leq\label{4u}\\
&O\bfrac{k}{m^{1/2}}\frac{\binom{n}{k}\binom{m}{k}}{\binom{dn}{dk}}\brac{\frac{z^d}{f(z)^{\frac{m-k}{n-k}}}
\frac{f(\z_1)^{\frac{m-k}{n-k}}}{\z_1^{\frac{dn-D}{n-k}}}}^{n-k}\bfrac{ek}{D-dk}^{D-dk}.\nonumber\\
\noalign{Putting $k=an$ and $m=\b n$  and $h(u)=u^u(1-u)^{1-u}$ and $x=d-y$ where $0\le x \le d-2$ we obtain,
after substituting $\binom{n}{k}=O\bfrac{1}{k^{1/2}h(a)^n}$ etc. }
&\p_L(k,\ell,D)\leq O\bfrac{1}{n^{1/2}}\bfrac{h(a)^{d-1}}{h(a/\b)^{\b}}^n
\brac{\frac{z^d}{f(z)^{\frac{\b-a}{1-a}}}\frac{f(\z_1)^{\frac{\b-a}{1-a}}}{\z_1^{d-x}}
\bfrac{e\frac{a}{1-a}}{x}^x}^{n-k}.\label{1aa}
\end{align}

{\bf Explanation of \eqref{expl0}:}
Choose sets $A,B$ in  $\binom{n}{k}\binom{m}{\ell}$ ways. Choose degrees $d_b, b\in R$ with probability 
$O(n^{1/2})\prod_{b=1}^m\frac{z^{d_b}}{d_b!f(z)}$ such that $\sum_{b\in B}d_b=D,\,\sum_{b\notin B}d_b=dn-D$ for some $D\geq 2(\ell)$.
Choose the degrees $x_a,a\in A$ in the sub-graph induced by $A\cup B$. Having fixed the degree sequence,
we swap to the configuration model. Choose the configuration points associated
the $x_a,a\in A$ in $\prod_{a\in A}\binom{d}{x_a}$ ways. Assign these $D$ choices of points points associated with $A$ in $D!$ ways.
Then multiply by the probability $(kd)!\prod_{i=0}^{kd-1}\frac{1}{dn-i}$ of a given pairing of points in $A$.

{\bf Explanation of \eqref{3u} to \eqref{4u}:} If $A(x)=\sum_{n=0}^\infty a_nx^n$ where $a_n\geq 0$ for $n\geq 0$ 
then $a_n\leq A(\z)/\z^n$ for any positive $\z$ and $A(\z)/\z^n$ is minimised at $\z$ satisfying 
$\z A'(\z)/A(\z)=n$.

For the remainder of Section \ref{m=n} we assume that 
\beq{46a}
n\leq m\leq n+o(n^{7/8}).
\eeq
In which case we have
$$\bfrac{h(a)}{h(a/\b)^{\b}}^n\bfrac{f(\z_1)^{\frac{\b-1}{1-a}}}{f(z)^{\frac{\b-1}{1-a}}}^{n-k}=e^{o(n^{7/8}a)}.$$

Thus\eqref{1aa} becomes
\beq{1a}
\p_L(k,\ell,D)=O\bfrac{1}{n^{1/2}}e^{o(n^{7/8}a)}h(a)^{(d-2)n}
\brac{\frac{z^d}{f(z)}\frac{f(\z_1)}{\z_1^{d-x}}
\bfrac{e\frac{a}{1-a}}{x}^x}^{n-k}.
\eeq
{\bf Case 1.1:} $0\leq k\leq \brac{1-\frac{2}{d}}n$.

Observe (see \eqref{gz}) that
$$
\frac{z^d}{f(z)}\frac{f(\z_1)}{\z_1^{d-x}} = \frac{d^d}{g(d)}\frac{g(d-x)}{(d-x)^{d-x}}
$$
where $g(x)$ is as defined in Lemma \ref{use1}.

It follows from \eqref{gdash} that
\beq{int}
-\log\bfrac{g(d-x)}{g(d)} = \int_{d-x}^d\frac{d}{dt}\log(g(t))dt \ge \int_{d-x}^d(\log(1+\z e^{-\z})+1)dt.
\eeq
Now $\z e^{-\z}\leq e^{-1}$ which implies that $\log(1+\z e^{-\z})\geq \z e^{-\z}/10$. Also,
$$\frac{\z(t)}{t}=1-\frac{\z}{e^\z-1}\geq 1-\frac{2}{\z+2}=\frac{\z}{\z+2}.$$
And so $\z\leq t\leq \z+2$. Thus
$$\int_{d-x}^d(\log(1+\z e^{-\z})+1)dt\geq \int_{d-x}^d\brac{1+\frac{t-2}{10e^d}}dt=x+\frac{x(2d-4-x)}{20e^d}.$$
This implies that $\frac{g(d-x)}{g(d)}\le e^{-x}\psi(x)$ where $\psi(x)=e^{-\e_d(2d-4-x)x}$ and $\e_d=\frac{1}{20e^d}$. 
Note that $d-x\geq 2$ and so $\psi(x)\le e^{-(d-2)\e_dx}$ in the range of interest.
Plugging this into the last parenthesis of \eqref{1a} gives
\beq{1b}
\p_L(k,\ell,D)=O\bfrac{1}{n^{1/2}}e^{o(n^{7/8}a)}
\psi(x)^n\brac{h(a)^{d-2}\brac{d^d\frac{1}{(d-x)^{d-x}}\bfrac{\frac{a}{1-a}}{x}^x}^{1-a}}^n
\eeq
This immediately yields
\beq{A0}
A_{\ref{A0}}=\sum_{\ell<k=\e_Ln}^{n(1-2/d)}\sum_{D=dk}^{dk+n^{1/10}}
\p_L(k,\ell,D)\leq\sum_{\ell<k=\e_Ln}^{n(1-2/d)}\sum_{D=dk}^{dk+n^{1/10}}
O\bfrac{\log n}{n^{1/2}}h(a)^{(d-2)n}e^{o(n)}=o(1).
\eeq
We use the notation $A_{\ref{A0}}$ so that the reader can easily refer back to the equation giving its definition.

We will work with $D\leq k\log n$ because it is easy to show that \whp\ the maximum degree in $\G$ is $o(\log n)$.
The bound for $A_{\ref{A0}}$ comes 
from \eqref{1b}, using the fact that $h(a)$ is bounded away from 1 and $x=o(1)$ in this summation.
$A_{\ref{A0}}$ is the first of several sums that 
together show the unlikelihood chance of a witness. We will display them as they
become available and use them in Sections \ref{combine}, \ref{gre} and \ref{le}.

The main term $h(a)^{d-2}\brac{d^d\frac{1}{(d-x)^{d-x}}\bfrac{\frac{a}{1-a}}{x}^x}^{1-a}$ 
in \eqref{1b} is maximized when $x = ad$, provided $ad\leq d-2$ or $k\leq n\brac{1-\frac{2}{d}}$. This in turn gives
\begin{align}
\p_L(k,D)&=O\bfrac{1}{n^{1/2}}e^{o(n^{7/8}a)}\psi(x)^n
\brac{h(a)^{d-2}\brac{d^d\frac{1}{(d-ad)^{d-ad}}\frac{1}{((1-a)d)^{ad}}}^{1-a}}^n\nonumber\\
&=O\bfrac{1}{n^{1/2}}e^{o(n^{7/8}a)}\psi(x)^n
\brac{h(a)^{d-2}\brac{\frac{d^{d}}{d^{d-ad}d^{ad}}(1-a)^{-d}}^{1-a}}^n\nonumber\\
&\leq O\bfrac{1}{n^{1/2}}e^{o(n^{7/8}a)}\psi(x)^n\brac{a^{a(d-2)}(1-a)^{-2(1-a)}}^n\label{42}
\end{align}

The function $\r_d(a)=a^{a(d-2)}(1-a)^{-2(1-a)}$ is at most 1 and is log-convex in $a$ on $[0,1-\frac{2}{d}]$. 
Indeed, if $L_1(a)=\log\r_d(a)$ then
\begin{align}
&\frac{dL_1}{da}=d-2+(d-2)\log a+2\log(1-a)\label{k1}\\
&\frac{d^2L_1}{da^2}=\frac{d-2-da}{a(1-a)}\label{k2}
\end{align}
We have $L_1(0)=0$ and $L_1'(0)=-\infty$. It follows that for every $K>0$ there exists a constant $\e_L(K,d)>0$ such that 
\beq{43}
\r_d(a)\leq e^{-Ka}\qquad for\ a\leq \e_L(K).
\eeq 
We let $\e_L=\e_L(1,d)$.

We can immediately write
\beq{A1}
A_{\ref{A1}}=\sum_{\ell<k=2}^{n^{1/10}}\sum_{D=dk}^{k\log n}\p_L(k,\ell,D)=
\sum_{\ell<k=2}^{n^{1/10}}\sum_{D=dk}^{\log n}O\bfrac{\log n}{n^{1/2}}e^{o(kn^{-1/8})}=o(1).
\eeq
The bound for $A_{\ref{A1}}$ is derived from \eqref{42} using $\psi(x),\r(a)\leq 1$. 

Along the same lines we have
\beq{A2}
A_{\ref{A2}}=\sum_{\ell<k=n^{1/10}}^{\e_Ln}\sum_{D=dk}^{k\log n}\p_L(k,\ell,D)=
\sum_{\ell<k=n^{1/10}}^{\e_Ln}\sum_{D=dk}^{k\log n}O\bfrac{\log n}{n^{1/2}}
e^{-k(1-o(n^{-1/8})}=o(1).
\eeq
The bound for $A_{\ref{A2}}$
comes from \eqref{42} and \eqref{43}. 

Now
$\brac{1-\frac{2}{d}}^{(d-2)^2/d}\bfrac{2}{d}^{-4/d}$ decreases in $d$ and is $\le 1$ for $d\ge 4$.
So if $d\geq 4$ then
\beq{A3}
A_{\ref{A3}}=\sum_{\ell<k=\e_Ln}^{n(1-2/d)}\sum_{D=dk+n^{1/10}}^{\log n}\p_L(k,\ell,D)=
\sum_{\ell<k=\e_Ln}^{n(1-2/d)}\sum_{D=dk+n^{1/10}}^{k\log n}
O\bfrac{\log n}{n^{1/2}}e^{o(n^{7/8}a)}\psi(n^{-4/5})^n\r_d(a)^n=o(1).
\eeq
The bound for $A_{\ref{A3}}$ comes from \eqref{42} using the fact that $\r_d(a)\leq e^{-a}$ and $x\geq n^{-4/5}$ in this summation. 

When $d=3$ we need some extra calculations. First note that $\r_3(.15)<1$ and so arguing as above we have
\beq{A33}
A_{\ref{A33}}=\sum_{\ell<k=\e_Ln}^{.15n}\sum_{D=3k+n^{1/10}}^{k\log n}\p_L(k,\ell,D)=
\sum_{\ell<k=\e_Ln}^{.15n}\sum_{D=3k+n^{1/10}}^{\log n}O\bfrac{\log n}{n^{1/2}}e^{o(n^{7/8}a)}\psi(n^{-4/5})^n\r_3(a)^n=o(1).
\eeq
Because we can choose any value for $\z_1$ in the bound \eqref{1a} we can simplify matters by choosing $\z_1=\xi$ 
independent of $x$ to get
\beq{1a1a}
\p_L(k,D)=O\bfrac{1}{n^{1/2}}e^{o(n^{7/8}a)}h(a)^{n}
\brac{\frac{z^3}{f(z)}\frac{f(\xi)}{\xi^{3}}
\bfrac{e\xi a}{x(1-a)}^x}^{n-k}.
\eeq
Now 
\beq{xxq}
\bfrac{\xi ea}{(1-a)x}^x\leq \exp\set{\frac{\xi a}{1-a}}
\eeq
and so
\beq{are}
\p_L(k,D)\leq O\bfrac{k}{n^{1/2}}\brac{h(a)e^{\xi a}e^{o(n^{-1/8}a)}\brac{\frac{z^3}{f(z)}\frac{f(\xi)}{\xi^3}}^{1-a}}^n.
\eeq
Now the function $L_2(a)=h(a)e^{\xi a}\brac{\frac{z^3}{f(z)}\frac{f(\xi)}{\xi^3}}^{1-a}$ is log-convex. Our choice of
$\xi$ will be 1.5 and we note that with this choice
$L_2(.15),L_2(2/5)<.98$ and so
\beq{A333}
A_{\ref{A333}}=\sum_{\ell<k=.15n}^{2n/5}\sum_{D=3k+n^{1/10}}^{k\log n}\p_L(k,\ell,D)\leq
\sum_{\ell<k=.15n}^{2n/5}\sum_{D=3k+n^{1/10}}^{k\log n}O\bfrac{\log n}{n^{1/2}}e^{o(n^{7/8}a)}(.98)^n=o(1).
\eeq
We have gone slightly beyond $n/3$ to $2n/5$. It is convenient to repeat this idea for a couple of ranges. Putting $\xi=.5$
we get $L_2(2/5),L_2(.74)<.995$ from which we deuce that
\beq{A3333}
A_{\ref{A3333}}=\sum_{\ell<k=2n/5}^{.74n}\sum_{D=3k+n^{1/10}}^{k\log n}\p_L(k,\ell,D)\leq
\sum_{\ell<k=2n/5}^{.74n}\sum_{D=3k+n^{1/10}}^{k\log n}O\bfrac{\log n}{n^{1/2}}e^{o(n^{7/8}a)}(.995)^n=o(1).
\eeq
Putting $\xi=.2$
we get $L_2(.74),L_2(.87)<.995$ from which we deuce that
\beq{bbx}
A_{\ref{bbx}}=\sum_{\ell<k=.74}^{.87n}\sum_{D=3k+n^{1/10}}^{k\log n}\p_L(k,\ell,D)\leq
\sum_{\ell<k=.74}^{.87n}\sum_{D=3k+n^{1/10}}^{k\log n}O\bfrac{\log n}{n^{1/2}}e^{o(n^{7/8}a)}(.995)^n=o(1).
\eeq
{\bf Case 1.2.1:} $\brac{1-\frac{2}{d}}n\leq k\leq \brac{1-\frac{1}{d-1}}n$.

For $a\ge 1-\frac{2}{d}$ the maximising value for $x$ in \eqref{1b} is 
at $x = d-2$ (recall that $0\le x\le d-2)$, so plugging into \eqref{1b} gives

\begin{align}
\p_L(k,\ell,D)&=O\bfrac{k}{n^{1/2}}\brac{e^{o(n^{-1/8}a)}h(a)^{d-2} \brac{d^d\frac{1}{2^2}\bfrac{\frac{a}{1-a}}{d-2}^{d-2}}^{1-a}}^n\nonumber\\
&=O\bfrac{k}{n^{1/2}}\brac{e^{o(n^{-1/8}a)}a^{d-2}\bfrac{d^d}{(d-2)^{d-2}2^2}^{1-a}}^n\label{2a}
\end{align}

Let $L_3(a)=\log\brac{a^{d-2}\bfrac{d^d}{(d-2)^{d-2}2^2}^{1-a}}$. Then
\begin{align*}
\frac{d}{d a} L_3(a) &= \frac{d}{da}\brac{(d-2)\log a + (1-a)\log\bfrac{d^d}{(d-2)^{d-2}2^2}}\\
&=\frac{d-2}{a} -d\log d + (d-2)\log(d-2)+2\log2
\end{align*}
Assume for now that $d\geq 6$.
Then the derivative with respect to $d$, for of the last expression is
$$
\frac{1}{a}-\log\bfrac{d}{d-2} \ge 1-\log\bfrac{6}{4}>0
$$
so it takes a minimum at $d=6$ with a value
$$
\frac{4}{a}-6\log 6 +4\log 4+ 2\log 2 \ge 0.18>0.
$$
Now $L_3(1)=0$ and so for $a\geq 1-\frac{2}{d}$ and $d\ge 6$ we have
\beq{A4}
A_{\ref{A4}}=\sum_{\ell<k=n(1-2/d)}^{n-n^{7/8}}\sum_{D=dk}^{k\log n}\p_L(k,\ell,D)
\leq \sum_{\ell<k=n(1-2/d)}^{n-n^{7/8}}\sum_{D=dk}^{k\log n}
O\bfrac{\log n}{n^{1/2}}e^{-.18(n-k)+o(n^{-1/8}k)}=o(1).
\eeq

For $d=3,4,5$ we use the following
\begin{claim}\label{claim:fz}
For $y\ge 2$ we have $\frac{f(\z(y))}{\z(y)^y} \le \frac{3}{4}$.
\end{claim}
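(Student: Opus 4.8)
The plan is to reduce the inequality $f(\z(y))/\z(y)^y \le 3/4$ for all $y \ge 2$ to a statement about the auxiliary function $H$ introduced just before Lemma \ref{use3}. Recall that $\z = \z(y)$ is defined by $\z(e^\z-1)/f(\z) = y$, and that by \eqref{gz} we have $f(\z(y))/\z(y)^y = g(y)/y^y$ where $g$ is the log-concave function of Lemma \ref{use1}. So it suffices to show $g(y) \le (3/4)\,y^y$ for all $y \ge 2$, i.e. $\log g(y) - y\log y \le \log(3/4)$.

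First I would observe that $\log g(y) - y\log y = -\,y\,H(\z(y)) + (\text{something manageable})$; more precisely, writing things out using $H(u) = \log f(u) - u - 2\log u$ and $g(y) = (e^{\z}-1)^y f(\z)^{1-y}$ together with the defining relation $y = \z(e^\z-1)/f(\z)$, one expresses $\log g(y) - y\log y$ explicitly as a function of the single variable $\z > 0$ (as $y$ ranges over $[2,\infty)$, $\z$ ranges over the interval $(0, \z_0]$ where $\z_0$ is defined by $\z_0(e^{\z_0}-1)/f(\z_0) = 2$; note $\z(y)$ is increasing in $y$, so $y \ge 2 \iff \z \le \z_0$, and one should double-check the direction — actually as $y \to \infty$, $\z \to \infty$, so $y \ge 2$ corresponds to $\z \ge \z_0$). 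Let $\Psi(\z)$ denote this expression. I would then compute $\frac{d}{d\z}\Psi$ using $\frac{d\z}{dy} > 0$ from \eqref{dzx} and \eqref{gdash}, which gives $\frac{d}{dy}\log g(y) = \log((e^\z-1)/f(\z)) + 1$; combined with $\frac{d}{dy}(y\log y) = \log y + 1 = \log(\z(e^\z-1)/f(\z)) + 1$, one gets the clean identity
\[
\frac{d}{dy}\big(\log g(y) - y\log y\big) = \log\bfrac{e^\z-1}{f(\z)} - \log\bfrac{\z(e^\z-1)}{f(\z)} = -\log \z.
\]
Hence $\log g(y) - y\log y$ is \emph{decreasing} in $y$ precisely when $\z(y) > 1$, i.e. for all $y$ larger than the value $y_1$ with $\z(y_1) = 1$ (one checks $\z(2) > 1$, so in fact it is decreasing on all of $[2,\infty)$). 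Therefore the maximum over $y \ge 2$ is attained at $y = 2$, and the claim reduces to the single numerical inequality $g(2)/2^2 = f(\z_0)/\z_0^{\,2} \le 3/4$ where $\z_0$ solves $\z_0(e^{\z_0}-1) = 2 f(\z_0) = 2(e^{\z_0} - \z_0 - 1)$; solving numerically gives $\z_0 \approx 2.21$ and $f(\z_0)/\z_0^2 \approx 0.73 < 3/4$, which finishes it.

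The main obstacle I anticipate is purely bookkeeping: getting the monotonicity direction of $\z(y)$ straight and then verifying that $\z(2) \ge 1$ so that the sign of $-\log\z$ is genuinely negative throughout $[2,\infty)$ (if it were not, one would have to locate an interior maximum, which would still be a one-variable calculus problem but messier). Everything after that is a clean one-variable monotonicity argument plus one explicit numerical evaluation at $y = 2$; the convexity of $H$ from Lemma \ref{use3} is not even strictly needed for this route, though it can be used as a sanity check, and the concavity of $\z$ from Lemma \ref{use2} guarantees $\z$ is well behaved. An alternative, if one wants to avoid the derivative computation, is to invoke log-concavity of $g$ (Lemma \ref{use1}) together with the known small-$y$ and large-$y$ asymptotics of $g(y)/y^y$ to pin the maximum, but the direct derivative identity $\frac{d}{dy}(\log g - y\log y) = -\log\z$ is cleaner and I would use that.
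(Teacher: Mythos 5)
Your reduction via \eqref{gz} and the derivative identity
$\frac{d}{dy}\brac{\log g(y)-y\log y}=\log\bfrac{e^{\z}-1}{f(\z)}+1-\brac{\log y+1}=-\log\z(y)$
(using \eqref{gdash} and $y=\z(e^{\z}-1)/f(\z)$) is correct and is a genuinely different, and in fact cleaner, route than the paper's proof, which instead uses concavity of $\z$ to get $\z(y)\le 3(y-2)$, splits at $y=3$, and invokes convexity of $H$ from Lemma \ref{use3}. However, as written your argument contains a genuine error at the decisive step. You assert $\z(2)>1$; in fact $\z(2)=0$. Since $u(e^u-1)/f(u)\to 2$ as $u\to 0^+$ and this ratio is strictly increasing in $u$ (its derivative is $((e^u-1)^2-u^2e^u)/f(u)^2>0$ by \eqref{diff}), the equation $u(e^u-1)/f(u)=2$ has only the solution $u=0$, so your $\z_0\approx 2.21$ does not exist and your endpoint value $0.73$ is spurious: the correct limit at $y=2$ is $f(\z)/\z^2\to 1/2$. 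Consequently $-\log\z(y)>0$ near $y=2$, the quantity $\log g(y)-y\log y$ is \emph{increasing} there, and the maximum over $y\ge 2$ is attained at the interior point you dismissed, not at $y=2$.

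The fix is short and stays entirely within your framework: the unique critical point is $y_1$ with $\z(y_1)=1$, i.e. $y_1=(e-1)/(e-2)\approx 2.392$, and since the derivative changes sign from positive to negative there ($\z$ is increasing in $y$ by \eqref{dzx}), this is the global maximum on $[2,\infty)$. At that point $f(\z(y_1))/\z(y_1)^{y_1}=f(1)/1^{y_1}=e-2\approx 0.7183<3/4$, which proves the claim and even gives the sharp constant $e-2$. You should replace the false monotonicity claim and the endpoint evaluation with this interior-maximum computation; once that is done your argument is complete and shorter than the paper's.
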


Substituting this into \eqref{1a} gives
\beq{2b}
\p_L(k,\ell,D)=O\bfrac{1}{n^{1/2}}\brac{h(a)^{d-2}
\brac{e^{o(n^{-1/8}a)}\frac{z^d}{f(z)}\frac{3}{4}\bfrac{e\frac{a}{1-a}}{x}^x}^{1-a}}^n
\eeq

The maximum of $L_3(x)=h(a)^{d-2}\brac{\frac{z^d}{f(z)}\frac{3}{4}\bfrac{e\frac{a}{1-a}}{x}^x}^{1-a}$ 
is taken when either $x = \frac{a}{1-a}$ for $a \in [1-\frac{2}{d},1-\frac{1}{d-1}]$ or at $x=d-2$ 
for $a\in[1-\frac{1}{d-1},1]$.

So for $a\in[1-\frac{2}{d},1-\frac{1}{d-1}]$ we get
$$
\p_L(k,\ell,D)\leq O\bfrac{1}{n^{1/2}}\brac{e^{o(n^{-1/8}a)}h(a)^{d-2}e^a\bfrac{3z^d}{4f(z)}^{1-a}}^n.
$$
The expression $L_4(a)=h(a)^{d-2}e^a\bfrac{3z^d}{4f(z)}^{1-a}$ is log-convex on $[1-\frac{2}{d},1-\frac{1}{d-1}]$ 
and $L_4\leq .97$ at both ends of the interval for both $d=3,4,5$.
We can therefore write
\beq{A4'}
A_{\ref{A4'}}=\sum_{\ell<k=n(1-2/d)}^{n(1-1/(d-1))}\sum_{D=dk}^{k\log n}\p_L(k,\ell,D)
\leq \sum_{\ell<k=n(1-2/d)}^{n(1-1/(d-1))}\sum_{D=dk}^{k\log n}
O\bfrac{\log n}{n^{1/2}}e^{o(n^{7/8}a)}(.97)^n=o(1).
\eeq
for $a\in [1-\frac{2}{d},1-\frac{1}{d-1}]$.

{\bf Case 1.2.2:} $k\geq \brac{1-\frac{1}{d-1}}n$.

For $a\in[1-\frac{1}{d-1},1]$ from \eqref{2b} we get
\begin{align*}
\p_L(k,\ell,D)&=O\bfrac{1}{n^{1/2}}\brac{e^{o(n^{-1/8}a)}
h(a)^{d-2}e^a\bfrac{3z^d}{4f(z)}^{1-a}\bfrac{e\frac{a}{1-a}}{d-2}^{(1-a)(d-2)}}^n\nonumber\\
&=O\bfrac{1}{n^{1/2}}\brac{e^{o(n^{-1/8}a)}a^{d-2}\bfrac{3z^de^{d-2}}{4f(z)(d-2)^{d-2}}^{1-a}}^n
\end{align*}
The expression $L_5(a)=a^{d-2}\bfrac{3z^de^{d-2}}{4f(z)(d-2)^{d-2}}^{1-a}$ 
is log-concave on $[1-\frac{1}{d-1},1]$.  
The derivative of $\log L_5$ at $a=1$ is at least 1/100 for both $d=4,5$. Consequently for $d=4,5$
\beq{A4''}
A_{\ref{A4''}}=\sum_{\ell<k=n(1-1/(d-1))}^{n-n^{7/8}}\sum_{D=dk}^{k\log n}\p_L(k,\ell,D)\leq \sum_{\ell<k=n(1-2/d)}^{n-n^{7/8}}\sum_{D=dk}^{k\log n}
O\bfrac{\log n}{n^{1/2}}e^{-(n-k)/100+o(n^{-1/8}k)}=o(1).
\eeq
for $a\in [1-\frac{1}{d-1},1]$.

For $d=3$ we go back to \eqref{1a1a} and \eqref{xxq} and put $\xi=(1-a)/a$ giving
\beq{less1}
\p_L(k,D)\leq O\bfrac{1}{n^{1/2}}\brac{e^{o(n^{-1/8}a)}h(a)\brac{\frac{z^3 }{f(z)}\frac{ea^3f((1-a)/a)}{(1-a)^3}}^{1-a}}^n.
\eeq
Now for $x<1$ we have
$$f(x)\leq \frac{x^2}{2}\brac{1+\frac{x}{3}}.$$
Plugging this into \eqref{less1} for $a\geq 1/2$ and replacing $2a+1\leq3$ we have
\beq{less2}
\p_L(k,\ell,D)\leq O\bfrac{1}{n^{1/2}}\brac{e^{o(n^{-1/8}a)}a^a\brac{\frac{ez^3}{2f(z)}}^{1-a}}^n.
\eeq
The function $L_6(a)=a^a\brac{\frac{ez^3}{2f(z)}}^{1-a}$ is log-convex and $L_6(.84)<.9995$ and $L_6(1)=1$. 
Also, $L_6'(1)>1/20$. It follows that
\beq{A4'''}
A_{\ref{A4'''}}=\sum_{\ell<k=.84n}^{n-n^{7/8}}\sum_{D=dk}^{k\log n}\p_L(k,\ell,D)\leq \sum_{\ell<k=.84n}^{n-n^{7/8}}\sum_{D=dk}^{k\log n}
O\bfrac{k\log n}{n^{1/2}}e^{-\min\set{(n-k)/20,n/1000}+o(n^{-1/8}k)}=o(1).
\eeq
\begin{proofof}{Claim \ref{claim:fz}}
Recall from Lemma \ref{use2} that $\z(x)$ is concave and thus $\frac{\z(y)-\z(2)}{y-2}$
is decreasing. Since $\z(2) = 0$ we have
\begin{align*}
\frac{\z(y)}{y-2} &= \frac{\z(y)-\z(2)}{y-2} \\
&\le \lim_{y\to 2}\frac{d}{dy}\z(y)\\
&= \lim_{z\to 0}\frac{e^z-z-1}{(e^z-1)^2-z^2e^z}\\
&= 3
\end{align*}
Thus we have that $\z(y)\le 3(y-2)$ for all $y$. For $y\ge 3$ we can upper bound
$$
\frac{f(\z(y))}{\z(y)^y} \le \frac{f(y)}{y^y} \le \frac{e^y}{y^y} \le \frac{e^3}{3^3} \le \frac{3}{4}
$$
For $y\in[2,3]$ we have that $3(y-2)\le y$ and so we can bound
$$
\frac{f(\z(y))}{\z(y)^y} \le \frac{f(3(y-2))}{(3(y-2))^y}
$$
Taking the logarithm of this expression and substituting $u = 3(y-2)$ we get
\beq{2bb}
\log f(u)-\brac{\frac{u}{3}+2}\log u = u - \frac{u}{3}\log(u) + \brac{\log f(u)-u-2\log(u)}=u - \frac{u}{3}\log(u) + H(u)
\eeq
where $H(u)$ is from Lemma \ref{use3}.

Since $H(u)$ is convex we have
$$
H(u) \le H(0)+u\frac{H(3)-H(0)}{3}
$$
Pluggin this into \eqref{2bb} we get
$$
u-\frac{u}{3}\log{u}+H(0)+u\frac{H(3)-H(0)}{3} = H(0)+\frac{1}{3}\brac{(H(3)-H(0)+3)u-u\log u}
$$
which is concave in $u$ and takes a maximum value of $H(0)+u/3$ when
$$
u=\exp(H(3)-H(0)+3-1) = \frac{e^3-4}{18e} \le 0.33
$$
Pluggin this back in we see that for $x\in [2,3]$, which is $u\in [0,3]$ we have
$$
\frac{f(\z(y))}{\z(y)^y} \le \exp(H(0)+.11) = \exp(0.11)/2 < \frac{3}{4} 
$$
\end{proofof}

\subsubsection{Case 2}\label{AinR}
Now let us estimate the probability of a violation of Hall's condition with $A\subseteq R$. We once again begin with arbitrary $m$.
Let
\begin{align}
&\p_R(k,\ell,D)=\nonumber\\
&\Pr(\exists A\subseteq R,B\subseteq L:\;|A|=k,|B|=\ell\leq \min\set{k-1,n/2},N_\G(A)
\subseteq B,d_B(b)\geq 2,b\in B,d_R(A)=D)\leq \nonumber\\
&O(n^{1/2})\binom{m}{k}\binom{n}{\ell}\sum_{\substack{2\leq  d_a,a\in[m]\\2\leq x_b\leq d,b\in[\ell]\\
\sum_{a\in [k]}d_a=\sum_{b\in [\ell]}x_b=D\\\sum_{a\notin [k]}d_a=dn-D}}\prod_{a=1}^m\frac{z^{d_a}}{d_a!f(z)}
\prod_{b=1}^{\ell}\binom{d}{x_b}\ D!\prod_{i=0}^{D-1}\frac{1}{dn-i}=\label{expl1}\\
&O(n^{1/2})\binom{m}{k}\binom{n}{\ell}\frac{z^{dn}D!}{f(z)^m}\frac{(dn-D)!}{(dn)!}\nonumber\\
&\hskip3cm\times\brac{\sum_{\substack{2\leq  d_a,a\in[k]\\\sum_ad_a=D}}\prod_{a=1}^k\frac{1}{d_a!}}
\brac{\sum_{\substack{2\leq  d_a,a\notin[k]\\\sum_ad_a=dn-D}}\prod_{a=k+1}^n\frac{1}{d_a!}}
\brac{\sum_{\substack{2\leq x_b\leq d,b\in[\ell]\\\sum_bx_b=D}}\prod_{b=1}^{\ell}\binom{d}{x_b}}=\nonumber\\
&O(n^{1/2})\binom{m}{k}\binom{n}{\ell}\frac{z^{dn}}{f(z)^m}\frac{1}{\binom{dn}{D}}\nonumber\\
&\hskip1cm\times\brac{[u^D](e^u-1-u)^k}\brac{[u^{dn-D}](e^u-1-u)^{m-k}}\brac{[u^D]((1+u)^d-(1+du))^\ell}\leq\nonumber\\
&O(n^{1/2})\binom{m}{k}\binom{n}{k-1}\frac{z^{dn}}{f(z)^m}\frac{1}{\binom{dn}{D}}\nonumber\\
&\hskip1cm\times\brac{[u^D](e^u-1-u)^k}\brac{[u^{dn-D}](e^u-1-u)^{m-k}}\brac{[u^D]((1+u)^d-(1+du))^k}\leq\label{colbert}\\
&O(n^{1/2})\binom{m}{k}\binom{n}{k-1}\frac{z^{dn}}{f(z)^m}\frac{1}{\binom{dn}{D}}\frac{f(\z_1)^k}{\z_1^D}
\frac{f(\z_2)^{m-k}}{\z_2^{dn-D}}\binom{dk}{D}\leq\nonumber\\
\noalign{where $\z_1=\z(D/k)$ and
$\z_2=\z\bfrac{dn-D}{m-k}$ -- actually any value for $\z_1,\z_2$ is valid --}\nonumber\\
&O(n^{1/2})\binom{m}{k}\binom{n}{k}\frac{\binom{dk}{D}}{\binom{dn}{D}}
\bfrac{f(\z_1)}{f(z)}^k\bfrac{f(\z_2)}{f(z)}^{m-k}\bfrac{z}{\z_1}^D\bfrac{z}{\z_2}^{dn-D}\label{3xx}\\
&=O\bfrac{1}{n^{1/2}}\brac{\frac{h(\th a/d)^d}{h(a)h(a/\b)^\b h(\th/d)^{ad}}\bfrac{f(\z_1)}{f(z)}^a\bfrac{f(\z_2)}{f(z)}^{\b-a}
\bfrac{z}{\z_1}^{\th a}\bfrac{z}{\z_2}^{d-\th a}}^n\label{3axx}\\
&=O\bfrac{1}{n^{1/2}}\brac{\frac{h(\th a/d)^d}{h(a)h(a/\b)^\b h(\th/d)^{ad}}
\frac{z^d}{f(z)^\b}\frac{f(\z_1)^a}{\z_1^{\th a}}\frac{f(\z_2)^{\b-a}}{\z_2^{d-\th a}}}^n\label{3ayxw}
\end{align}
where $a=k/n$, $m=\b n$ and $D=\th k\leq dk$.

{\bf Explanation of \eqref{expl1}:}
Choose sets $A,B$ in  $\binom{m}{k}\binom{n}{\ell}$ ways. Choose degrees $d_a, a\in R$ with probability 
$O(n^{1/2})\prod_{a=1}^n\frac{z^{d_a}}{d_a!f(z)}$ such that $\sum_{a\in A}d_a=D,\,\sum_{a\notin A}d_a=dn-D$ for some $D\geq 2k$.
Choose the degrees $x_b,b\in B$ in the sub-graph induced by $A\cup B$. Having fixed the degree sequence,
swap to the configuration model \cite{BoCo}. Choose the configuration points associated with
the $x_b,b\in B$ in $\prod_{b\in B}\binom{d}{x_b}$ ways. 
Then multiply by the probability $D!\prod_{i=0}^{D-1}\frac{1}{dn-i}$ of a given pairing of points in $A$.

We assume that \eqref{46a} holds
for the remainder of the section. In which case we have
$$\frac{h(a)}{h(a/\b)^{\b}}\frac{f(\z_2)^{\b-1}}{f(z)^{\b-1}}=e^{o(n^{-1/8}a)}.$$
Thus,
\eqref{3ayxw} becomes
\beq{3ayx}
\p_R(k,\ell,D)\leq O\bfrac{1}{n^{1/2}}\brac{e^{o(n^{-1/8}a)}
\frac{h(\th a/d)^d}{h(a)^2h(\th/d)^{ad}}\frac{z^d}{f(z)}\frac{f(\z_1)^a}{\z_1^{\th a}}
\frac{f(\z_2)^{1-a}}{\z_2^{d-\th a}}}^n
\eeq

\ignore{{\bf {\Large Mathematica says that $F_\th(a)=\frac{h(\th a/3)^33^{(3-\th)a}}{h(a)^2h(\th-2)^a}$ works for $d=3$,
provided $2.08\leq\th\leq 3$.\\
Note that
$\frac{z^d}{f(z)}\frac{f(\z_1)^a}{\z_1^{\th a}}\frac{f(\z_2)^{1-a}}{\z_2^{d-\th a}}\leq 1$.\\
For a fixed $\th$ the function $F_\th(a)$ is log-concave for $0\leq a\leq a_\th=\frac{3(\th-2)}{\th}$ and log-convex for
$a\geq a_\th$. The function $F\th(a_\th)<1$ for $\th\geq 2.08$.
}}}

It follows from Lemma \ref{use1} that we can upper bound
\begin{align*}
\frac{z^d}{f(z)}\bfrac{f(\z_1)}{\z_1^\th}^a\bfrac{f(\z_2)}{\z_2^{\frac{d-a\th}{1-a}}}^{1-a}
&= \frac{d^d}{g(d)}\frac{g(\th)^a g\bfrac{d-a\th}{1-a}^{1-a}}{\th^{a\th}\bfrac{d-a\th}{1-a}^{d-a\th}}\\
&\le \frac{g(a\th+(1-a)\frac{d-a\th}{1-a})}{g(d)} \frac{d^d}{\th^{a\th}\bfrac{d-a\th}{1-a}^{d-a\th}}\\
&= \frac{a^{a\th}(1-a)^{d-a\th}}{\brac{\bfrac{a\th}{d}^{\frac{a\th}{d}}(1-\frac{a\th}{d})^{1-\frac{a\th}{d}}}^d}\\
&= \frac{a^{a\th}(1-a)^{d-a\th}}{h\bfrac{a\th}{d}^d}
\end{align*}

Plugging this into \eqref{3ayx} gives

\beq{4x}
\p_R(k,\ell,D)\leq O\bfrac{1}{n^{1/2}}\bfrac{e^{o(n^{-1/8}a)}a^{a\th}(1-a)^{d-a\th}}{h(a)^2h\bfrac{\th}{d}^{ad}}^n
\eeq
Now let $R_1(\th)=\log\bfrac{a^{a\th}(1-a)^{d-a\th}}{h(a)^2h\bfrac{\th}{d}^{ad}}$. Then
\begin{align*}
&R_1'(\th)=a\log a - a\log (1-a)-a\log\th+a\log(d-\th).\\
&R_1''(\th)=-\frac{ad}{\th(d-\th)}<0.
\end{align*}

Thus $R_1(\th)$ is concave and is maximized when $\th=ad$. Because $\th\geq 2$ we can only use this for 
$a\geq 2/d$. 

{\bf Case 2.1:} $k\geq 2n/d$.

\begin{align}
\p_R(k,\ell,D)&\leq O\bfrac{1}{n^{1/2}}\bfrac{e^{o(n^{-1/8}a)}a^{a^2d}(1-a)^{d-a^2d}}{h(a)^{2+ad}}^n\nonumber\\
&= O\bfrac{1}{n^{1/2}}\brac{e^{o(n^{-1/8}a)}a^{a^2d-a(2+ad)}(1-a)^{d-a^2d-(1-a)(2+ad)}}^n\nonumber\\
&= O\bfrac{1}{n^{1/2}}\brac{e^{o(n^{-1/8}a)}a^{-2a}(1-a)^{(d-2)(1-a)}}^n\nonumber\\
&= O\bfrac{1}{n^{1/2}}e^{o(n^{7/8}a)}\r_d(1-a)^n\label{1-a}
\end{align}
where the function $\r_d$ is defined following \eqref{42}.

We find that 
\beq{d=5}
\r_d(1-2/d)=\brac{\frac{d^4}{16}\brac{1-\frac{2}{d}}^{(d-2)^2}}^{1/d}\leq .9\text{ for }d\geq 5.
\eeq
Now $\r_d(1-2/d)<9/10$ for $d\geq 5$ and $\r_4(2.01/4)<.997$. So, with the aid of \eqref{43},
\begin{align}
&B_{\ref{B0}}=\sum_{\ell<k=2n/d}^{n-n^{7/8}}
\sum_{D=2k}^{k\log n}\p_R(k,\ell,D)\leq \sum_{\ell<k=2n/d}^{n-n^{7/8}}O\bfrac{1}{n^{1/2}}e^{o(n^{7/8}a)}
\sum_{D=2k}^{k\log n}e^{-n^{7/8}}
\qquad for\ d\geq 5.\label{B0}\\
&B_{\ref{B00}}=\sum_{\ell<k=2.01n/4}^{n-n^{7/8}}
\sum_{D=2k}^{dk}\p_R(k,\ell,D)\leq \sum_{\ell<k=2.01n/4}^{n-n^{7/8}}O\bfrac{1}{n^{1/2}}e^{o(n^{7/8}a)}
\sum_{D=2k}^{dk}e^{-n^{7/8}}
\qquad for\ d=4.\label{B00}
\end{align}
We will treat $d=3$ and $k\geq 2n/3$ under Case 2.2.

{\bf Case 2.2:} $2\leq k\leq 2n/d$.

In this case the expression in \eqref{4x} (ignoring error terms) is maximized at $\th=2$. Then
\begin{align*}
\p_R(k,D)&\leq O\bfrac{1}{n^{1/2}}\bfrac{e^{o(n^{-1/8}a)}a^{2a}(1-a)^{d-2a}}{h(a)^2h\bfrac{2}{d}^{ad}}^n\\
&= O\bfrac{1}{n^{1/2}}\bfrac{e^{o(n^{-1/8}a)}(1-a)^{d-2a-2(1-a)}}{h\bfrac{2}{d}^{ad}}^n\\
&= O\bfrac{1}{n^{1/2}}\bfrac{e^{o(n^{-1/8}a)}(1-a)^{d-2}}{h\bfrac{2}{d}^{ad}}^n
\end{align*}
Let $R_2(a)=\log\bfrac{(1-a)^{d-2}}{h\bfrac{2}{d}^{ad}}$. Then
\begin{align*}
&R_2'(a)=-\frac{d-2}{1-a}-d\log h(2/d)<0\qquad for\ d\geq 6.\\
&R_2''(a)=-\frac{d-2}{(1-a)^2}<0.
\end{align*}

Thus $R_2(a)$ is strict concave and its maximum is taken at $a=0$ and $R_2(a)\leq R_2(0)a$ for all $a\in[0,\frac{2}{d}]$.
Furthermore, $R_2(0)<-3/10$ for $d\geq 6$. It follows that if $d\geq 6$ then
\begin{eqnarray}
B_{\ref{B1}}&=&\sum_{\ell<k=2}^{n^{1/10}}\sum_{D=2k}^{dk}
\p_R(k,\ell,D)\leq \sum_{\ell<k=2}^{n^{1/10}}\sum_{D=2k}^{dk}O\bfrac{1}{n^{1/2}}e^{o(n^{-1/8}k)}=o(1).\label{B1}\\
B_{\ref{B2}}&=&\sum_{\ell<k=n^{1/10}}^{2n/d}\sum_{D=2k}^{dk}
\p_R(k,\ell,D)\leq \sum_{\ell<k=n^{1/10}}^{2n/d}\sum_{D=2k}^{dk}O\bfrac{1}{n^{1/2}}e^{-(3/10+o(n^{-1/8})k}=o(1).
\label{B2}
\end{eqnarray}
For $d=3,4,5$ we use a better bound on $[u^D]((1+u)^d-1-du)^k$ in \eqref{colbert}. 

{\bf Case 2.2a: $d=5$.}
\begin{align*}
[u^D]((1+u)^5-1-5u)^k &= [u^D](10u^2 + 10u^2+5u^4+u^5)^k)\\
&= [u^{D-2k}](10+10u+5u^2+u^3)^k\\
&= 10^k[u^{D-2k}]\brac{1+u+\frac{u^2}{2}+\frac{u^3}{10}}^k\\
&\le 10^k[u^{D-2k}]\brac{1+\frac{u}{2}}^{3k}\\
&= 10^k \frac{\binom{3k}{D-2k}}{2^{D-2k}}
\end{align*}

Replacing the $\frac{1}{h\bfrac{\th}{d}^{ad}}$ factor in \eqref{4x} which comes from $\binom{dk}{D}$ gives, for d=5,
\begin{align}
\p_R(k,D)&\leq O\bfrac{k}{n^{1/2}}\bfrac{e^{o(n^{-1/8}a)}
a^{a\th}(1-a)^{5-a\th}}{h(a)^2}^n\bfrac{10}{2^{\th-2} h\bfrac{\th-2}{3}^3}^k\nonumber\\
&=O\bfrac{k}{n^{1/2}}\bfrac{e^{o(n^{-1/8}a)}10^a a^{a(\th-2)}(1-a)^{5-2-a(\th-2)}}{\brac{2^{\th-2}h\bfrac{\th-2}{3}^3}^a}^n\nonumber\\
&=O\bfrac{k}{n^{1/2}}\brac{e^{o(n^{-1/8}a)}10^a(1-a)^{3}
\bfrac{\bfrac{a}{1-a}^{\frac{\th-2}{3}}}{2^{\frac{\th-2}{3}}h\bfrac{\th-2}{3}}^{3a}}^n\label{6b}
\end{align}

Let $p(x) = \frac{q^x}{h(x)}$ for any $x\in[0,1]$, note that if $P(x)=\log p(x)$ then
\begin{align*}
&P'(x) = \log q - \log x +\log(1-x)\\
&P''(x)=-\frac{1}{x}-\frac{1}{1-x}<0
\end{align*}
and so $p(x)$ is maximized when $\log q = \log\bfrac{x}{1-x}$ or $x = \frac{q}{1+q}$ and the maximum value is $1+q$

Thus from \eqref{6b} we get 
\begin{align*}
\p_R(k,\ell,D)\leq O\bfrac{k}{n^{1/2}}\brac{e^{o(n^{-1/8}a)}10^a(1-a)^{3}\brac{1+\frac{a}{2(1-a)}}^{3a}}^n
\end{align*}
Let $R_3(a)=\log\brac{10^a(1-a)^{3}\brac{1+\frac{a}{2(1-a)}}^{3a}}$. Then
\begin{align*}
&R_3'(a)=\log 10-\frac{6}{2-a}+3\log\bfrac{2-a}{2-2a}\\
&R_3''(a)=\frac{3a}{(2-a)^2(1-a)}>0.
\end{align*}

So $R_3(a)$ is log-convex on $[0,\frac{2}{5}]$. We have $R_3(0)=0$ and $R_3'(0)=\log 10 -3\leq -3/4$
and $R_3(2/5)<-1/4$. It follows that
\begin{multline}\label{B3}
B_{\ref{B3}}=\sum_{\ell<k=2}^{2n/5}\sum_{D=2k}^{5k}
\p_R(k,\ell,D)\leq \\
\sum_{\ell<k=2}^{n^{1/10}}\sum_{D=2k}^{5k}O\bfrac{1}{n^{1/2}}e^{o(n^{-1/8}a)}+
\sum_{\ell<k=n^{1/10}}^{2n/5}\sum_{D=2k}^{5k}O\bfrac{1}{n^{1/2}}e^{-(3/4+o(n^{-7/8}))k}=o(1).
\end{multline}
{\bf Case 2.2b: $d=4$.}
\begin{align*}
[u^D]((1+u)^4-1-4u)^k &= [u^{D-2k}](6+4u+u^2)^k\\
&= 6^k[u^{D-2k}]\brac{1+\frac{4}{6}u+\frac{u^2}{6}}^k\\
&\le 6^k[u^{D-2k}]\brac{1+\frac{u}{2}}^k\\
&= 6^k\frac{\binom{2k}{D-2k}}{2^{D-2k}}.
\end{align*}

\begin{align*}
\p_R(k,\ell,D)&\leq O\bfrac{1}{n^{1/2}}\bfrac{e^{o(n^{-1/8}a)}a^{a\th}(1-a)^{4-a\th}}{h(a)^2}^n\bfrac{6}{2^{\th-2}h\bfrac{\th-2}{2}^2}^k\\
&=O\bfrac{1}{n^{1/2}}\brac{e^{o(n^{-1/8}a)}6^aa^{a(\th-2)}(1-a)^{2-a(\th-2)}\bfrac{1}{2^{\frac{\th-2}{2}}h\bfrac{\th-2}{2}}^{2a}}^n\\
&= O\bfrac{1}{n^{1/2}}\brac{e^{o(n^{-1/8}a)}6^a(1-a)^2\bfrac{\bfrac{a}{1-a}^{\frac{\th-2}{2}}}{2^{\frac{\th-2}{2}}h\bfrac{\th-2}{2}}^{2a}}^n\\
&\le O\bfrac{1}{n^{1/2}}\brac{e^{o(n^{-1/8}a)}6^a(1-a)^2\brac{1+\frac{a}{2(1-a)}}^{2a}}^n.
\end{align*}
Now if $R_4(a)=\log\brac{6^a(1-a)^2\brac{1+\frac{a}{2(1-a)}}^{2a}}$ then
\begin{align*}
&R_4'(a)=\log 6-\frac{4}{2-a}+2\log\bfrac{2-a}{2-2a}\\
&R_4''(a)=\frac{2a}{(2-a)^2(1-a)}>0.
\end{align*}
Thus $R_4$ is log-convex on $[0,\frac{1}{2}]$. We have $R_4(0)=1$ and $R_4'(0)=\log 6 -2\leq -1/5$
and $R_4(2.01/4)<-1/20$. It follows from this and \eqref{43} that
\begin{multline}\label{B4}
B_{\ref{B4}}=\sum_{\ell<k=2}^{2.01n/4}\sum_{D=2k}^{k\log n}
\p_R(k,\ell,D)\leq\\
\sum_{\ell<k=2}^{n^{1/10}}\sum_{D=2k}^{4k}O\bfrac{1}{n^{1/2}}e^{o(n^{-1/8}a)}
+\sum_{\ell<k=n^{1/10}}^{2.01n/4}\sum_{D=2k}^{4k}O\bfrac{1}{n^{1/2}}e^{-(1/5+o(n^{1/8}))k}=o(1).
\end{multline}
{\bf Case 2.2c: $d=3$.}
\begin{align*}
[u^D]((1+u)^3-1-3u)^k &= [u^{D-2k}](3+u)^k\\
&= 3^{3k-D}\binom{k}{D-2k}.
\end{align*}

\begin{align*}
\p_R(k,\ell,D)&\leq O\bfrac{1}{n^{1/2}}\bfrac{e^{o(n^{-1/8}a)}a^{a\th}(1-a)^{3-a\th}}{h(a)^2}^n\bfrac{3^{3-\th}}{h\brac{\th-2}}^k\\
&=O\bfrac{1}{n^{1/2}}\brac{e^{o(n^{-1/8}a)}a^{a(\th-2)}(1-a)^{1-a(\th-2)}\bfrac{3^{3-\th}}{h(\th-2)}^a}^n\\
&= O\bfrac{1}{n^{1/2}}\brac{e^{o(n^{-1/8}a)}3^a(1-a)\bfrac{\bfrac{a}{3(1-a)}^{\th-2}}{h(\th-2)}^a}^n\\
&\le O\bfrac{1}{n^{1/2}}\brac{e^{o(n^{-1/8}a)}3^a(1-a)\brac{1+\frac{a}{3(1-a)}}^{a}}^n.
\end{align*}
Now if $R_5(a)=\log\brac{3^a(1-a)\brac{1+\frac{a}{3(1-a)}}^{a}}$ then
\begin{align*}
&R_5'(a)=-\frac{3}{3-2a}+\log\brac{\frac{3-2a}{1-a}}\\
&R_5''(a)=\frac{4a-3}{(3-2a)^2(1-a)}.
\end{align*}
{\bf Case 2.2c(i):} $.51\leq a\leq 1$.\\
Thus $R_5$ is log-concave on $[\frac{1}{2},\frac{3}{4}]$ and log-convex on $[\frac{3}{4},1]$. 
We have $R_5(1/2)=0$, $R_5'(1/2)\leq -1/10$
and $R_5(3/4)\leq -.04$ and $R_5(1)=0$ and $R_5'(1)=\infty$. It follows that
\begin{multline}\label{B4x}
B_{\ref{B4x}}=\sum_{\ell<k=.51n}^{n-n^{7/8}}\sum_{D=2k}^{3k}\p_R(k,\ell,D)\leq\\
\sum_{\ell<k=.51n}^{3n/4}\sum_{D=2k}^{3k}O\bfrac{1}{n^{1/2}}e^{-(k-n/2)/10+o(n^{-1/8}k)}
+\sum_{\ell<k=3n/4}^{n-n^{7/8}}\sum_{D=2k}^{3k}e^{-n^{7/8}}=o(1).
\end{multline}
Now let us consider $0\leq k\leq .51n$. 

{\bf Case 2.2c(ii):} $0\leq a\leq .51$.\\
(a) $\th\geq 2.0005$ and $0\leq a\leq $.\\
We go back to \eqref{3ayx} and make the choice $\z_1=\z_2=z$ and replace
$h(\th/d)^{-ad}$ by $\bfrac{3^{3-\th}}{h\brac{\th-2}}^a$ and 
consider the function 
$$F_1(\th,a)=\frac{h(\th a/3)^33^{(3-\th)a}}{h(a)^2h(\th-2)^a}$$
so that $\p_R(k,\ell,D)\leq O\bfrac{1}{n^{1/2}}F_1(\th,a)^n$. Let $G_1(\th,a)=\log(F_1(\th,a))$. Then
\begin{align}
&\frac{\partial G_1}{\partial a}=\log\bfrac{27(1-a)^2(\th-2)^{\th-2}(a\th)^\th}
{3^\th a^2(3-\th)^{3-\th}(3-a \th)^\th}\label{da}\\
&\frac{\partial G_1}{\partial\th}=a\brac{\log\bfrac{3-\th}{9}-\log(\th-2)+\log(a\th)-\log\brac{1-\frac{a\th}{3}}}\label{mmm}\\
&\frac{\partial^2G_1}{\partial a^2}=\frac{(3-a)\th-6}{a(1-a)(3-a\th)}\label{daa}\\
&\frac{\partial^2G_1}{\partial\th^2}=
\frac{((3-a)\th^2-12\th+18)a}{\th(3-a\th)(\th-3)(\th-2)}.\label{partial}
\end{align}
It follows from \eqref{daa} that 
\beq{isconvex}
G_1(\th,a)\text{ is a convex function of }a\text{ for }0\leq a\leq a_\th=\frac{3\th-6}{\th},\,\text{for $\th$ fixed,\,$2\leq \th\leq 3$}
\eeq
and 
\beq{isconcave}
G_1(\th,a)\text{ is a concave function of }a\text{ for $a_\th\leq a\leq 1$},\,\text{for $\th$ fixed,\,$2\leq \th\leq 3$}. 
\eeq
It follows from \eqref{partial} that
\beq{isisconcave}
G_1(\th,a)\text{ is a concave function of $\th$ on $[2,3]$ for $a$ fixed, $0\leq a\leq 1$}. 
\eeq
A calculation shows that if $g_1(\th)=G_1(\th,a_\th)$ then
\begin{align}
&g_1'(\th)=\frac{3}{\th^2}\log\bfrac{144}{3^{\th^2}(3-\th)^2}\label{loong}\\
&g_1''(\th)=-\frac{6}{\th^3(3-\th)}\brac{-\th+2(3-\th)\log\bfrac{12}{3-\th}}.\label{long}
\end{align}
Furthermore, if $g_2(\th)=\frac{\partial G_1}{\partial a}\mid_{a=a_\th}$ then
\beq{short}
g_2(\th)=\log\bfrac{12}{3^\th(3-\th)}.
\eeq
(a) $2.0005\leq \th\leq 3$ and $0\leq a\leq e^{-10000}$.\\
For $a\leq e^{-10000}$ we have $\frac{\partial G_1}{\partial a}\leq \log 10-(\th-2)\log1/a\leq -2$. So,
\beq{smalla}
F_1(\th,a)\leq e^{-2a}\text{ for }0\leq a\leq e^{-10000},\,2.0005\leq \th\leq 3.
\eeq

(b) $2.46\leq \th\leq 3$ and $e^{-10000}\leq a\leq .51$.\\
Now $a_\th>.51$ for $\th\geq 2.46$ and so \eqref{isconvex} implies that $G_1(\th,a)\leq\max\set{G_1(\th,e^{-10000}),G_1(\th,.51)}$
for $2.46\leq \th\leq 3$ and $e^{-10000}\leq a\leq .51$.
Now \eqref{smalla} implies
that $G_1(2.46,e^{-10000})<-2e^{-10000}$ and \eqref{mmm} implies that $\frac{\partial G_1}{\partial\th}\mid_{\th=2.46,a=e^{-10000}}<0$ 
and so \eqref{isisconcave} implies that $G_1(\th,e^{-10000})\leq -2e^{-10000}$ for $2.46\leq \th\leq 3$. Also,
by direct calculation, we have
$G_1(2.46,.51)<-.002$ and $\frac{\partial G_1}{\partial\th}\mid_{\th=2.46,a=.51}<0$ and so
$G_1(\th,.51)\leq -.002$ for $2.46\leq \th\leq 3$. Thus,
$$F_1(\th,a)\leq e^{-2e^{-10000}}\text{ for }e^{-10000}\leq a\leq .51\text{ and }2.46\leq \th\leq 3.$$

(c) $2.0005\leq \th\leq 2.25$ and $e^{-1000}\leq a\leq .51$.\\
We take $\z_1=.6$ and $\z_2=2.1$ in \eqref{3ayx} and let
$$F_2(\th,a)=F_1(\th,a)\frac{z^3}{f(z)}\frac{f(\z_1)^a}{\z_1^{\th a}}
\frac{f(\z_2)^{1-a}}{\z_2^{3-\th a}}=F_1(\th,a)e^{\r_2+\s_2 a+\t_2a\th}$$
where 
$$e^{\r_2}=\frac{z^3f(\z_2)}{f(z)\z_2^3},\,e^{\s_2}=\frac{f(\z_1)}{f(\z_2)},\,e^{\t_2}=\frac{\z_2}{\z_1}.$$

Let $G_2(\th,a)=\log(F_2(\th,a))$. $\frac{\partial^2G_2}{\partial a^2}=\frac{\partial^2G_1}{\partial a^2}$
and $\frac{\partial^2G_2}{\partial \th^2}=\frac{\partial^2G_1}{\partial \th^2}$
and so \eqref{isconvex},\eqref{isconcave} and \eqref{isisconcave} hold with $G_1$ replaced by $G_2$. 
Putting $\g_2(\th)=G_2(\th,a_\th)$ we see that $\g_2''(\th)=g_1''(\th)-\frac{12\s_2}{\th^3}>0$,
using \eqref{long} ($\s_2<-3.127$).
Thus $\g_2$ is convex on $2.0005\leq \th\leq
2.25$. Furthermore $\g_2(2.0005),\g_2(2.25)<-.00003$ and so $\g_2(\th)<-.00003$ for $\th\in [2.0005,2.25]$
and therefore $G_2(\th,a)\leq -.00003a/a_\th<-.00003a$ when $0\leq a\leq a_\th$ and $\th\in [2.0005,2.25]$.
Next let $\f_2(\th)=\frac{\partial G_2}{\partial a}\mid_{a=a_\th}$. We have $\f_2(\th)=g_2(\th)+\s_2+\t_2\th<-.05$ 
for $2.0005\leq \th\leq 2.25$, using \eqref{short} ($\t_2<1.253$). So,
$G_2(\th,a)\leq \f_2(\th)-.05(a-a_\th)$ for $a\geq a_\th$ when $\th\in [2.0005,2.25]$. Thus 
$$F_2(\th,a)<e^{-.00003a}\text{
for $e^{-1000}\leq a\leq .51$ and }2.0005\leq \th\leq 2.25.$$

Now suppose that we repeat the idea of the previous paragraph, but this time we take $\z_1=1.4$ and $\z_2=3$ in \eqref{3ayx}
and use the same notation. Putting $\g_2(\th)=G_2(\th,a_\th)$ we see that $\g_2''(\th)=g_1''(\th)-\frac{12\s_2}{\th^3}>0$,
using \eqref{long} ($\s_2<-2.27$).
Thus $\g_2$ is convex on $2.25\leq \th\leq
2.46$. Furthermore $\g_2(2.25),\g_2(2.46)<-.05$ and so $\g_2(\th)<-.05$ for $\th\in [2.25,2.46]$
and therefore $G_2(\th,a)\leq -.05a/a_\th<-.05a$ when $0\leq a\leq a_\th$ and $\th\in [2.25,2.46]$.
Next let $\f_2(\th)=\frac{\partial G_2}{\partial a}\mid_{a=a_\th}$. We have $\f_2(\th)=g_2(\th)+\s_2+\t_2\th<-.2$ 
for $2.25\leq \th\leq 2.46$, using \eqref{short} ($\t_2<.763$). So,
$G_2(\th,a)\leq \f_2(\th)-.2(a-a_\th)$ for $a\geq a_\th$ when $\th\in [2.25,2.46]$. Thus 
$$F_2(\th,a)<e^{-.05a}\text{
for $e^{-1000}\leq a\leq .51$ and }2.25\leq \th\leq 2.46.$$

(d) $2\leq \th\leq 2.0005$ and $e^{-1000}\leq a\leq .51$.\\ 
For this we simplify our estimate of $\p_R(k,\ell,D)$ by removing some terms involving $\b$ from \eqref{3ayxw}.
\begin{align}
&\p_R(k,\ell,D)\leq\Pr(\exists A\subseteq R,B\subseteq L:\;|A|=k,|B|=k-1,N_\G(A)\subseteq B,d_B(b)\geq 2,b\in B)\leq \nonumber\\
&O(n^{1/2})\binom{m}{k}\binom{n}{k-1}\sum_{\substack{2\leq  d_a,a\in[m]\\2\leq x_b\leq d,b\in[k-1]\\
\sum_{a\in [k]}d_a=\sum_{b\in [k-1]}x_b=D}}\prod_{a=1}^k\frac{z^{d_a}}{d_a!f(z)}
\prod_{b=1}^{k-1}\binom{d}{x_b}\ D!\prod_{i=0}^{D-1}\frac{1}{dn-i}=\nonumber\\
&O\bfrac{k}{m^{1/2}}\binom{n}{k}\binom{m}{k}\frac{z^{D}D!}{f(z)^k}\frac{(dn-D)!}{(dn)!}
\brac{\sum_{\substack{2\leq  d_a,a\in[k]\\\sum_ad_a=D}}\prod_{a=1}^k\frac{1}{d_a!}}
\brac{\sum_{\substack{2\leq x_b\leq d,b\in[k-1]\\\sum_bx_b=D}}\prod_{b=1}^{k-1}\binom{d}{x_b}}=\nonumber\\
&O\bfrac{k}{m^{1/2}}\binom{n}{k}\binom{m}{k}\frac{z^D}{f(z)^k}\frac{1}{\binom{dn}{D}}
\brac{[u^D](e^u-1-u)^k}\brac{[u^D]((1+u)^d-(1+du))^k}\leq\nonumber\\
&O\bfrac{k}{m^{1/2}}\binom{n}{k}\binom{m}{k}\frac{z^D}{f(z)^k}\frac{1}{\binom{dn}{D}}\frac{f(\z_1)^k}{\z_1^D}
\binom{dk}{D}=\nonumber\\
&O\bfrac{k}{m^{1/2}}\binom{n}{k}\binom{m}{k}\frac{\binom{dk}{D}}{\binom{dn}{D}}
\bfrac{f(\z_1)}{f(z)}^k\bfrac{z}{\z_1}^D\nonumber\\
&=O\bfrac{1}{m^{1/2}}\brac{\frac{h(\th a/d)^d}{h(a)h(a/\b)^\b h(\th/d)^{ad}}\brac{\frac{f(\z_1)}{\z_1^\th} 
\frac{z^\th}{f(z)}}^a}^n\nonumber\\
&=O\bfrac{1}{m^{1/2}}\brac{e^{o(n^{-1/8}a)}\frac{h(\th a/d)^d}{h(a)^2 h(\th/d)^{ad}}\brac{\frac{f(\z_1)}{\z_1^\th} 
\frac{z^\th}{f(z)}}^a}^n.\label{3axxx}
\end{align}
Now let
$$F_3(\th,a)=\frac{h(\th a/3)^33^{(3-\th)a}}{h(a)^2h(\th-2)^a}
\brac{\frac{f(\z_1)}{\z_1^\th}\frac{z^\th}{f(z)}}^a.$$ 
We take $\z_1=.0001$ and then 
$$\frac{f(\z_1)}{\z_1^\th}\frac{z^\th}{f(z)}<.e^{-.86}$$ 
for $2\leq \th\leq 2.0005$. 
Keeping some slack, we define 
$$F_4(\th,a)=\frac{h(\th a/3)^33^{(3-\th)a}e^{-.85a}}{h(a)^2h(\th-2)^a}$$
and $G_4(\th,a)=\log(F(\th,a))$.
Now let $\g_4(\th)=G_4(\th,a_\th)$. We have $\g_4'(\th)=g_1'(\th)-\frac{5.1}{\th^2}$ 
and $\g_4''(\th)=g_1''(\th)+\frac{10.2}{\th^3}$ and we find from
\eqref{long} that $\g_4$ is concave on $2\leq \th\leq 2.0005$. Furthermore $\g_4(2)=0$ and 
using \eqref{loong} we see that $\g_4'(2)<-.8$ and so $g_1(\th)<-.8(\th-2)$ for $\th\in [2,2.0005]$.
So $G_4(\th,a)\leq -.8a(\th-2)/a_\th\leq -.8a(\th-2)$ for $0\leq a\leq a_\th$.
Next let $\f_4(\th)=\frac{\partial G_4}{\partial a}
\mid_{a=a_\th}=g_2(\th)-.85$. We see from \eqref{short} that $g_2(\th)<-.5$ for $2\leq \th\leq 2.0005$ and thus 
$G_4(\th,a)\leq -.5(a-a_\th)$ for $a\geq a_\th$ when $\th\in [2,2.0005]$. 
Replacing $e^{-.85}$ by $e^{-.86}$ in the definition of $F_4(\th,a)$ we get $F_4(\th,a)<e^{-(4(\th-2)a/5+a/100)}$
for $0\leq a\leq .51$ when
$2\leq \th\leq 2.0005$.
So, for some small constant $c>0$,
\beq{B4y}
B_{\ref{B4y}}=\sum_{\ell<k=2}^{.51n}\sum_{D=2k}^{3k}\p_R(k,\ell,D)\leq
\sum_{\ell<k=2}^{.51n}\sum_{D=2k}^{3k}e^{-ck}=o(1).
\eeq
\subsubsection{Finishing the case $m\sim n$}\label{combine}
We repeat our observation that the maximum degree $\D$ in $\G$ is $o(\log n)$ \whp. Therefore

{\bf Case 1:} $m\geq n$.
$$\Pr(\m(\G)<n)\leq o(1)+\begin{cases}
 A_{\ref{A0}}+A_{\ref{A1}}+A_{\ref{A2}}+A_{\ref{A3}}+A_{\ref{A4}}+B_{\ref{B0}}+B_{\ref{B1}}+B_{\ref{B2}}
                          &d\geq 6\\
A_{\ref{A0}}+A_{\ref{A1}}+A_{\ref{A2}}+A_{\ref{A3}}+A_{\ref{A4'}}+B_{\ref{B0}}+B_{\ref{B3}}&d=5\\
A_{\ref{A0}}+A_{\ref{A1}}+A_{\ref{A2}}+A_{\ref{A3}}+A_{\ref{A4'}}+B_{\ref{B0}}+B_{\ref{B4}}&d=4\\
A_{\ref{A0}}+A_{\ref{A1}}+A_{\ref{A2}}+A_{\ref{A33}}+A_{\ref{A333}}+A_{\ref{A3333}}+A_{\ref{bbx}}+A_{\ref{A4'}}+B_{\ref{B4x}}+B_{\ref{B4y}}&d=3
                         \end{cases}
$$
where the $o(1)$ term accounts for $\Pr(\D(\G)>\log n)$.
We use $B_{\ref{B0}}+B_{\ref{B1}}+B_{\ref{B2}}$ to account for witnesses $A\subseteq L,B$ with $|A|\geq n-n^{7/8}$. 
This is because if $A'=R\setminus B$ and $B'=L\setminus A$ then $|A'|=m-k+1$ and $|B'|=n-k$ and $N_\G(A')\subseteq B'$
and there will be a minimal witness $A'',B''$ with $A''\subseteq A'$.

{\bf Case 2:} $m\leq n$.
$$\Pr(\m(\G)<m)\leq o(1)+\begin{cases}
B_{\ref{B0}}+B_{\ref{B1}}+B_{\ref{B2}}+A_{\ref{A1}}+A_{\ref{A2}}&d\geq 6\\
B_{\ref{B0}}+B_{\ref{B3}}+A_{\ref{A1}}+A_{\ref{A2}}&d=5\\
B_{\ref{B00}}+B_{\ref{B4}}+A_{\ref{A1}}+A_{\ref{A2}}&d=4\\
B_{\ref{B4x}}+B_{\ref{B4y}}+A_{\ref{A1}}+A_{\ref{A2}}&d=3
                         \end{cases}
$$
We point out for use in the next section that our computations allow us 
to claim that we have
\beq{next}
\sum_{\substack{k=n^\c\\ \ell\leq\min\set{k-1,m/2}}}^{n-n^\c}\sum_{D=dk}^{k\log n}\p_L(k,\ell,D)=O(e^{-\Omega(n^\c)}).
\eeq

Our computations also allow us to claim that
\beq{next1}
\sum_{\substack{k=n^\c\\ \ell\leq\min\set{k-1,n/2}}}^{n-n^\c}\sum_{D=2k}^{dk}\p_R(k,\ell,D)=O(e^{-\Omega(n^\c)}).
\eeq

\subsection{The case $m\geq n+n^{4/5}$}\label{gre}
Let $\cG(n,m)$ denote the set of bipartite graphs with $|L|=n,|R|=m$ that are $d$-regular on $L$ and 
degree at least 2 on $L$. Here $n+n^{4/5}\leq m\leq dn/2$. In fact suppose first that $m\leq \xi dn$ where $\xi<1/2$ is a constant.
Suppose that $G(n,m)$ is chosen uniformly at random from $\cG(n,m)$.

If there is no matching from $L$ to $R$, then let a minimal witness $A,B$ be {\em small} if $|A|\leq n^{3/4}$ 
and {\em large} if $|A|\geq n-n^{3/4}$ and {\em medium} otherwise. 
\subsubsection{Small/Large Witnesses}\label{smallw}
We go back to \eqref{1aa}. We see that $f(\z_1)<f(z)$ implies that the term 
$\frac{z^d}{f(z)^{\frac{\b-a}{1-a}}}\frac{f(\z_1)^{\frac{\b-a}{1-a}}}{\z_1^{d-x}}
\bfrac{e\frac{a}{1-a}}{x}^x$ is maximised over $\b\geq 1$ when $\b=1$. Next let $H(\b)=\b\log h(a/\b)$ then $H'(\b)=\log(1-a/\b)$ and 
$H''(\b)=\frac{a}{\b(\b-a)}$. Thus $h(a/\b)^\b$ is log-convex in $\b$ and so 
\beq{hab}
h(a/\b)^\b\geq \exp\set{H(1)+H'(1)(\b-1)}=h(a)(1-a)^{\b-1.}
\eeq
Going back to \eqref{42} we see that now we have
\beq{71}
\p_L(k,\ell,D)\leq O\bfrac{1}{n^{1/2}}\bfrac{\r_d(a)}{(1-a)^{\b-1}}^n.
\eeq
By taking $\e_L(\b)$ in place of $\e_L(1)$ we can take $K=\b$ in \eqref{43} and plugging this into \eqref{71}
we see that
\beq{noo5}
\sum_{\ell<k=2}^{n^\c}\sum_{D=dk}^{k\log n}\p_L(k,\ell,D)\leq O\bfrac{1}{n^{1/2}}\bfrac{e^{-\b a}}{(1-a)^{\b-1}}^n=o(1).
\eeq
To deal with $k\geq n-n^{\c}$ we treat this as $k\leq n^{\c}$ in Section \ref{AinR}. Indeed, if there is such 
a witness $A,B$, let $A'=R\setminus B$ and $B'=L\setminus A$. Then $N_\G(A')\subseteq B'$ and $|B'|<|A'|$ and so we can 
find a witness $A'',B''$ with $A''\subseteq A',B''\subseteq B'$ and $|B''|\leq n^\c$.

We use \eqref{3axxx} for this calculation. Now
\begin{align}
&\frac{h(\th a/d)^d}{h(a)h(a/\b)^\b h(\th/d)^{a d}}=
\frac{\bfrac{\th a}{d}^{\th a}\brac{1-\frac{\th a}{d}}^{d-\th a}}{a^a(1-a)^{1-a}\bfrac{a}{\b}^a\brac{1-\frac{a}{\b}}^{\b-a}
\bfrac{\th}{d}^{\th a}\brac{1-\frac{\th}{d}}^{da-\th a}}=\nonumber\\
&a^{(\th-2) a}\exp\set{-(d-\th a)\sum_{k=1}^\infty \frac{\th^ka^k}{kd^k}+a-\sum_{k=2}^\infty \frac{a^k}{k(k-1)}+
\frac{a}{\b}-\sum_{k=2}^\infty \frac{a^k}{\b^{k-1}k(k-1)}+
(da-\th a)\sum_{k=1}^\infty \frac{\th^k}{kd^k}}\nonumber\\
&=a^{(\th-2) a}\exp\set{a\brac{1+\frac{1}{\b}-(d-\th)\log(1-\th/d)-\th}+O(a^2)}\label{11}
\end{align}
So from \eqref{3axxx} we can write
\beq{noo1}
\p_R(k,\ell,D)\leq O\bfrac{1}{m^{1/2}}\brac{\bfrac{az}{\z_1}^{\th-2}\exp\set{1+\frac{1}{\b}-(d-\th)\log(1-\th/d)-\th+O(a)}
\frac{f(\z_1)}{\z_1^2} \frac{z^2}{f(z)}}^{a n}.
\eeq
Now we claim that 
\beq{zeta1}
\z_1^{\th-2}\geq \frac12\text{ and that }f(x)x^{-2}\text{ is monotone increasing in }x.
\eeq

First notice that $f(x)x^{-2} = \sum_{i=2}^\infty\frac{x^{i-2}}{i!}$
which is clearly monotone increasing. Second note that 
$\z_1 =\z(\th)$ and since $\frac{d\z(x)}{dx}>0$ we have
\begin{align*}
\lim_{x\to\infty}\frac{d\z(x)}{dx} &= \lim_{x\to\infty} \frac{f(\z(x))^2}{(e^{\z(x)}-1)^2-\z(x)^2e^{\z(x)}}\\
&= \lim_{\z\to\infty}\frac{f(\z)^2}{(e^\z-1)^2-\z^2e^\z} = 1
\end{align*}
and since $\z(x)$ is concave 
we have $\frac{d\z(x)}{dx}\ge 1$. This, along with $\lim_{x\to 2^{-}}\z(x) = 0$, implies that $\z(x)\ge x-2$. We can then lower bound
$$
\z_1^{\th-2} = \z(\th)^{\th-2}\ge (\th-2)^{\th-2} \ge e^{-e^{-1}}\ge 0.69
$$

Using this we see from
\eqref{noo1} that if
$$\th\geq \th_0=2+\frac{4}{\log (1/az) }$$
then
$$\p_R(k,\ell,D)\leq O\bfrac{1}{m^{1/2}}e^{-k}.$$
In which case we have
\beq{ps1}
\sum_{\ell<k=2}^{n^{\c}}\sum_{\th\geq \th_0}\p_R(k,\ell,D)\leq O\bfrac{1}{m^{1/2}}e^{-k}=o(1).
\eeq
When $\th<\th_0$ we have $\th=2+o(1),f(\z_1)/\z_1^2=1/2+o(1)$. Therefore
\beq{bor}
\p_R(k,\ell,D)\leq O\bfrac{1}{m^{1/2}}\bfrac{z^2e^{-(d-2)\log(1-2/d)+o(1)}}{2f(z)}^k.
\eeq
Now for $d\geq 4$ we have
\beq{noo2}
\frac{z^2e^{-(d-2)\log(1-2/d)+o(1)}}{2f(z)}\leq \frac{9}{10}
\eeq
and so
\beq{noo3}
\sum_{k=1}^{n^{\c}}\sum_{\th\leq \th_0}\p_R(k,\ell,D)\leq O\bfrac{1}{m^{1/2}}\bfrac{9}{10}^k=o(1).
\eeq
When $d=3$, the expression on the LHS of \eqref{noo2} is at most 1.26. So in this case we go back to 
\eqref{3axxx} and replace $\frac{1}{h(\th/d)^{ad}}$ by $\bfrac{3^{\th-2}}{h\brac{\th-2}}^a=e^{o(a)}$. After this
\eqref{11} is replaced by
$$a^{(\th-2) a}\exp\set{a\brac{1+\frac{1}{\b}+\th\log(\th/d)-\th}+o(a)}.$$
And then \eqref{bor} is replaced by
$$\p_R(k,\ell,D)\leq O\bfrac{1}{m^{1/2}}\bfrac{z^2e^{-2\log(3/2)+o(1)}}{2f(z)}^k\leq O\bfrac{1}{m^{1/2}}\frac{1}{2^k}$$
and so
\beq{noo4}
\sum_{k=1}^{n^{\c}}\sum_{\th\leq \th_0}\p_R(k,\ell,D)\leq O\bfrac{1}{m^{1/2}}\frac{1}{2^k}=o(1).
\eeq
\subsubsection{Medium Witnesses}
Let $d_i(n,m)$ denote the number of $R$-vertices of degree $i\geq 2$ in $G(n,m)$ and let $D_i(n,m)=\E(d_i(n,m))$.

We define three events: 
\begin{align}
&\cA_1(n,m-1)=\set{G\in \cG(n,m-1):\;\exists i:|d_i(n,m-1)-D_i(n,m-1)|> 
n^{\go}/i^3, 2\leq i\leq \log^2n}
\label{bad1}\\
&\cA_2(n,m-1)=\set{G\in \cG(n,m-1):\;\exists i:d_i(n,m-1)\neq 0, 
 i> \log^2n}
\label{bad1a}\\
&\cB(n,m)=\set{G\in \cG(n,m):\;|d_2(n,m)-D_2(n,m)|> 2n^{\go}}\label{bad2}
\end{align}
We argue next that if $\cA(n,m)=\cA_1(n,m-1)\cup\cA_2(n,m)$ then
\beq{sud}
\Pr(\cA(n,m)\cup \cB(n,m))=e^{-\Omega(\log^2n)}.
\eeq

For any $t>0$ we have
$$\Pr(|d_i(n,m-1)-D_i(n,m-1)|> t)\leq O(n^{1/2})\Pr(Bin(n,q_i) > t)$$
where $q_i=\frac{z^i}{i!f(z)}$.

We will now use the following bounds (see for example \cite{AS})
\begin{eqnarray}
\Pr(|Bin(n,p)-np|\geq t)&\leq& 2e^{-t^2/n},\label{cher1}\\
\Pr(Bin(n,p)\geq \a np)&\leq &(e/\a)^{\a np}.\label{cher2}
\end{eqnarray}
If $i\leq \log^2n$ then we can use \eqref{cher1} with $t=n^{\go}/i^3$ to deal with $\cA_1(n,m)$ and also with $\cB(n,m)$.
If $i\geq \log^2n$ then $nq_i\leq e^{-\Omega(\log^2n)}$. We can therefore use \eqref{cher2} with $\a=1/nq_i$ to deal with 
$\cA_2(n,m)$. This concludes the proof of \eqref{sud}.

Now consider a set of pairs $X\subseteq \cG(n,m-1)\times \cG(n,m)$. We place $(G_1,G_2)$ into $X$ if $G_2$ is obtained from $G_1$
in the following manner: Choose a vertex $x\in R$ of degree at least four in $G_1$. 
Suppose that its neighbours are $y_i,i=1,2,\ldots,k$
in any order. 
To create $G_2$ we (i) replace $x$ by two vertices $x$ and $m$ and then (ii) 
let the neighbours of $x$ in $G_2$ be $y_1,y_2$ and let the neighbours of $m$ be $y_3,\ldots,y_k$.

For $G\in \cG^*(n,m-1)$ let
$$\p_1(G)=|\set{G_2:(G,G_2)\in X}|$$
and for $G\in \cG^*(n,m)$ let
$$\p_2(G)=|\set{G_1:(G_1,G)\in X}|.$$
We note that if
$$\S_1=\sum_{i\geq 4}\binom{i}{2}D_i(n,m-1)$$ then
\begin{itemize}
 \item $G\notin \cA(n,m-1)$ implies that $\card{\p_1(G)-\S_1}\leq O(n^{\go})$.
\item $\p_1(G)\leq \binom{m-1}{2}$ for all $G\in \cG(n,m-1)$.
\item $G\notin \cB(n,m)$ implies that $\card{\p_2(G)-D_2(n,m)}\leq n^{\go}$.
\item $\p_2(G)\leq m$ for all $G\in \cG(n,m)$.
\end{itemize}
We then note that
\ignore{
\begin{align*}
\S_1-O(n^{\go})\leq &\frac{|X|}{|\cG(n,m)|}\leq \S_1+O(n^{\go})+m^2e^{-\Omega(\log^2n)}=\S_1+O(n^{\go}).\\
D_2(n,m+1)-n^{\go}\leq 
&\frac{|X|}{|\cG(n,m+1)|}\leq D_2(n,m+1)+n^{\go}+me^{-\Omega(\log^2n)}.
\end{align*}
}
$$ (\S_1-O(n^{\go}))|\cG(n,m-1)|\leq |X|\leq (D_2(n,m)+n^{\go}+me^{-\Omega(\log^2n)})|\cG(n,m)|.$$
Now let $\cP,\cQ$ be properties such that if $(G_1,G_2)\in X$ and $G_2\in\cQ$ then $G_1\in\cP$. Let $(G_1,G_2)$
be chosen uniformly from $X$ and let $\Pr_X$ denote probabilities computed w.r.t. this choice. Then
$$\Pr_X(G_2\in\cQ)\leq \Pr_X(G_1\in\cP)\leq \frac{|\cP|(\S_1+O(n^{\go}))+m|\cA(n,m-1)|}{|X|}$$
and
$$\Pr_X(G_2\in\cQ)\geq \frac{(|\cQ|-|\cB(n,m)|)(D_2(n,m)-n^{\go})}{|X|}$$
So,
$$\frac{(|\cQ|-|\cB(n,m)|)(D_2(n,m)-n^{\go})}
{|\cG(n,m)|(D_2(n,m)+n^{\go}+me^{-\Omega(\log^2n)})}
\leq \frac{|\cP|(\S_1+O(n^{\go}))+m|\cA(n,m-1)|}{|\cG(n,m-1)|(\S_1-O(n^{\go}))}.$$
So,
$$
\frac{|\cQ|}{|\cG(n,m)|}\leq (1+O(n^{-2/5}))\frac{|\cP|}{|\cG(n,m-1)|}.
$$
So, if $\cP_j$ is a property of $\cG(n,j)$ for $j=n,n+1,\ldots,m$, 
\beq{x1}
\frac{|\cP_m|}{\cG(n,m)}\leq (1+O(n^{-2/5}))^{m-n}\frac{|\cP_n|}{\cG(n,n+n^{4/5})}.
\eeq
We use \eqref{x1} in the following way: First let $\cB_{j},\,n+n^{4/5}\leq j\leq m$ be the property that $G\in \cG(n,j)$ 
contains a minimal witness $A,B$ with $A\subseteq L,
n^\c\leq |A|\leq n/2$.
If $(G_1,G_2)\in X$ and $G_2\in\cB_{m+1}$ then $G_1\in\cB_m$. 
Indeed $A,B\cap[m]$ is a witness in $G_1$.
Applying \eqref{x1} and \eqref{next} we see that 
\whp\ $\cB_m$ fails to occur. 
Now let $\cB'_j$ be the property that $G\in \cG(n,j)$ contains a minimal
witness $A,B$ with $A\subseteq R,n^\c\leq |A|,
|B|< \min\{|A|-(j-n),n/2\}$. 
If $(G_1,G_2)\in X$ and $G_2$ has a 
witness $A,B$ with $A\subseteq L$ and $n/2< |A|\leq n-n^\c$ 
then $G_2\in\cB'_m$. Indeed $A'=R\setminus A,B'=L\setminus B$ is 
also a witness in $G_2$. Now if $G_2\in\cB'_m$ with a witness
$A',B'$ then $A'\cap[m],B'$ is a witness in $G_1$ and so contains 
a minimal witness $A'',B''$ where 
$|A''|>|B''|+m-n>n^\c$ i.e. $G_1\in \cB_{m-1}'$.
Applying \eqref{x1} and \eqref{next1} we see that 
\whp\ $\cB_m'$ fails to occur. 
This deals with medium witnesses.

It only remains to consider $m$ close to $dn/2$ i.e. where $\xi$ defined at the beginning of this section 
is close 1/2. Observe first that the number of edges incident
with vertices of degree greater than two is at most $3dn(1-2\xi)$.
If there are $d_i$ vertices of degree $i=2,\geq3$ then $d_2+d_3=m=\xi dn$ and $2d_2+3d_3\leq dn$ which implies that $d_2\geq dn(3\xi-1)$.
So the number of edges incident with vertices of degree greater than two is at most $dn-2dn(3\xi-1)$.

Now consider a witness $A,B$ where $|A|=\g n$. We must have
$\g dn\leq 2\g n+3dn(1-2\xi)$ which implies that $\g\leq \frac{3d(1-2\xi)}{d-2}$ which can be made arbitrarily small.
Now the estimate in \eqref{71} will suffice up to $k\leq \e_Ln$ and so we only need to make $\xi$ close enough to 1/2 so that
$\g<\e_L$ (which depends only on $d$ and not $\g$). 
\subsection{The case $m\leq n-n^{4/5}$}\label{le}
We once again consider medium witnesses separately from small or large witnesses.
\subsubsection{Small/Large Witnesses}\label{smallww}
We first go back to \eqref{3axxx} and deal with $\p_R(k,\ell,D)$ for $k\leq n^\c$ as we did in Section \ref{gre}. 
For $k\geq n-n^\c$ we deal with $\p_L(k,\ell,D)$ for $k\leq n^\c$. We will go back to \eqref{1aa} and write 
$$\p_L(k,\ell,D)=O\bfrac{1}{n^{1/2}}\bfrac{h(a)^{d-1}}{h(a/\b)^{\b}}^n
\brac{\frac{z^d}{f(z)}\frac{f(\z_1)}{\z_1^{d-x}}
\bfrac{e\frac{a}{1-a}}{x}^x\bfrac{f(z)}{f(\z_1)}^{\frac{1-\b}{1-a}}}^{n-k}$$
Now $x=\frac{d(m-n)+(D-dk)+1}{m-k+1}\geq 0$ implies that 
$$1-\b\leq \frac{D-dk+1}{n}=O\bfrac{k\log n}{n}\text{ and that }x=O\bfrac{k\log n}{n}.$$
Also, $\z_1=\z(d-x)$ implies that $f(\z_1)=f(z)(1-O(x))$. Therefore,
$$\bfrac{f(z)}{f(\z_1)}^{\frac{1-\b}{1-a}}=e^{O(a^2\log^2n)}.$$
Arguing as for \eqref{71} we get
$$\p_L(k,\ell,D)\leq O\bfrac{1}{n^{1/2}}\bfrac{\r(a)e^{O(a^2\log^2n)}}{(1-a)^{\b-1}}^n.$$
Taking $\r(a)\leq e^{-a}$ as in \eqref{A2} and noting that $b\leq 1$ here we get
Thus
\beq{noo7}
\sum_{k=1}^{n^\c}\sum_{D=dk}^{k\log n}\p_L(k,\ell,D)\leq \sum_{k=1}^{n^\c}\sum_{D=dk}^{k\log n}
O\bfrac{1}{n^{1/2}}\brac{e^{-a+O(a^2\log^2n)}}^n=o(1).
\eeq
\subsubsection{Medium Witnesses}
Now consider a set of pairs $Y\subseteq \cG(n,m)\times \cG(n+1,m)$. We place $(G_1,G_2)$ into $Y$ if $G_2$ is obtained from $G_1$
in the following manner:  Choose $0\leq k\leq n$. Replace edges $(\ell,y)$ by $(\ell+1,y)$ for all $\ell>k$ and all $y$.
Add vertex $k+1$ and $d$ edges $(k+1,y_j),j=1,2,\ldots,d$.

Note that if $(G_1,G_2)\in Y$ and $G_1$ has a matching of $R$ into $L$ then so does $G_2$.

For $G\in \cG(n,m)$ let now
$$\p_1(G)=|\set{G_2:(G,G_2)\in Y}|$$
and for $G\in \cG(n+1,m)$ let
$$\p_2(G)=|\set{G_1:(G_1,G)\in Y}|.$$
Let
$$\S_2=(n+1)\brac{1-\frac{z^2}{2f(z)}}^d$$ 
and for $G\in \cG(n+1,m)$ let
$$L_3(G)=\card{\set{v\in L:\;\text{all neighbours of $v$ have degree at least 3}}}.$$
Let 
$$\cC(n+1,m)=\set{G\in \cG(n+1,m):\;|L_3(G)-\S_2|\leq n^{3/5}}.$$
Let 

We note that
\begin{itemize}
\item $G\in \cG(n,m)$ implies that $\p_1(G)=(n+1)\binom{m}{d}$.
\item $G\notin \cC(n,m+1)$ implies that $\card{\p_2(G)-\S_2}\leq n^{3/5}$.
\item $\p_2(G)\leq n+1$ for all $G\in \cG(n+1,m)$.
\end{itemize}
We then note that

\begin{align*}
&\frac{|Y|}{|\cG(n,m)|}= (n+1)\binom{m}{d}.\\
\S_2-n^{3/5}\leq &\frac{|Y|}{|\cG(n+1,m)|}\leq \S_2+n^{3/5}+(n+1)e^{-\Omega(\log^2n)}.
\end{align*}
Now let $\cP,\cQ$ be properties such that if $(G_1,G_2)\in Y$ and $G_2\in\cQ$ then $G_1\in\cP$. Let $(G_1,G_2)$
be chosen uniformly from $Y$ and let $\Pr_Y$ denote probabilities computed with respect to this choice. Then
$$\Pr_Y(G_2\in\cQ)\leq \Pr_Y(G_1\in\cP)= \frac{|\cP|(n+1)\binom{m}{d}}{|Y|}$$
and
$$\Pr_Y(G_2\in\cQ)\geq \frac{(|\cQ|-|\cC(n+1,m)|)(\S_2-n^{3/5})}{|Y|}$$
Arguing as in Section \ref{gre} we see that if
$\cP_j$ is a property of $\cG(j,m)$ for $j=m,m+1,\ldots,n$, 
\beq{y1}
\frac{|\cP_m|}{\cG(n,m)}\leq (1+O(n^{-2/5}))^{n-m}\frac{|\cQ|}{\cG(m+n^{4/5},m)}.
\eeq
First let $\cB_j,m+n^{4/5}\leq j\leq n$ be 
the property that $G\in \cG(j,m)$ contains a minimal witness $A,B$ with $A\subseteq R,n^\c\leq |A|\leq m/2$.
If $(G_1,G_2)\in X$ and $G_2\in\cB_{n+1}$ then $G_1\in\cB_n$. 
Indeed $A,B\cap[n]$ is a witness in $G_1$.
Applying \eqref{y1} and \eqref{next1} we see that 
\whp\ $\cB_n$ fails to occur. 
Now let $\cB_j'$ be the property that $G\in \cG(j,m)$ contains a minimal
witness $A,B$ with $A\subseteq R,n^\c\leq |A|,|B|\leq \min\{|A|-(j-m),m/2\}$.
If $(G_1,G_2)\in X$ and $G_2$ has a 
witness $A,B$ with $A\subseteq R$ and $m/2< |A|\leq m-n^\c$ 
then $G_2\in\cB'_m$. Indeed $A'=L\setminus A,B'=R\setminus B$ is 
also a witness in $G_2$. Now if $G_2\in\cB'_m$ with a witness
$A',B'$ then $A'\cap[m],B'$ is a witness in $G_1$ and so contains 
a minimal witness $A'',B''$ where 
$|A''|>|B''|+n-m>n^\c$ i.e. $G_1\in \cB_{m-1}'$.
Applying \eqref{y1} and \eqref{next} we see that 
\whp\ $\cB_m'$ fails to occur. 
This deals with medium witnesses.

\end{document}